\newcommand{\myeq}[1]{\mathrel{\overset{\makebox[0pt]{\mbox{\normalfont\tiny\sffamily #1}}}{=}}}
\newcommand{\myleq}[1]{\mathrel{\overset{\makebox[0pt]{\mbox{\normalfont\tiny\sffamily #1}}}{\leq}}}
\newcommand{\mygeq}[1]{\mathrel{\overset{\makebox[0pt]{\mbox{\normalfont\tiny\sffamily #1}}}{\geq}}}
\newcommand{\n}{\noindent}
\theoremstyle{plain}
\newtheorem{thm}{Theorem}
\newtheorem{lemma}{Lemma}
\newtheorem{Cor}{Corollary}
\theoremstyle{definition}
\newtheorem{Def}{Definition}
\declaretheoremstyle[
  spaceabove=\topsep, spacebelow=\topsep,
  headfont=\normalfont\bfseries,
  notefont=\mdseries, notebraces={(}{)},
  bodyfont=\normalfont,
  postheadspace=1em,
  qed=\qedsymbol
]{mythmstyle}
\declaretheoremstyle[
  spaceabove=\topsep, spacebelow=\topsep,
  headfont=\normalfont\bfseries,
  notefont=\mdseries, notebraces={(}{)},
  bodyfont=\normalfont,
  postheadspace=1em,
  qed=$\diamond$
]{mythmstyle}
\declaretheorem[style=mythmstyle]{remark}
\setlist[enumerate]{leftmargin = 9pt}
\setlist[itemize]{leftmargin = 7.5pt}
\begin{document}

\allowdisplaybreaks

\sloppy
\title{Improved Approximation of Storage-Rate Tradeoff for Caching with Multiple Demands} 
\author{Avik Sengupta, \IEEEmembership{Student Member,~IEEE}, Ravi Tandon, \IEEEmembership{Member,~IEEE}
\thanks{A. Sengupta is with Wireless@VT and The Hume Center for National Security and Technology, Department of Electrical and Computer Engineering, Virginia Tech, Blacksburg, VA $24060$ USA. Email: aviksg@vt.edu. R. Tandon is with the Department of Electrical and Computer Engineering, University of Arizona, Tucson, AZ $85721$ USA. Email: tandonr@email.arizona.edu.}
\thanks{This work was presented in part at the IEEE International Symposium of Information Theory (ISIT), Hong Kong, June $2015$ and at the Information Theory and Applications Workshop (ITA), UCSD, Feb $2015$.}}
\maketitle 

\begin{abstract}
Caching at the network edge has emerged as a viable solution for alleviating the severe capacity crunch in modern content centric wireless networks by leveraging network load-balancing in the form of localized content storage and delivery. In this work, we consider a cache-aided network where the cache storage phase is assisted by a central server and users can demand multiple files at each transmission interval. To service these demands, we consider two delivery models - $(1)$ centralized content delivery where user demands at each transmission interval are serviced by the central server via multicast transmissions; and $(2)$ device-to-device (D2D) assisted distributed delivery where users multicast to each other in order to service file demands. For such cache-aided networks, we present new results on the fundamental cache storage vs. transmission rate tradeoff. Specifically, we develop a new technique for characterizing information theoretic lower bounds on the storage-rate tradeoff and show that the new lower bounds are strictly tighter than cut-set bounds from literature. Furthermore, using the new lower bounds, we establish the optimal storage-rate tradeoff to within a constant multiplicative gap. We show that, for multiple demands per user, achievable schemes based on repetition of schemes for single demands are order-optimal under both delivery models.

\end{abstract}

\section{Introduction}
The dynamics of traffic over wireless networks has undergone a paradigm shift to become increasingly content centric with high volume multimedia content (e.g., video) distribution holding precedence. Therefore, efficient utilization of network resources is imperative in such networks for improving capacity. With the proliferation of cheap storage at the network edge (e.g. at user devices and small cell base stations), caching has emerged as an important tool for facilitating efficient load balancing and maximal resource utilization for future $5$G wireless networks \cite{bastug}. Parts of popular files are pre-stored at the edge caches such that at times of high network load, the local content can be leveraged to reduce the over-the-air transmission rates. Caching and complimentary file delivery in wireless networks has been the subject of a wealth of recent research as evidenced by the results in \cite{Maddah-Ali, Maddah-Ali-decentralized, Maddah-Ali-nonuniform,Maddah-Ali-Online,aviksg-tifs,aviksg-gws,aviksg-iswcs,Molisch-onecache,ISWCS_Ji,fund_ji,diggavi_mlc,diggavi_hetnet,Motahari_multi-server_arxiv,improve_fund,gunduz_ach,piantanida_ach}. Caching generally works in two phases - $(a)$ the \textit{cache storage phase} where parts of popular content is placed in users' cache memories by a central server e.g., an LTE eNodeB in modern cellular networks and $(b)$ the \textit{file delivery phase}, where requested content is delivered by exploiting local cache storage. Cache placement happens over a much larger time-scale than the file request and delivery phase or a \textit{transmission interval}, and needs to be agnostic to user demands. The fundamental tradeoff in such cache-aided systems is between the cache storage and the delivery rate.

Recently, Maddah-Ali and Niesen \cite{Maddah-Ali, Maddah-Ali-decentralized, Maddah-Ali-nonuniform,Maddah-Ali-Online} showed that by jointly designing the storage and delivery phases, order-wise improvement in the delivery rate can be achieved for any given size of cache storage for the case when users demand only one file at every transmission interval. The proposed  schemes extract a \textit{global caching gain}, in addition to the traditional \textit{local caching gain}, by distributing common content across users' caches and subsequently designing \textit{centralized coded multicast transmissions} which leverage this shared content to reduce delivery rates. The authors used cut-set based arguments to derive an information theoretic lower bound on the optimal storage-rate tradeoff and characterized it to within a constant multiplicative factor of $12$ for worst-case user demands under uniform file popularity. An new lower bound as well as an improved characterization of the optimal storage-rate tradeoff to within a factor of $8$ was presented in our previous work in \cite{aviksg-ISIT}. The case when users demand multiple files at each transmission interval was initially studied in \cite{Ji_Ldem,Ji_Ldem_rand} for the case of worst-case as well as random user demands. The authors in \cite{Ji_Ldem} also proposed a cut-set lower bound for this setting.  

In contrast to the centralized delivery model, a distributed device-to-device (D2D) assisted delivery model was studied in \cite{fund_ji} whereby the delivery phase was relegated to the users instead of a centralized server in order to further reduce backhaul load. The main difference between the centralized content delivery studied in \cite{Maddah-Ali,Maddah-Ali-decentralized} and the D2D-assisted delivery studied in \cite{fund_ji} is the \textit{distributed nature of multicast transmissions}. In the centralized delivery model of Maddah-Ali and Niesen, the multicast can be any arbitrary function of all the files in the library. Instead, for D2D-assisted delivery, the outgoing multicast from each user \textit{can only depend} on the local cache content of that device. In \cite{fund_ji}, Ji et.al. presented new storage/delivery mechanisms for D2D-assisted delivery for the case when each user demands a single file at every transmission interval. The results in \cite{fund_ji} show that even for D2D-assisted delivery, when the devices can use inter-device coded multicast transmissions to satisfy the demands of other users, order-wise improvements in terms of delivery rate can be achieved as compared to uncoded delivery. The authors also presented a cut-set based lower bound on the storage-rate tradeoff. In our prior work in \cite{aviksg-ITA}, we improved on the  cut-set bound and showed that the achievable scheme in \cite{fund_ji} is within a constant multiplicative factor of $8$ from the optimal by leveraging the new bounds. However, the general case when each user can demand \textit{multiple files} at each transmission interval with D2D-assisted delivery has not been considered in literature. 
 
\n \textbf{Main Contributions:} The main contributions of the paper are summarized as follows.
\begin{itemize} 
\item We develop a \textit{new technique} for characterizing information theoretic lower bounds on the storage-rate trade-off for cache-aided systems under centralized and D2D-assisted content delivery for the general case when users can demand multiple files at each transmission interval. 
\item The new lower bounds are shown to be generally tighter than the cut-set bounds in \cite[Theorem 2]{Maddah-Ali} and \cite[Theorem 2]{Ji_Ldem} for all values of problem parameters. The proposed technique also yields the first known converse for the case of D2D-assisted delivery when each user demands multiple files at each transmission interval. 
\item Using the new lower bounds we show that repetitive use of the achievable schemes for single file demands per user, proposed in \cite[Theorem 1]{Maddah-Ali} and \cite[Theorem 1]{fund_ji}, is order-optimal for the case of multiple demands and can characterize the optimal storage-rate tradeoff to within a constant multiplicative factor of $11$ for centralized delivery and $10$ for D2D-assisted delivery, improving upon known results. 
\end{itemize}


\n \textbf{Notation:} For any two integers $a$, $b$ with $a\leq b$, we define $[a:b] \triangleq \{a,a+1,\ldots,b\}$. $b\in [a,c]$ denotes $a \leq b \leq c$ and $b\in (a,c]$ denotes $a<b\leq c$. $Y_{[a:b]}$ denotes the set of random variables $\left\{Y_i : i = [a:b]\right\}$ and $Y_{[a,b]}$ denotes the set $\{Y_i: i = a,b\}$. $\mathbb{N}^+$ denotes the set of positive integers; the function $(x)^+ = \max\{0,x\}$; $\left\lceil x\right\rceil$, $\left\lfloor x\right\rfloor$ are the ceil, floor functions respectively. 

\section{System Model And Preliminary Results} \label{sec:sysmodel}
In this section, we introduce the system model for file storage and delivery in cache-aided systems. We then present achievable schemes for the case of multiple file demands in each transmission interval which are based on repetitions of schemes for single file demands per user.
\vspace{-25pt}\subsection{System Model}
We consider a cache-aided network (see Fig. \ref{fig:sysmod}) with $K$ users and a library of $N$ files, $F_{[1:N]}$, where each file is of size $B$ bits, for $B\in \mathbb{N}^+$. Formally, the files $F_n$ are i.i.d. and distributed as:
\begin{align}
F_n \sim \text{Unif}\{1,2,\ldots, 2^{B}\}, ~~\forall n \in [1:N].
\end{align}
\begin{figure*}[t]
\centering 
\subfigure[]{
\includegraphics[width=2.85in,height=2.8in]{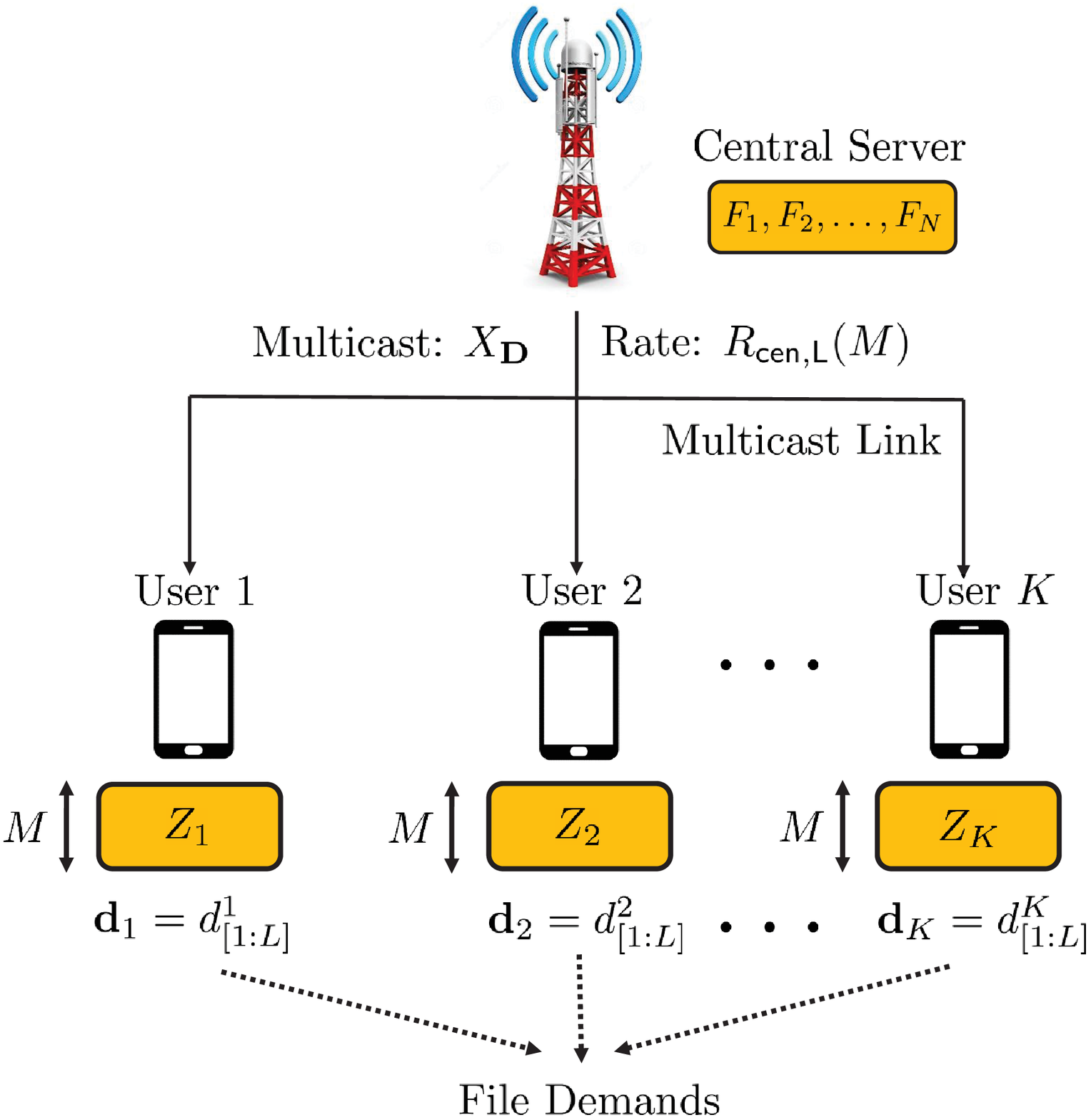}
\label{fig:fsys}
}\hspace{25pt}
\subfigure[]{
\includegraphics[width=3.1in,height=2.8in]{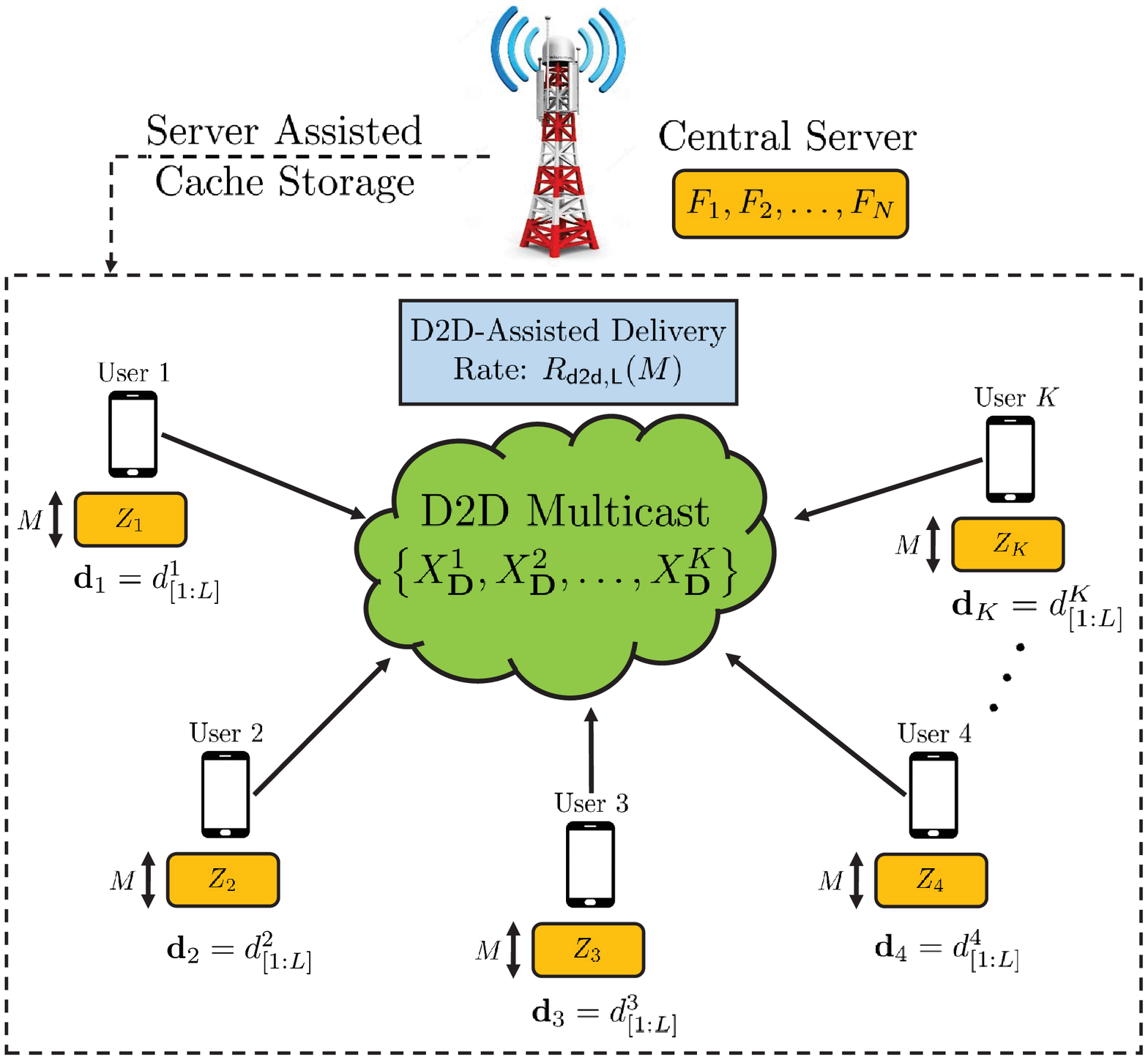}  
\label{fig:dsys}
}\vspace{-1pt}
\caption{ System Model for cache-aided network with $(a)$ centralized content delivery where the requested content is delivered via multicast transmission by the central server; and $(b)$ D2D-assisted content delivery where each device multicasts to all the other devices using the contents placed in the device cache by the central server.} \vspace{-10pt}
\label{fig:sysmod}
\vspace{-5pt}
\end{figure*}
Next, we define the key operational phases and the related performance metric for content storage and delivery in cache-aided systems. 
\begin{Def}[\textit{Cache Storage}]\label{def:store}
The cache storage phase consists of $K$ caching functions which map the files $F_{[1:N]}$ into the cache content
\begin{align}
Z_{k}\triangleq \phi_{k}\Big(F_{[1:N]}\Big),
\end{align}
for each user $k\in [1:K]$. For cache-aided systems with centralized content delivery the cache storage constraint is such that $H\left(Z_k\right)\in [0,MB]$\footnote{Here $H\left(Z_k\right)$ denotes the entropy of the content $Z_k$ stored in the cache of user $k\in[1:K]$ and represents the total size of $Z_k$ in bits i.e., the cache can store at most $M$ files of size $B$ bits each.}. For the case of D2D-assisted delivery, an additional storage constraint is that all caches should be collectively capable of storing the entire library $F_{[1:N]}$ i.e., $KM\geq N$ and $H\left(Z_k\right)\in [NB/K,MB]$\footnote{The lower bound follows from the fact that each cache needs to store at least $N/K$ files.}. The cache placement phase generally occurs over a larger time-scale encompassing multiple user demand phases or \textit{transmission intervals}. As a result, the caching functions are agnostic to user demands.
\end{Def}

\begin{Def}[\textit{File Delivery}]\label{def:del}
The file delivery phase occurs in each transmission interval in response to user demands with each user requesting $L\in[1:N]$ files. The user demands are denoted by $\mathbf{D}= \mathbf{d}_{[1:K]}$, where each users' demand vector consists of $L$ distinct files $\mathbf{d}_k = d^k_{[1:L]} \in [1:N]$ for $k\in[1:K]$. For the case of centralized delivery, the central server uses $N^{KL}$ encoding functions to map the library of files $F_{[1:N]}$ to the multicast transmission 
\begin{align}\label{eq:multicast}
X_{\mathbf{D}}\triangleq \psi_{\mathbf{D}}\big(F_{1}, \ldots, F_{N}\big), 
\end{align}
over the shared link with a rate not exceeding $RB$ bits i.e., $H\left(X_{\mathbf{D}}\right)\leq RB$. For D2D-assisted delivery, the encoding function $\psi_{\mathbf{D}}$ is composed of $K$ functions, $\psi^{k}_{\mathbf{D}}$, one for each user. The $K$ users encode the contents of their respective caches into a composite D2D multicast transmission
\begin{align}
X_{\mathbf{D}} = \Big\{\left( X^{1}_{\mathbf{D}},X^{2}_{\mathbf{D}},\ldots,X^{K}_{\mathbf{D}}\right): X^{k}_{\mathbf{D}} = \psi^{k}_{\mathbf{D}}\left(Z_k\right), ~\forall k\in [1:K]\Big\}.
\end{align}
Each multicast transmission has a rate $\sum_{k=1}^K H\left(X^k_{\mathbf{D}}\right)\leq RB$.
\end{Def}

\begin{Def}[\textit{File Decoding}]\label{def:dec}
Once the multicast transmission is received, $KN^{KL}$ decoding functions map the received signal $X_{\mathbf{D}}$ and the local cache content $Z_{k}$ to the estimates
\begin{align}
\widehat{F}_{\mathbf{d}_{k}}\triangleq \mu_{\mathbf{D}, k} \Big(X_{\mathbf{D}}, Z_{k}\Big),
\end{align}
of the $L$ requested files $F_{\mathbf{d}_{k}}$ for user $k\in [1:K]$. The probability of error is defined as 
\begin{align}
P_{e}~\triangleq ~\max_{\mathbf{D},~ k\in [1:K], ~d\in\mathbf{d}_k} \mathbb{P}\left(\hat{F}_{d}\neq F_{d}\right),
\end{align}
i.e., the worst-case probability of error evaluated over all possible demand vectors and across all users for any number of per user demands $L$. 
\end{Def}

\begin{Def}[\textit{Storage-Rate Tradeoff}]\label{def:tradeoff}
The storage-rate pair $(M,R_{\mathsf{cen,L}})$ for centralized delivery or $(M,R_{\mathsf{d2d,L}})$ for D2D-assisted delivery is \textit{achievable} if, for any $\epsilon > 0$, there exists a caching and delivery scheme, for which $P_e \leq \epsilon$, where $\epsilon$ is an arbitrarily small constant. The optimal storage-rate tradeoffs are defined as
\begin{align}
&R^*_{\mathsf{cen,L}}(M) \triangleq \inf\left\{R_{\mathsf{cen,L}} : (M,R_{\mathsf{cen,L}})~ \text{is achievable} \right\};\\
&R^*_{\mathsf{d2d,L}}(M) \triangleq \inf\left\{R_{\mathsf{d2d,L}} : (M,R_{\mathsf{d2d,L}})~ \text{is achievable} \right\}.
\end{align}
\end{Def}

\subsection{Preliminary Results}
In this section, we present existing achievability results which yield upper bounds on the optimal storage-rate tradeoff for cache-aided systems under centralized as well as D2D-assisted delivery for the case of $L(\geq 1)$ demands per user. 
\subsubsection{Centralized Delivery with Multiple Demands}
An achievable scheme for caching with centralized delivery was first proposed in \cite{Maddah-Ali} for the case of single $(L=1)$ user requests. An extension to the case when each user can make multiple ($L>1$) demands at any given transmission interval is given by the following lemma. 

\begin{lemma}\label{lem:ldem_ach}
For any $N$ files and $K$ users, with each user having cache storage of $M\in\frac{Nt}{K}$ files for any $t\in [0:K]$, an achievable content delivery rate which upper bounds the optimal rate is given by:
\begin{align}\label{eq:ldem_ach}
&R^*_{\mathsf{cen,L}}(M) \leq R_{\mathsf{cen,L}}(M) = KL\left(1 - \frac{M}{N}\right) \min \left(\frac{1}{1 + {KM}/{N}}, \frac{N}{KL} \right),
\end{align}
for the case when each user requests any $L \in [1:N]$ files at every transmission interval.
\end{lemma}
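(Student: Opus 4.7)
The plan is to establish the upper bound by exhibiting two achievability schemes and taking the better of the two rates at every value of $M$. The first scheme is a straightforward $L$-fold repetition of the Maddah-Ali--Niesen (MAN) coded caching scheme of \cite[Theorem 1]{Maddah-Ali} (originally designed for $L=1$), while the second scheme is a trivial uncoded scheme that delivers every file in the library.

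For the repetition scheme, I would decompose each user's demand vector $\mathbf{d}_k = d^k_{[1:L]}$ into $L$ single-demand sub-rounds, so that in round $\ell\in[1:L]$ the users collectively request the sub-demand vector $(d^1_\ell,\ldots,d^K_\ell)$. Using the standard MAN placement at the corner points $M = Nt/K$, $t\in[0:K]$ (each file partitioned into $\binom{K}{t}$ equal-sized subfiles, with user $k$ storing the subfiles indexed by $t$-subsets that contain $k$), one round of MAN delivery services a single-demand sub-round at rate $K(1-M/N)/(1+KM/N)$. Crucially, this placement is fixed and demand-oblivious, so it can be re-used across all $L$ rounds without violating the cache-capacity constraint of Definition \ref{def:store}; summing the per-round rates gives the aggregate rate $KL(1-M/N)/(1+KM/N)$. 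For the trivial scheme, I would use a symmetric uncoded placement in which every user stores the same $(M/N)$-fraction of every file; the server then broadcasts the remaining $(1-M/N)$-fraction of every file, delivering the entire library at rate $N(1-M/N) = N-M$, from which each user extracts its $L$ requested files. In-between values of $M$ are reached by standard memory-sharing (convex combination of two adjacent corner points).

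Taking the minimum of the two rates and factoring out $KL(1-M/N)$ yields
\begin{equation*}
\min\!\left(\frac{KL(1-M/N)}{1+KM/N},\; N(1-M/N)\right) \;=\; KL\!\left(1-\frac{M}{N}\right)\min\!\left(\frac{1}{1+KM/N},\; \frac{N}{KL}\right),
\end{equation*}
which matches the bound \eqref{eq:ldem_ach}. No new probability-of-error analysis is needed: the error guarantees are inherited component-wise from the $L=1$ MAN scheme and from the trivially error-free uncoded scheme. The only substantive observation beyond \cite[Theorem 1]{Maddah-Ali} is that, as $L$ grows, the repetition rate scales linearly in $L$ while the trivial rate is $L$-independent, so the crossover to the trivial scheme occurs exactly when $KL \ge N(1+KM/N)$; the main (minor) bookkeeping obstacle will be verifying that the single placement is simultaneously valid for all $L$ MAN rounds, which follows immediately from the fact that cache content is demand-oblivious.
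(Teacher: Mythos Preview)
Your proposal is correct and follows essentially the same approach as the paper: the first term comes from $L$-fold repetition of the Maddah-Ali--Niesen scheme of \cite[Theorem~1]{Maddah-Ali}, and the second term comes from a trivial uncoded scheme. The only cosmetic difference is that the paper phrases the second scheme as ``unicasting of $\min\{N,KL\}$ files'' (yielding $\min\{N,KL\}(1-M/N)\le N(1-M/N)$), whereas you broadcast the uncached $(1-M/N)$-fraction of all $N$ files; both justify the second term $N(1-M/N)$ in \eqref{eq:ldem_ach}.
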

\begin{proof}
 The delivery rate in \eqref{eq:ldem_ach} can be achieved by a strategy which repeats $L$ times, the coded multicast delivery scheme proposed in \cite[Theorem $1$]{Maddah-Ali}. The second term inside the $\min(\cdot)$ function is derived from the unicasting of $\min\{N,KL\}$ files.
\end{proof}
  
\subsection{D2D-assisted Delivery with Multiple Demands per Device}
For the case of D2D-assisted delivery, Ji et. al. proposed an order-optimal caching and delivery scheme in \cite{fund_ji} for case of single $(L=1)$ user demands. An extension to the case of multiple $(L>1)$ demands per user, is given by the following lemma.
\begin{lemma}\label{lem:d2d_ldem}
For any $N$ files and $K$ users, each having storage size $M\in\frac{Nt}{K}$ files for any $t\in [0:K]$ with $KM\geq N$, an achievable rate for D2D-assisted content delivery is given by
\begin{align}\label{eq:d2d_ldem_ub}
R_{\mathsf{d2d,L}}(M) \leq \min\left\{\frac{LN}{M}\left(1 - \frac{M}{N}\right),N \right\},
\end{align}
for the case when each user requests any $L \in [1:N]$ files at every transmission interval.
\end{lemma}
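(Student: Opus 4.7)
The plan is to establish the two terms inside the $\min(\cdot)$ as independently achievable rates; the final bound then follows since one can always pick the caching and delivery scheme that attains the smaller of the two.

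For the first term, $\frac{LN}{M}(1 - M/N)$, I would directly repeat the single-demand D2D scheme of \cite[Theorem 1]{fund_ji} a total of $L$ times. In the placement phase, each user stores exactly the cache content prescribed by that scheme for storage size $M$ and parameter $t = KM/N$; this placement is done once and is agnostic to demands, so it remains a valid caching function in the sense of Definition \ref{def:store}. In the delivery phase, the demand matrix $\mathbf{D}$ can be decomposed column-by-column into $L$ single-demand delivery problems: the $\ell$-th sub-problem corresponds to the vector $\bigl(d^1_\ell, d^2_\ell, \ldots, d^K_\ell\bigr)$ of the $\ell$-th file requested by each user. Applying the D2D scheme of \cite{fund_ji} to each sub-problem delivers the corresponding file to every user at rate $\frac{N}{M}(1 - M/N)$, and superposing the $L$ composite D2D transmissions yields an overall rate of $\frac{LN}{M}(1 - M/N)$, which then upper bounds $R^*_{\mathsf{d2d,L}}(M)$.

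For the second term, $N$, I would use a naive uncoded D2D scheme. Since $KM \geq N$, the collective cache memory can be arranged to hold the entire library: for instance, partition the $N$ files into $K$ disjoint groups of size at most $\lceil N/K \rceil \leq M$ and assign group $k$ to user $k$'s cache (any leftover capacity can be filled arbitrarily). In the delivery phase, each user simply broadcasts the files in its assigned group that are requested by at least one other user (possibly all of them, in the worst case). Since every file in the library is transmitted by at most one user and there are only $N$ files in total, the aggregate D2D rate $\sum_{k=1}^K H(X^k_{\mathbf{D}})/B$ is at most $N$, irrespective of $L$. Each user then decodes each of its $L$ requested files either from its own cache or from a broadcast made by another user.

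The only subtle point, and the one I would spend the most care on, is verifying that the two delivery procedures above are genuinely valid D2D schemes in the sense of Definition \ref{def:del}: namely that the transmission $X^k_{\mathbf{D}}$ from user $k$ depends only on $Z_k$ and the demand vector $\mathbf{D}$, not on the full file library. In both cases this is immediate—each of the $L$ repeated sub-transmissions in the first construction is a function of the same $Z_k$, and the uncoded transmission in the second construction uses only the files explicitly stored in $Z_k$. Taking the minimum of the two achievable rates yields \eqref{eq:d2d_ldem_ub}, as desired.
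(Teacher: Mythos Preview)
Your proposal is correct and matches the paper's approach: the first term comes from $L$ repetitions of the single-demand D2D scheme in \cite[Theorem~1]{fund_ji}, and the second from multicasting the entire library, which the caches collectively hold since $KM \geq N$. The only quibble is that your whole-file partition into groups of size $\lceil N/K\rceil$ can overflow $M$ when $M=N/K$ is non-integral; partitioning at the bit level (each user stores $NB/K \le MB$ bits of the library) fixes this immediately.
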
 
\begin{proof}
The delivery rate in \eqref{eq:d2d_ldem_ub} can be achieved by a strategy which repeats $L$ times, the coded multicast delivery scheme proposed in \cite[Theorem $1$]{fund_ji}. The second term inside the $\min(\cdot)$ function is derived from the multicasting of all $N$ files, which is possible since the storage constraint for D2D-assisted delivery ensures that $KM\geq N$.
\end{proof}

In \cite{Ji_Ldem}, Ji et. al presented a graph-coloring based index coded delivery scheme which showed that coding across files as well as demands can improve the centralized delivery rate compared to the approach in Lemma \ref{lem:ldem_ach}, while D2D-assisted delivery schemes specifically for multiple $(L>1)$ demands has not been studied in literature. In this work, we address the following question - \textit{are the repetition based schemes order-optimal, thereby foregoing the need for more complex approaches?} An answer in the affirmative is provided in Section \ref{sec:convfund}, where we leverage the proposed lower bounds to prove the order-optimality of the schemes presented above.

\section{Main Results and Discussion}\label{sec:convfund}
In this section, we present new converse bounds for centralized and D2D-assisted content delivery in cache-aided networks with multiple ($L\geq 1$) demands per user.

\subsection{Centralized Content Delivery}
We next present our first main result which gives a new lower bound on the optimal storage-rate tradeoff for cache-aided systems with centralized content delivery.

\begin{thm}\label{th:ldem}
For any  $N$ files and $K$ users, each having a cache size of $M\in[0,N]$, the optimal centralized content delivery rate $R_{\mathsf{cen,L}}^*(M)$ is lower bounded as
\begin{align}\label{eq:ldem}
&R_{\mathsf{cen,L}}^*(M) \geq \max_{\substack{s \in [1:\min\{\lceil N/L \rceil, K\}] ,\\ \ell \in \left[ 1 : \left\lceil {N/(Ls)} \right\rceil \right]}}~ \frac{1}{\ell}\left\{ N - sM - \frac{\mu (N - L\ell s )^{+}}{s+\mu} - (N - KL\ell )^{+}\right\},
\end{align}
for the case when each user demands $L$ $\in$ $[1:N]$ files at every transmission interval. The parameter $\mu =\big(\min\big(\left\lceil N/(L\ell) \right\rceil, K\big) -s\big), ~ \forall s,\ell~$. 
\end{thm}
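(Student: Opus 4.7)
The plan is to generalize the information-theoretic lower bound technique developed in our prior work \cite{aviksg-ISIT} for single demands per user to the multi-demand setting considered here. Fix $s \in [1:\min\{\lceil N/L \rceil, K\}]$ and $\ell \in [1:\lceil N/(Ls) \rceil]$, and set $\mu = \min(\lceil N/(L\ell) \rceil, K) - s$ as in the theorem statement. The idea is to carefully design $\ell$ demand vectors $D_1, \ldots, D_\ell$ over a designated subgroup of $s+\mu \le K$ users so that two decodability constraints hold simultaneously: (i) any $s$ users within this subgroup together with the $\ell$ multicast transmissions decode $\min\{N, sL\ell\}$ distinct files, and (ii) all $s+\mu$ users in the subgroup together with the $\ell$ transmissions decode $N - (N - KL\ell)^+$ distinct files. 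The construction will distribute files cyclically across the $(s+\mu)L\ell$ demand slots so that every user requests $L$ distinct files per interval, and the assignment is symmetric under permutations of the $s+\mu$ users. Feasibility follows from the choice of $\mu$ and the upper bound $\ell \le \lceil N/(Ls)\rceil$, which together ensure that the required number of distinct files never exceeds $N$.

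Next, I would apply Fano's inequality together with the decoding relations of Definition \ref{def:dec} (which become exact up to $o(B)$ as $B \to \infty$ since $P_e \to 0$) to produce two cut-set-style entropy inequalities:
\begin{align*}
sMB + \ell R B &\ge \min\{N, sL\ell\}\, B - o(B),\\
(s+\mu) M B + \ell R B &\ge \bigl(N - (N - KL\ell)^+\bigr) B - o(B).
\end{align*}
The main novelty will be to combine these two nested inequalities via a symmetrization argument that averages over all $\binom{s+\mu}{s}$ size-$s$ subsets of the designated subgroup, exploiting the symmetry of the demand construction. The effect is to charge only $sM$ of cache memory to the bound (rather than $(s+\mu)M$), while introducing a residual $\tfrac{\mu}{s+\mu}(N - L\ell s)^+$ that accounts for the additional files the extra $\mu$ caches can recover beyond what any particular size-$s$ subset already decodes. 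Rearranging yields the claimed inequality for each fixed $(s,\ell)$, and taking the maximum over the allowed ranges gives Theorem \ref{th:ldem}.

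The main obstacle will be justifying the symmetrization step with the precise coefficient $\mu/(s+\mu)$. A direct application of Han's inequality for conditional entropies only reproduces the weaker cut-set bound with $(s+\mu)M$ in place of $sM$; obtaining the tighter form requires a joint entropy inequality that leverages both the symmetric demand construction and the nested decodability across the two groups $\{1,\ldots,s\} \subset \{1,\ldots,s+\mu\}$ simultaneously. Care must also be taken in the boundary regimes---when $s+\mu$ hits $K$ or $\lceil N/(L\ell)\rceil$, when $sL\ell \ge N$, or when $KL\ell \ge N$---to ensure the $(\cdot)^+$ operations collapse correctly and the demand construction remains well-defined, so that the single inequality in Theorem \ref{th:ldem} covers all parameter settings uniformly.
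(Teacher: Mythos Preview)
Your plan has the right keywords---symmetrization over size-$s$ subsets and Han's inequality---but the structural setup is wrong and, as a result, your diagnosis of the obstacle is inverted.

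First, writing down two separate cut-set inequalities $sM+\ell R\ge \min\{N,sL\ell\}$ and $(s+\mu)M+\ell R\ge N-(N-KL\ell)^+$ and then trying to ``combine'' them cannot produce the theorem. Any nonnegative combination that attains the desired right-hand side $N-(N-KL\ell)^+-\tfrac{\mu}{s+\mu}(N-L\ell s)^+$ necessarily carries strictly more than $sM$ on the left; the algebra does not close. Once you have replaced $H(Z_S)$ by $sM$ you have discarded exactly the structure Han's inequality needs.

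Second, your single cyclic demand construction over the $s+\mu$ users creates a conditioning mismatch: with one fixed set of $\ell$ demands, each size-$s$ subset $S$ decodes a \emph{different} block $F_S$ of $sL\ell$ files, so the averaged inequality carries terms $H(Z_{[1:s+\mu]}\mid F_S)-H(Z_S\mid F_S)$ with $F_S$ varying across $S$. Han's inequality requires common conditioning, and since the caching functions $\phi_k$ are arbitrary (not symmetric in files or users), there is no symmetry argument to rescue this.

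The paper's proof avoids both problems by working inside a single entropy chain rather than two disjoint cut-sets. It starts with $\lceil N/(Ls)\rceil$ transmissions (not just $\ell$), so that $s$ caches alone together with \emph{all} transmissions recover the full library; this gives $N\le sM+\ell R^*+H(X_{[\ell+1:\lceil N/(Ls)\rceil]}\mid Z_{[1:s]},X_{[1:\ell]},F_{[1:L\ell s]})$. The residual term is then split as $\delta+\lambda$ with $\delta=H(Z_{[s+1:s+\mu]}\mid Z_{[1:s]},F_{[1:L\ell s]})$ and $\lambda\le (N-KL\ell)^+$. For the symmetrization, the paper does \emph{not} reuse one demand set: for each size-$s$ subset it chooses \emph{fresh} demand vectors so that the chosen subset decodes the \emph{same} fixed block $F_{[1:L\ell s]}$. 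All $\binom{s+\mu}{s}$ inequalities then share identical conditioning, and a direct application of Han's inequality to $H(Z_{[1:s+\mu]}\mid F_{[1:L\ell s]})$ versus the average of $H(\mathsf{Z}^i_{[s]}\mid F_{[1:L\ell s]})$ yields exactly $\delta\le\tfrac{\mu}{s+\mu}(N-L\ell s)^+$. So contrary to your stated obstacle, Han's inequality is not too weak---it is precisely the tool, but it must act on the conditional entropy of caches inside the chain, not on two pre-collapsed cut-set bounds.
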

The proof of Theorem \ref{th:ldem} is given in Appendix \ref{ap:ldem}. The expression in Theorem \ref{th:ldem} has two parameters, namely $(i)$ the parameter $s$, which is related to the number of user caches; and $(ii)$ the parameter $\ell$, which is related to multicast transmissions. Compared to the cut-set bounds presented in \cite[Theorem 2]{Ji_Ldem}, the additional parameter $\ell$ adds further flexibility to the lower bound expression and accounts for file decoding through the interaction of caches and transmissions, yielding a generally tighter lower bound for the case of centralized content delivery with multiple demands per user. The main difference between the cut-set bound and the proposed lower bound is based on the fact that the new bounds better utilize the possible correlation between caches by carefully bounding the joint and conditional entropy of subsets of cache storages by utilizing Han's inequality on subsets (see Section \ref{sec:exm_cen} for more details). The cut-set based lower bound of \cite[Theorem 2]{Ji_Ldem} is tight only for very large values of cache size $M$. As shown in the sequel, for such values of $M$, the proposed bound yields the cut-set bounds for specific choices of $s$ and $\ell$ and is generally tighter for all other values. This is illustrated in Fig. \ref{fig:ldemands} where we show that the proposed bound in strictly tighter than the cut-set bound.
\begin{figure*}[!t]
\centering 
\subfigure[]{
\includegraphics[width=3.5 in,height=2.75in]{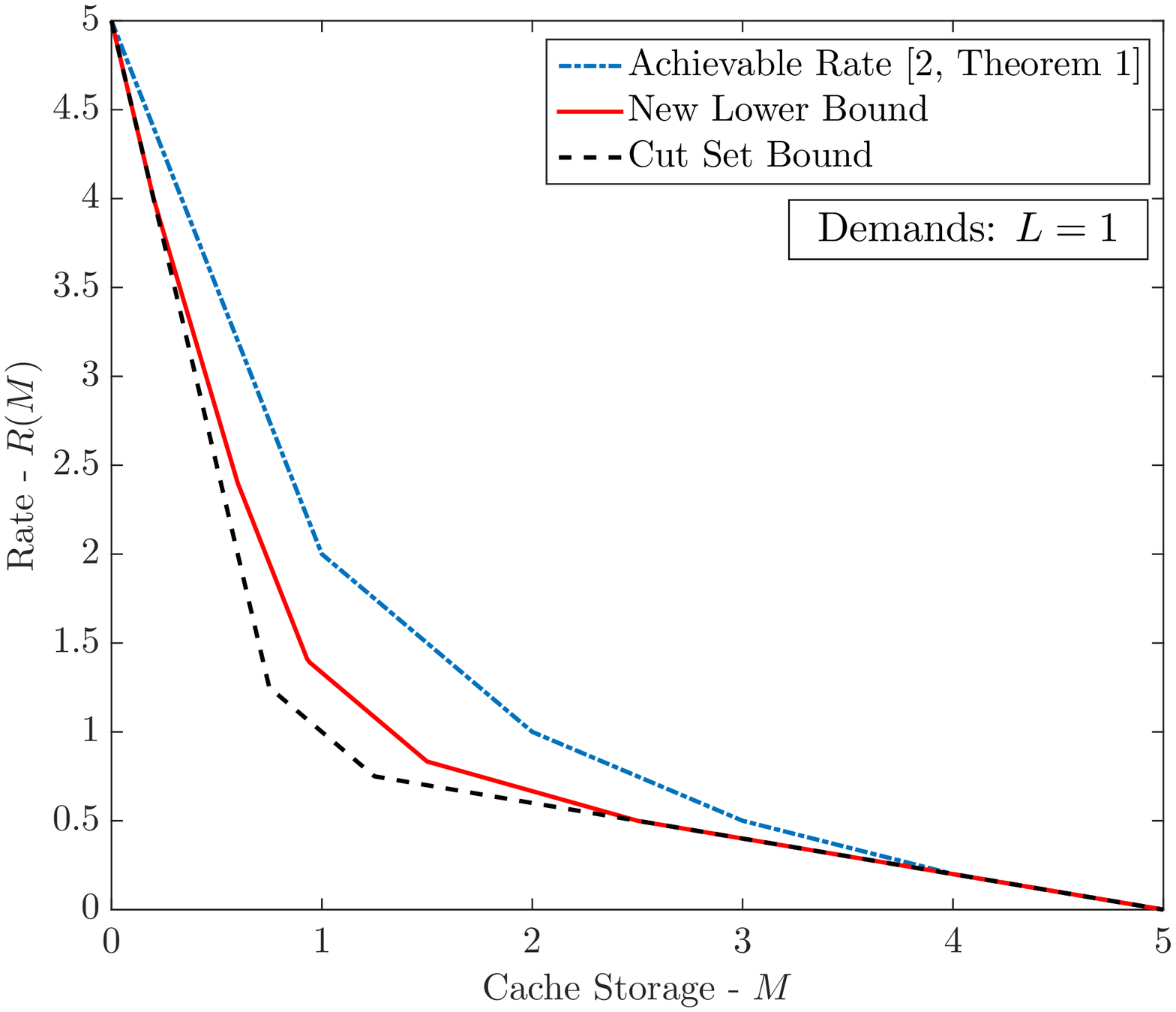}
\label{fig:nk55}
}\hspace{-25pt}
\subfigure[]{
\includegraphics[width=3.5 in,height=2.75in]{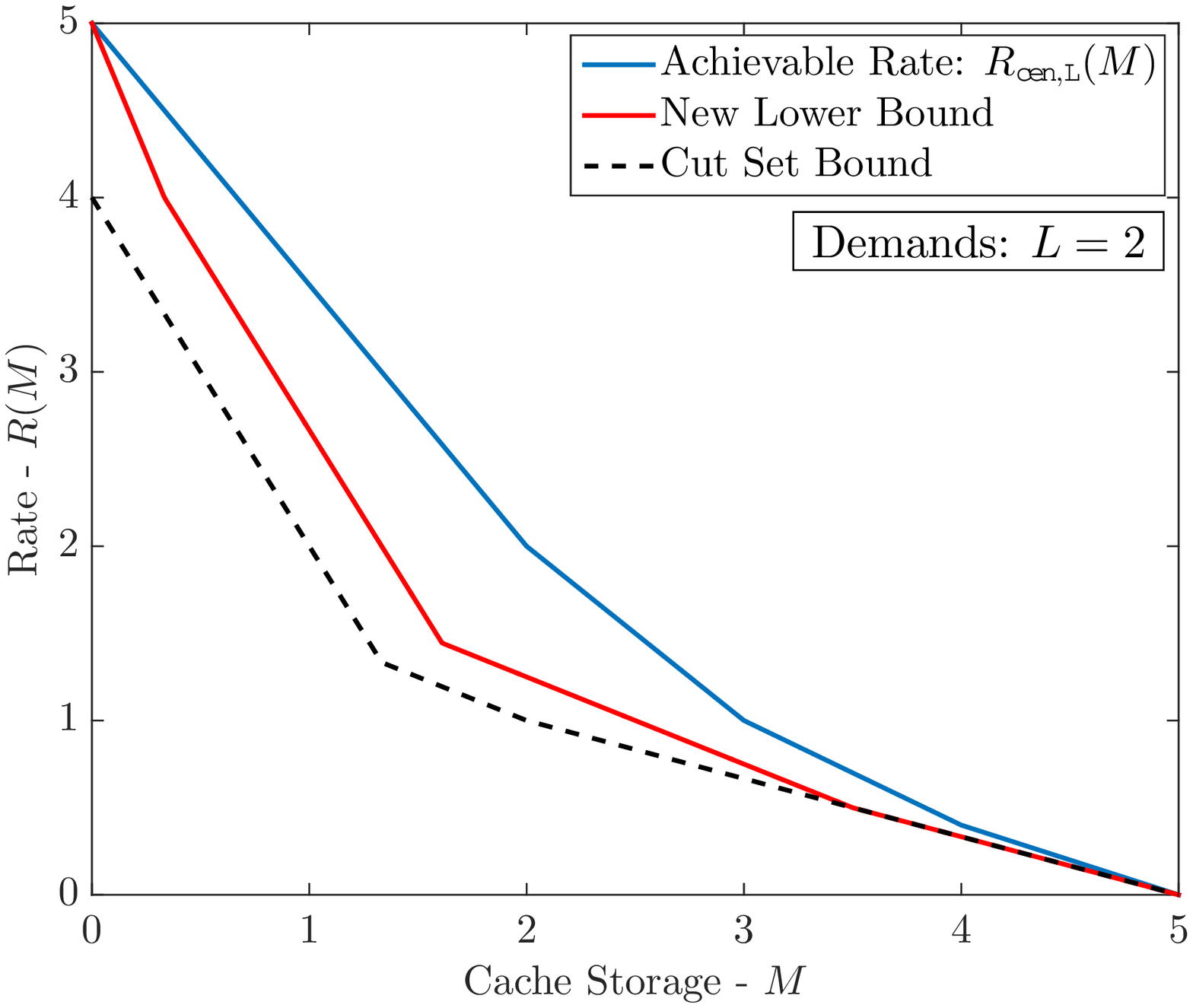}
\label{fig:ldemands}
}
\vspace{-10pt}
\caption{ Storage-rate trade-off for centralized content delivery with $N=K=5$ and ${(a)}$ $L=1$ demand per user; and ${(b)}$ $L=2$ demands per user.} \vspace{-10pt}
\end{figure*}

 We next present our second main result which shows that an improved approximation of the optimal storage-rate tradeoff can be obtained by use of the proposed lower bound. 
\begin{thm}\label{th:ldem_gap}
For any $N$ files and $K$ users, each with a cache size of $M \in [0,N]$, and each user requesting $L(\leq N)$ files at each transmission interval, we have:
\begin{align}\label{eq:ldem_gap}
&\frac{R_{\mathsf{cen,L}}(M)}{R^*_{\mathsf{cen,L}}(M)} \leq 11.
\end{align}
\end{thm}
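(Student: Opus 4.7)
The plan is to partition the storage axis $M \in [0,N]$ into a constant number of sub-intervals and, on each, choose the integer parameters $(s,\ell)$ in Theorem~\ref{th:ldem} so that the resulting lower bound matches the upper bound of Lemma~\ref{lem:ldem_ach} up to a universal constant independent of $N$, $K$, $L$. I would first split according to which branch of the $\min(\cdot)$ in Lemma~\ref{lem:ldem_ach} is active. When the unicast term $N(1-M/N)$ is active (equivalently $KL(1+KM/N)^{-1}\geq N$), the choice $\ell=1$, $s=\min(\lceil N/L\rceil,K)$ forces $\mu=0$ and also annihilates the $(N-KL\ell)^{+}$ term, reducing the lower bound to $N-sM$; a short calculation then shows $(N-M)/(N-sM)$ is $O(1)$ in the relevant range of $M$.

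When the coded-multicast term $KL(1-M/N)/(1+KM/N)$ is active, I would subdivide $M/N\in[0,1]$ by thresholds of the form $M/N\approx t/K$ for $t\in\{0,1,\ldots,K\}$, in the spirit of the $L=1$ analysis of~\cite{aviksg-ISIT}. On each such subinterval I would take $s$ of the same order as the corresponding $t$, and select $\ell=\max\{1,\lceil N/(Ls)\rceil\}$ so that $L\ell s\gtrsim N$; this makes the $(N-L\ell s)^{+}$ term either zero or $O(L\ell)$, contributing at most a constant to the final ratio. Similarly, whenever $KL\ell\geq N$, the term $(N-KL\ell)^{+}$ vanishes. The lower bound then simplifies to an expression of the form $(N-sM-\text{small correction})/\ell$, and the ratio with $R_{\mathsf{cen,L}}(M)$ becomes a routine estimate of affine functions of $M$. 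The rounding losses from $\lceil\cdot\rceil$ are absorbed into the final constant, which inflates the $L=1$ gap of $8$ established in~\cite{aviksg-ISIT} to the stated value~$11$.

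The main obstacle is that the demand multiplicity $L$ enters the lower bound through three coupled places: in $sM$, in $(N-L\ell s)^{+}$, and through $\mu=\min(\lceil N/(L\ell)\rceil,K)-s$. The sole purpose of the new parameter $\ell$ (absent from the cut-set bound of~\cite[Theorem 2]{Ji_Ldem}) is to decouple these so that the resulting constant is independent of $L$; I expect one must separately verify the boundary regimes $L\leq N/K$, $N/K<L<N$, and $L\approx N$, since the ``natural'' choice of $\ell$ transitions between $\ell=1$ and $\ell\approx N/(KL)$ across these regimes. Once the case split is fixed, the bookkeeping becomes mechanical and the uniform constant $11$ follows by taking the worst case across the finitely many sub-regions.
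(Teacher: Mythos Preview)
Your plan has a genuine gap in the unicast branch. Taking $\ell=1$ and $s=\min(\lceil N/L\rceil,K)$ does force $\mu=0$ and kill $(N-KL)^+$, giving the lower bound $N-sM$; but the ratio $(N-M)/(N-sM)$ is \emph{not} uniformly bounded on the unicast regime $M\le L-N/K$. Take $N=100$, $L=5$, $K=1000$. Then $s=\lceil N/L\rceil=20$, the unicast regime extends to $M=4.9$, and at $M=4.9$ you get $N-sM=2$ while $N-M=95.1$, a ratio close to $48$. More generally, near the boundary $M\approx L-N/K$ one has $N-sM\approx N\cdot \tfrac{N}{KL}$ while $N-M\approx N-L$, so the ratio behaves like $(KL/N)(1-L/N)$, which is unbounded once $KL\gg N$. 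The ``short calculation'' you allude to cannot close this.

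The paper's proof does not split by the active branch of the $\min(\cdot)$, and it never takes $\ell=1$. It first separates $\min\{N/L,K\}\le 10$ (where $s=1$, $\ell=\lceil N/L\rceil$ already gives gap $\le 11$ against the unicast bound) from $\min\{N/L,K\}\ge 11$, and in the latter case partitions $M$ into only three intervals with thresholds at $\Theta(\max\{L,N/K\})$ and $\Theta(N)$ (not $K$ thresholds at $t/K$). On each interval it sets $\ell=\lceil \beta N/(Ls)\rceil$ for a fixed $\beta\in(0,1]$ and chooses $s\approx c\,\min\{N/L,K\}$, $s\approx c\, N/M$, or $s=1$ respectively. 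The crucial analytic step you are missing is the clean estimate $\mu/(s+\mu)\le 1/(1+\beta)$, which converts the awkward $(N-L\ell s)^+$ term into a fixed fraction of $N$ independently of $L$; this is what lets a single choice of $s$ serve an entire interval and keeps the number of cases finite. Your proposal to decouple the three $L$-dependent pieces via $\ell$ is the right instinct, but the decoupling has to go through this $\mu/(s+\mu)$ bound together with $\ell\approx \beta N/(Ls)$, not through $\ell=1$.
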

 The proof of Theorem \ref{th:ldem_gap} is provided in Appendix \ref{ap:ldem_gap}. This result improves on the gap of $18$ between the achievable scheme and the cut-set bound in \cite[Theorem 2]{Ji_Ldem}. Furthermore, the result shows that performing $L$ repetitions of the scheme for a single demand (as in Lemma \ref{lem:ldem_ach}) is in fact order-optimal, thereby precluding the need for more complex schemes as in \cite{Ji_Ldem}.
 
\begin{Cor}\label{cor:convfund}
For any $N$ files and $K$ users, each having a cache size of $M \in [0, N]$, the optimal centralized content delivery rate $R_{\mathsf{cen}}^*(M)$ for the case when each user requests $L=1$ file at every transmission interval, is lower bounded by:
\begin{align}\label{eq:convfund}
&R^*_{\mathsf{cen}}(M) \geq \max_{\substack{s \in [1:K] , ~\ell \in \left[ 1 : \left\lceil {{N}/{s}} \right\rceil \right]}} \frac{1}{\ell}\left\{ N - sM - \frac{\mu (N - \ell s )^{+}}{s+\mu} - (N - K\ell )^{+}\right\},
\end{align}
where $\mu =\big(\min\big(\left\lceil N/\ell\right\rceil, K\big) -s\big), ~~ \forall s,\ell~$. 
\end{Cor}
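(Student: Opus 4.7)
The plan is to observe that Corollary \ref{cor:convfund} is the direct specialization of Theorem \ref{th:ldem} to the single-demand regime ($L=1$) and to carry out the substitution term-by-term, verifying that the parameter ranges and the functional form collapse to the expression in \eqref{eq:convfund}. Since Theorem \ref{th:ldem} is assumed proved, no independent information-theoretic argument is needed here; all the work is bookkeeping.

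First I would substitute $L=1$ into the index set over which the maximum in \eqref{eq:ldem} is taken. The outer range becomes $s\in\bigl[1:\min\{\lceil N/1\rceil,K\}\bigr]=[1:\min\{N,K\}]$, which under the standing worst-case convention $K\le N$ (and in any case without loss of generality, since for $s>N$ the bound is trivially non-positive and hence inactive in the $\max$) reduces to $s\in[1:K]$, matching \eqref{eq:convfund}. The inner range becomes $\ell\in\bigl[1:\lceil N/(1\cdot s)\rceil\bigr]=\bigl[1:\lceil N/s\rceil\bigr]$, again matching. The auxiliary parameter becomes $\mu=\bigl(\min(\lceil N/(1\cdot\ell)\rceil,K)-s\bigr)=\bigl(\min(\lceil N/\ell\rceil,K)-s\bigr)$, as required.

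Next I would substitute $L=1$ into the inner expression in the braces of \eqref{eq:ldem}. The first two terms $N-sM$ are independent of $L$ and carry over unchanged. The third term $\mu(N-L\ell s)^+/(s+\mu)$ becomes $\mu(N-\ell s)^+/(s+\mu)$, and the fourth term $(N-KL\ell)^+$ becomes $(N-K\ell)^+$. Finally, the outer factor $1/\ell$ is likewise unaffected by the substitution. Packaging these pieces together reproduces \eqref{eq:convfund} verbatim.

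Since the derivation is a substitution rather than an argument, there is no real obstacle: the only point that requires mild care is the collapse of the $s$-range from $[1:\min\{N,K\}]$ to $[1:K]$ as written in the corollary, which is justified by noting that the summand in \eqref{eq:ldem} for $s>N$ is dominated by the $-sM$ term (or equivalently that the cache-cut bound becomes vacuous), so inflating the range to $[1:K]$ cannot change the value of the $\max$. With this remark the corollary follows immediately from Theorem \ref{th:ldem}.
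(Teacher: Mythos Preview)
Your proposal is correct and matches the paper's own argument, which simply states that the corollary ``follows by setting $L=1$ in Theorem~\ref{th:ldem}.'' You in fact go further than the paper by explicitly checking the collapse of each term and addressing the enlargement of the $s$-range from $[1:\min\{N,K\}]$ to $[1:K]$; your justification there is slightly imprecise (for $s>N$ and $M=0$ the summand equals $N$, not something non-positive), but the conclusion that the enlargement cannot increase the maximum is correct, since for $s>N$ one has $\ell=1$, the last two terms vanish, and $N-sM\le N-NM$, which is already attained at $s=N$.
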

Corollary \ref{cor:convfund} follows by setting $L=1$ in Theorem \ref{th:ldem} and was originally presented in \cite{aviksg-isit15}. The new bounds strictly improve on the cut-set lower bounds presented in \cite[Theorem 2]{Maddah-Ali} as shown in Fig. \ref{fig:nk55}. Using \eqref{eq:convfund}, the approximation of the optimal storage-rate tradeoff can be improved as follows.

\begin{thm} \label{th:gapfund}
Let $R_{\mathsf{cen}}(M)$ be the achievable rate of the centralized caching scheme given in \cite[Theorem 1]{Maddah-Ali}. Then, for any $K$ users, $N$ files, and user cache storage in the range $M \in [0,N]$, we have:
\begin{align}
 \frac{R_{\mathsf{cen}}(M)}{R^*_{\mathsf{cen}}(M)} \leq 8.
\end{align}
\end{thm}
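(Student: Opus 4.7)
The plan is to establish the multiplicative gap by comparing the upper bound of Lemma~\ref{lem:ldem_ach} (specialized to $L=1$, which is the Maddah-Ali--Niesen scheme) against the lower bound from Corollary~\ref{cor:convfund}, and to do so piecewise by splitting the cache-size range $M \in [0,N]$ into a small number of regimes in which a specific choice of the parameters $(s,\ell)$ makes the converse essentially tight.

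First, I would rewrite the achievable rate in the clean form $R_{\mathsf{cen}}(M) = \min\big\{K(1-M/N)/(1+KM/N),\ N(1-M/N)\big\}$, so that the uncoded regime (second term dominating) and the coded regime (first term dominating) are clearly separated. I would similarly simplify the converse in Corollary~\ref{cor:convfund}: for $(s,\ell)$ satisfying $\ell s \leq N$ and $K\ell \leq N$, the lower bound reduces to $R^*_{\mathsf{cen}}(M) \geq (N - sM)/\ell - \mu(N-\ell s)/(\ell(s+\mu))$, while for $(s,\ell)$ with $K\ell \geq N$ the last $(N-K\ell)^+$ term vanishes and similarly for $\ell s \geq N$ the middle term drops out. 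These simplified forms are what I would plug in for specific $(s,\ell)$ choices.

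Next, I would partition $[0,N]$ into regimes matched to natural break-points of the achievable curve, such as (a) $M \in [0,\max(1,N/K)]$, (b) $M$ in the coded regime where $1/(1+KM/N)$ dominates and $KM/N \in [1,K/2]$, and (c) $M \in [N/2, N]$. In each regime I would propose a canonical $(s,\ell)$: for large $M$ the choice $s=\lceil N/M \rceil,\ \ell=1$ recovers a cut-set-like bound that is already within a small constant of the achievable; for small $M$ the choice $s=K,\ \ell = \lceil N/K\rceil$ handles the uncoded-delivery regime; and the intermediate regime is handled by taking $s \approx KM/N$ together with $\ell \approx N/(sK)$ so that $\ell s \approx N/K$ and the middle correction term becomes negligible. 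For each regime I would compute the ratio $R_{\mathsf{cen}}(M)/R^*_{\mathsf{cen}}(M)$ explicitly and then take the maximum; the goal is to show each regime's ratio is bounded by $8$.

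The main obstacle I anticipate is the intermediate regime, where both the achievable expression and the converse are nontrivial functions of $M$ and $K$, and where ceiling operations (in $\mu$, $\lceil N/\ell \rceil$, and the constraint $M = Nt/K$ for integer $t$) create case distinctions. Specifically, one must verify that the ratio of the coded achievable rate $K(1-M/N)/(1+KM/N)$ to the converse expression stays below $8$ uniformly in $K,N,M$, including when $s,\ell$ are forced to integer values away from their real optima; this typically requires a careful continuous relaxation followed by arguing the integer loss is at most a small constant factor. Once the intermediate regime is controlled, the endpoint regimes are straightforward since cut-set-style arguments are already known to give constant gaps there, so stitching the three regimes together will give the uniform bound of $8$, strictly improving on the factor of $12$ from \cite{Maddah-Ali}.
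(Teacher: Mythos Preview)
Your high-level strategy—partition the range of $M$ into regimes and pick a specific $(s,\ell)$ in each to compare against the achievable rate—is exactly what the paper does. However, your specific parameter choices are inverted in every regime and will not produce usable bounds.

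For large $M$ you propose $s=\lceil N/M\rceil,\ \ell=1$. With $\ell=1$ the middle correction term $\frac{\mu(N-\ell s)^+}{s+\mu}$ is essentially $N$ (since $\ell s$ is tiny and $\mu/(s+\mu)\approx 1$), so the bound becomes negative for any nontrivial $N$. The paper instead takes $(s,\ell)=(1,N)$, which kills the middle term entirely and gives $R^*_{\mathsf{cen}}\ge 1-M/N$; comparing with $R_{\mathsf{cen}}\le (N/M)(1-M/N)$ gives gap $\le N/M\le 8$ on $M\ge 0.125N$.

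For the intermediate regime you propose $s\approx KM/N$ and $\ell\approx N/(sK)$ so that $\ell s\approx N/K$. But then $(N-\ell s)^+\approx N(1-1/K)$ is nearly $N$, so the middle term is \emph{not} negligible—it dominates. The paper does the opposite: it sets $\ell=\lceil \beta N/s\rceil$ with $\beta$ close to $1$ so that $\ell s\approx N$ and $(N-\ell s)^+$ is small, and it bounds the coefficient via the clean inequality $\mu/(s+\mu)\le 1/(1+\beta)$. The correct scaling in this regime is $s\approx cN/M$ (the paper uses $c\approx 0.498$), which gives $R^*_{\mathsf{cen}}\gtrsim N/(8M)$ and hence gap $\le 8$ against $R_{\mathsf{cen}}\le N/M$.

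For small $M$ your choice $s=K$ makes the bound $R^*_{\mathsf{cen}}\ge (N-KM)/\lceil N/K\rceil$, which vanishes as $M\to N/K$, so the gap blows up at the regime boundary. The paper instead uses $s\approx 0.47\min\{N,K\}$, which keeps the bound bounded away from zero throughout $0\le M\lesssim \max\{1,N/K\}$.

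In short, the regime decomposition is right, but the workable choices take $\ell s$ close to $N$ (not $N/K$) and $s\sim N/M$ (not $KM/N$) in the coded regime; the uniform bound $\mu/(s+\mu)\le 1/(1+\beta)$ is the technical device that makes the resulting expressions tractable.
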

The proof of Theorem \ref{th:gapfund} is provided in Appendix \ref{ap:gapfund}. The result improves on the gap of $12$ yielded by the cut-set bound in \cite[Theorem 3]{Maddah-Ali}\footnote{The results presented in this paper also hold for the case of \textit{decentralized cache placement} as in \cite{Maddah-Ali-decentralized} since the converse makes no assumption on the nature of content placement.}.


\subsection{D2D-Assisted Content Delivery}
In this section, we consider the case of D2D-assisted content delivery with each user demanding multiple files in each transmission interval. The next theorem presents our main result which gives the first-known lower bound on the optimal storage-rate tradeoff.

\begin{thm}\label{th:d2d_ldem}
For any $N$ files and $K$ users, each having a cache size of $M\in[N/K,N]$, the optimal D2D-assisted content delivery rate $R_{\mathsf{d2d,L}}^*(M)$ is lower bounded as
\begin{align}\label{eq:d2d_ldem}
R^*_{\mathsf{d2d,L}}(M) \geq \max_{\substack{s \in [1:\min\{\lceil N/L \rceil, K\}] ,~\ell \in \left[ 1 : \left\lceil {\frac{N}{Ls}} \right\rceil \right]}} \left\{  \frac{N - sM - \frac{\mu}{s+\mu} (N - L\ell s )^{+}}{\ell\left(\frac{K-s}{K}\right)}\right\},
\end{align}
for the case when each user demands $L\in[1:N]$ files at each transmission interval. The parameter $\mu =\left(\min\left(\left\lceil N/(L\ell)\right\rceil, K \right) -s\right), ~ \forall s,\ell~$. 
\end{thm}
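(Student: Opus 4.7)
The plan is to adapt the proof technique developed for Theorem \ref{th:ldem} (the centralized case) to the D2D model, exploiting two structural properties peculiar to D2D: first, each per-user transmission satisfies $X^k_{\mathbf{D}} = \psi^k_{\mathbf{D}}(Z_k)$, so it is conditionally deterministic given $Z_k$; and second, the rate budget is shared across users through the constraint $\sum_{k=1}^{K} H(X^k_{\mathbf{D}}) \leq RB$. The latter is what will produce the unfamiliar denominator factor $(K-s)/K$ via a symmetric averaging argument, while the numerator will come from essentially the same entropy accounting as in the centralized case.

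First I would fix admissible $s$ and $\ell$, set $\mu = \min(\lceil N/(L\ell)\rceil,K) - s$, and construct $\ell$ distinct demand vectors $\mathbf{D}_1,\ldots,\mathbf{D}_\ell$ under which the users in $[1:s+\mu]$ collectively request as many distinct files as possible (ideally all $N$, which is feasible because $(s+\mu)L\ell \geq N$ by the choice of $\mu$). For any size-$s$ subset $\mathcal{S}\subseteq[1:K]$, apply Fano's inequality jointly across the $\ell$ demand vectors to the files requested by users in $\mathcal{S}$, writing
\begin{align}
H\!\left(F_{\mathbf{d}^1_{\mathcal{S}}},\ldots,F_{\mathbf{d}^\ell_{\mathcal{S}}}\;\middle|\; Z_{\mathcal{S}},\, X_{\mathbf{D}_1},\ldots,X_{\mathbf{D}_\ell}\right) \leq n \epsilon'.
\end{align}
The D2D hypothesis then lets me drop $X^k_{\mathbf{D}_j}$ for $k\in\mathcal{S}$ from the conditioning set, since those pieces are already determined by $Z_{\mathcal{S}}$, leaving only the transmissions indexed by $\mathcal{S}^c$ on the right-hand side. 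This is the key mechanism that will introduce the $(K-s)/K$ factor.

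Next, I would sum the resulting bound over all $\binom{K}{s}$ subsets $\mathcal{S}$ of size $s$. On the cache side, the standard subadditivity $H(Z_{\mathcal{S}})\leq \sum_{k\in\mathcal{S}}H(Z_k)$ together with a double-counting identity yields $\binom{K}{s}\cdot sMB$ as the aggregate bound. On the transmission side, the identity $\sum_{|\mathcal{S}|=s}\sum_{k\in\mathcal{S}^c}H(X^k_{\mathbf{D}_j}) = \binom{K-1}{s}\sum_k H(X^k_{\mathbf{D}_j}) \leq \binom{K-1}{s}RB$ gives a total of $\binom{K}{s}\cdot \ell\, \frac{K-s}{K}\, RB$, which is precisely the coefficient of $R$ appearing in the theorem statement. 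On the file side, I would pair the averaging with the same $\mu$-based covering argument used in the centralized proof: the term $\frac{\mu}{s+\mu}(N-L\ell s)^+$ accounts for the files that cannot be accommodated inside a single size-$s$ subset and must be conditioned away using the additional $\mu$ caches. Dividing through by $\binom{K}{s}$ and rearranging then produces \eqref{eq:d2d_ldem}.

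The main obstacle I anticipate is the bookkeeping that accompanies the demand-vector construction and the simultaneous averaging. Specifically, I need to exhibit a demand-vector schedule under which each size-$s$ subset of users covers the correct number of distinct files on average, so that the per-subset lower bound can be matched against the globally averaged cache and transmission bounds without slack. A secondary subtlety is the clean vanishing of the unicast-type term $(N - KL\ell)^+$ that appears in the centralized version: because the D2D rate is already shared across the $K$ transmitters and the D2D storage constraint enforces $KM\geq N$, no additional unicast correction is needed, but I will have to verify that the Fano error term can be absorbed uniformly in $B$ so the bound holds in the $P_e\to 0$ limit.
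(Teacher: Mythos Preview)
Your plan is sound and would yield the stated bound, but it takes a genuinely different route from the paper on one key point: how the factor $(K-s)/K$ enters. The paper does \emph{not} average over all $\binom{K}{s}$ subsets of $[1:K]$. Instead it fixes the single subset $[1:s]$ throughout, constructs $\lceil N/(Ls)\rceil$ demand vectors in which users $1,\ldots,s$ cycle through the library, and then simply asserts that by symmetry of the optimal scheme each device transmits at rate $R^{*}_{\mathsf{d2d,L}}/K$, so that the composite transmission $X_i$ from users $s{+}1,\ldots,K$ has entropy at most $\frac{K-s}{K}R^{*}_{\mathsf{d2d,L}}$. From that point on the argument is identical to the centralized proof: the $\delta$ term is handled by Han's inequality over size-$s$ subsets of $[1{:}s{+}\mu]$ only, and the $\lambda$ term vanishes because each $X^{k}_{\mathbf{D}}$ is a function of $Z_{k}$ (so $H(X_{\cdot}\mid Z_{[1:K]})=0$).

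Your averaging device, using $\sum_{|\mathcal S|=s}\sum_{k\notin\mathcal S}H(X^{k}_{\mathbf D_j})=\binom{K-1}{s}\sum_{k}H(X^{k}_{\mathbf D_j})\le\binom{K-1}{s}RB$, is cleaner on the rate side since it needs only the sum-rate constraint and no per-user symmetry claim. The cost, exactly as you anticipate, is on the file side: once $\mathcal S$ ranges over all of $[1:K]$ rather than only over $[1{:}s{+}\mu]$, a single fixed family of $\ell$ demand vectors no longer guarantees that users in $\mathcal S$ request $L\ell s$ distinct files, and the Han step (which in the paper lives entirely inside $[1{:}s{+}\mu]$) no longer aligns with your outer average. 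This is not fatal, but resolving it requires either a richer demand schedule indexed by $\mathcal S$ or restricting the outer average to $\mathcal S\subseteq[1{:}s{+}\mu]$ and recovering the rate factor some other way; the paper avoids all of this by paying the price of the per-user symmetry assumption up front. Your observation that the $(N-KL\ell)^{+}$ correction disappears is correct, and for the right reason you state at the outset (functional dependence $X^{k}_{\mathbf D}=\psi^{k}_{\mathbf D}(Z_{k})$), rather than the storage constraint $KM\ge N$ per se.
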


The proof of Theorem \ref{th:d2d_ldem} is presented in Appendix \ref{ap:d2d_ldem}. Similar to Theorem \ref{th:ldem}, the parameters $s$ and $\ell$ yield a family of lower bounds by exploiting the correlation between the caches and transmissions by use of Han's Inequality. Fig. \ref{fig:d2d_ldemands} shows the lower bound in \eqref{eq:d2d_ldem} and the upper bound $R_{\mathsf{d2d,L}}(M)$ given in \eqref{eq:d2d_ldem_ub}. Leveraging the proposed lower bound, we present our second main result in the following theorem.

\begin{thm}\label{th:d2d_ldem_gap}
For any $N$ files and $K$ users, each having a cache size of $M\in[N/K,N]$, and with each user requesting $L(\leq N)$ files at each transmission interval, we have
\begin{align}\label{eq:d2d_ldem_gap}
\frac{R_{\mathsf{d2d,L}}(M)}{R^*_{\mathsf{d2d,L}}(M)} \leq 10.
\end{align}
\end{thm}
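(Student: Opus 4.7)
The plan is to prove the gap result by a case analysis on the cache storage $M \in [N/K, N]$. In each regime I pair the achievable upper bound of Lemma \ref{lem:d2d_ldem}, namely $R_{\mathsf{d2d,L}}(M) \leq \min\{(LN/M)(1 - M/N),\, N\}$, against a judiciously chosen instance $(s,\ell)$ of the lower bound of Theorem \ref{th:d2d_ldem}. Since that lower bound is a maximum over $(s,\ell)$, any single admissible choice is itself a valid lower bound, so the task reduces to exhibiting, for each range of $M$, an $(s,\ell)$ at which the two bounds match up to a constant no larger than $10$.

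For small storage---specifically when the unicast branch $N$ dominates the $\min$ in the upper bound, which happens for $M$ near $N/K$---I would take $s = 1$ and $\ell = \lceil N/L\rceil$, chosen so that $(N - L\ell s)^{+}$ vanishes. The lower bound then collapses to $K(N - M)/[(K - 1)\lceil N/L\rceil]$, which is of order $L(1 - M/N)$, and since $M \leq 2N/K$ in this regime the resulting ratio is a small constant. For moderate and large storage, I would choose $s \approx \alpha N/(LM)$ for a suitable constant $\alpha \in (0,1)$, rounded to an integer in $[1,\lceil N/L\rceil]$, and $\ell = \lceil N/(Ls)\rceil$, again forcing the truncation term to vanish. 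The lower bound then reduces to roughly $KLs(1 - sM/N)/(K - s)$, which matches $(LN/M)(1 - M/N)$ up to a constant once $\alpha$ is tuned; note that the $L$ factors cancel cleanly, which is precisely why $L$-fold repetition of the single-demand scheme is order-optimal.

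The principal obstacle is the $(K-s)/K$ factor in the denominator of Theorem \ref{th:d2d_ldem}, which blows up as $s$ approaches $K$ while the numerator benefits from taking $s$ large. I would control this by restricting $s \leq \lfloor K/2 \rfloor$ so that $(K-s)/K \geq 1/2$, and then handling the narrow window of $M$ where the unrestricted optimum would demand $s > K/2$ by the specific choice $s = \lfloor K/2\rfloor$. Integer rounding of $s$ and $\ell$ introduces further slack, and together with the transition between the two branches of $R_{\mathsf{d2d,L}}(M)$ at the crossover point $M^{\star}$ where $(LN/M^{\star})(1 - M^{\star}/N) = N$, these two effects are the source of the constant $10$ (as opposed to the smaller constant $8$ obtained in the centralized single-demand case). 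The final step is to take the maximum of the per-regime constants and verify that it does not exceed $10$, which yields the stated gap.
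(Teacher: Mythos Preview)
Your parameter choices are inverted, and both regimes fail to give a bounded ratio.

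\textbf{Small storage.} With $s=1$ and $\ell=\lceil N/L\rceil$ the lower bound is indeed $\approx L(1-M/N)$, but the upper bound in this regime is $N$ (or $(LN/M)(1-M/N)$, which is even larger here). The ratio is therefore $\gtrsim N/L$, which is unbounded for fixed $L$ and growing $N$; the condition $M\le 2N/K$ does nothing to control it. The crossover where the unicast branch stops dominating is at $M\approx LN/(L+N)\approx L$, not at $M\approx N/K$, so this regime is much wider than you assume and your bound is off by a factor of $N/L$ throughout it. To get a lower bound of order $N$ here you must take $s$ \emph{large}---the paper takes $s=\lceil N/L\rceil$ (so that $sM/N$ is still well below $1$ while $\ell=1$), and in a transitional sub-regime $s=\lfloor 0.5\,N/M\rfloor$.

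\textbf{Moderate storage.} Your choice $s\approx\alpha N/(LM)$ has a stray $L$ in the denominator. With it, $Ls(1-sM/N)\approx (\alpha N/M)(1-\alpha/L)$, which is of order $N/M$, not $LN/M$; the claimed cancellation of $L$ does not occur and the ratio is $\Theta(L)$. The correct scaling is $s\approx\alpha N/M$ (no $L$), which forces $\ell\approx M/(\alpha L)$ and yields a lower bound $\approx \alpha(1-\alpha)\,LN/M$. This only works cleanly once $M\gtrsim L$ (otherwise $\ell<1$ after rounding), which is exactly why the paper separates the regime $M\le L$ and handles it with the large-$s$ argument above.

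In short, the roles of small and large $s$ are reversed in your plan: one needs $s\sim N/M$ (large) for small $M$ and $s=1$ only for large $M$. The paper additionally splits on whether $L\le N/2$ or $L>N/2$, because for $L$ close to $N$ the regime $M\le L$ swallows most of $[N/K,N]$ and a different two-regime argument (with $s=1$ and $\ell=\lceil N/(2L)\rceil$) suffices. Your idea of capping $s\le K/2$ to control $(K-s)/K$ is harmless but unnecessary; the paper simply uses $(K-s)/K\le 1$ everywhere.
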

The proof of Theorem \ref{th:d2d_ldem_gap} is presented in Appendix \ref{ap:d2d_ldem_gap}. The result shows that a repetition based scheme as outlined in Lemma \ref{lem:d2d_ldem} is in fact order-optimal and yields a constant factor approximation of the storage-rate trade-off for D2D-assisted content delivery with multiple demands per user.

\begin{figure*}[t]
\centering
\subfigure[]{
\includegraphics[width=3.5 in,height=2.75in]{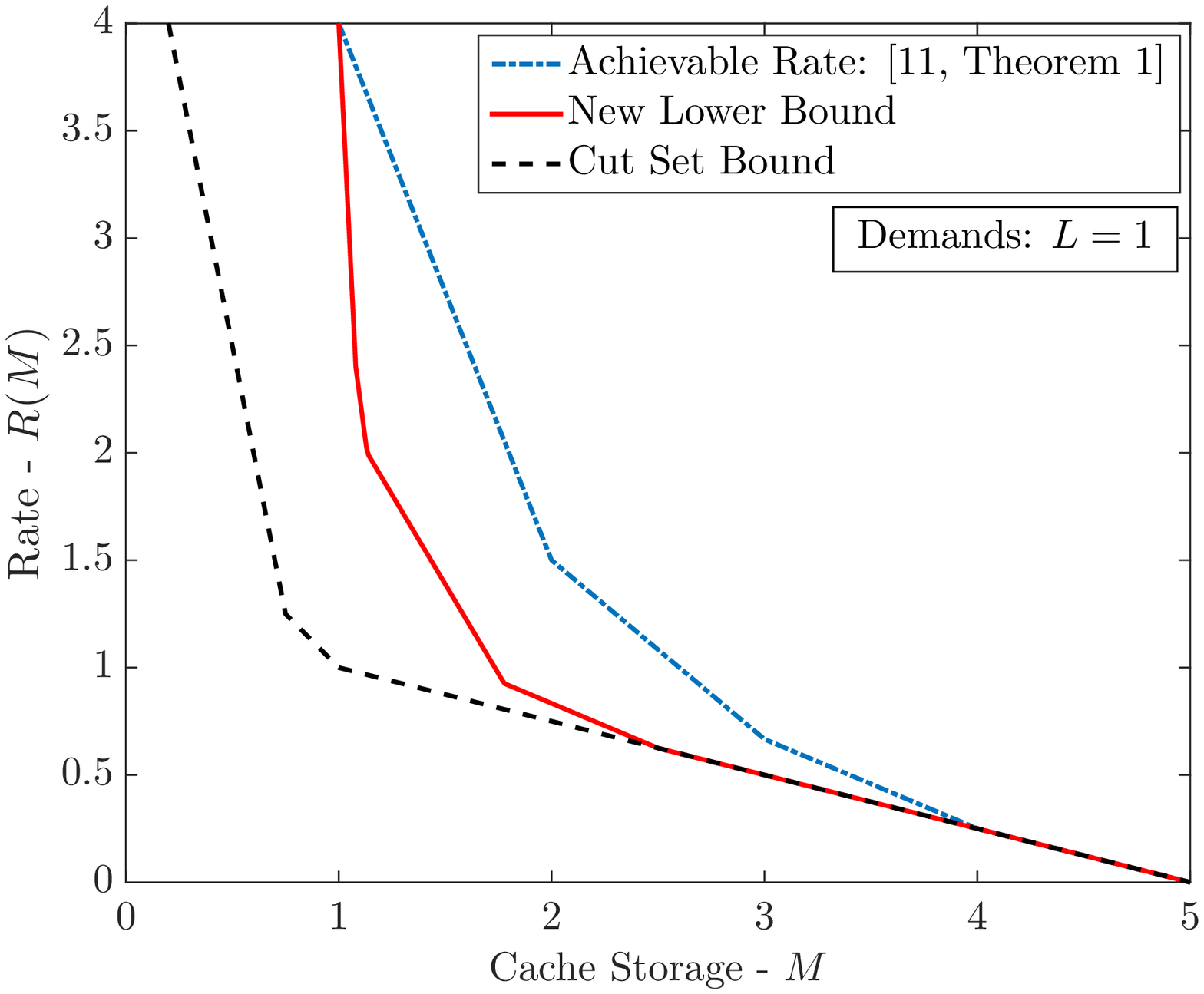}
\label{fig:nkd55}
}\hspace{-25pt}
\subfigure[]{
\includegraphics[width=3.5 in,height=2.75in]{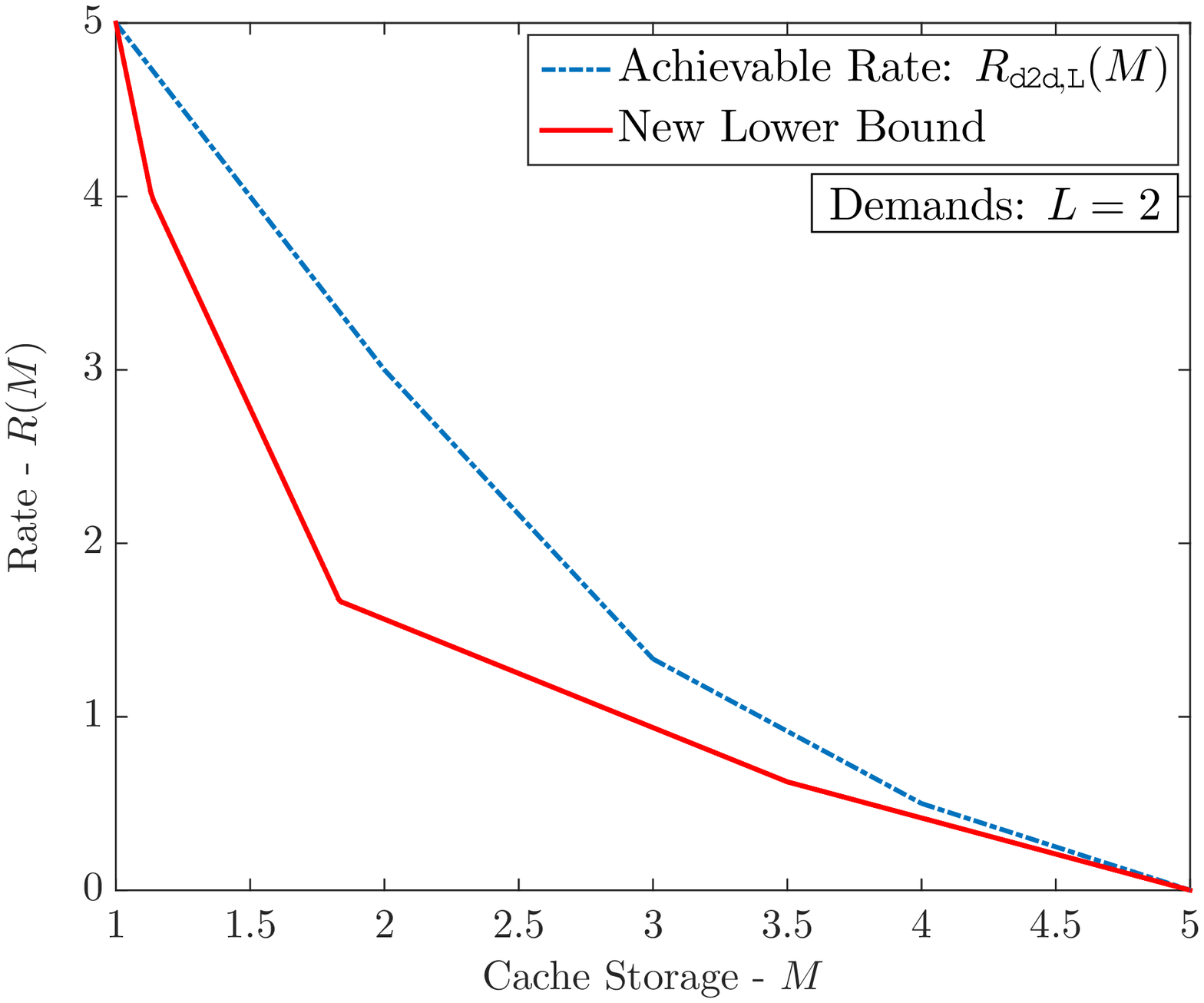}
\label{fig:d2d_ldemands}
}
\vspace{-10pt}
\caption{ Storage-rate tradeoff for D2D-assisted content delivery with $N=K=5$ and $(a)$ $L=1$ demand per user; and $(b)$ $L=2$ demands per user.} \vspace{-10pt}
\end{figure*}

\begin{Cor}\label{cor:convd2d}
For any $N$ files and $K$ users, each having a cache size of $M\in[N/K,N]$, the optimal D2D-assisted content delivery rate $R_{\mathsf{d2d}}^*(M)$, for the case when each user requests $L=1$ file at every transmission interval, is lower bounded by:
\begin{align}\label{eq:convd2d}
&R^*_{\mathsf{d2d}}(M) \geq \max_{\substack{s \in [1:K] , ~\ell \in \left[ 1:\left\lceil {{N}/{s}} \right\rceil \right]}}  \left\{\frac{N - sM - \left(\frac{\mu}{s+\mu}\right)(N - \ell s )^{+}}{\ell\left(\frac{K - s}{K}\right)}\right\}, 
\end{align}
where $\mu =\left(\min\left(\left\lceil N/\ell\right\rceil, K\right)-s\right) ~ \forall s,\ell~$. 
\end{Cor}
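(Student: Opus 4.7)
The plan is to derive Corollary \ref{cor:convd2d} as a direct specialization of Theorem \ref{th:d2d_ldem}. Since the corollary concerns exactly the single-demand case ($L=1$) of the multi-demand D2D lower bound already established in the theorem, no new information-theoretic argument is required: the whole task is to substitute $L=1$ into \eqref{eq:d2d_ldem} and verify that the resulting expression and the index ranges coincide with \eqref{eq:convd2d}.

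Concretely, I would take the bound in \eqref{eq:d2d_ldem} and set $L=1$. The numerator $N - sM - \tfrac{\mu}{s+\mu}(N-L\ell s)^{+}$ becomes $N - sM - \tfrac{\mu}{s+\mu}(N-\ell s)^{+}$; the denominator $\ell(K-s)/K$ is unaffected; and the auxiliary $\mu = \min(\lceil N/(L\ell)\rceil, K) - s$ reduces to $\mu = \min(\lceil N/\ell\rceil, K) - s$. These are precisely the quantities appearing in \eqref{eq:convd2d}.

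For the index ranges, with $L=1$ the outer index $s$ formally ranges over $\left[1:\min\{\lceil N\rceil,K\}\right] = [1:\min\{N,K\}]$, while Corollary \ref{cor:convd2d} writes $s\in[1:K]$; in the typical regime $N\geq K$ the two coincide, and when $N<K$ any additional values of $s$ either yield non-positive numerators (and thus do not alter the maximum) or can be incorporated without loss of validity because the argument of Theorem \ref{th:d2d_ldem} uses Han's inequality over a subset of at most $\min\{N,K\}$ caches. Similarly $\ell \in \left[1:\left\lceil N/(Ls)\right\rceil\right]$ collapses to $\ell\in\left[1:\left\lceil N/s\right\rceil\right]$, matching the statement.

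Since the argument is a direct substitution, there is no real obstacle; the correctness burden lies entirely in the proof of Theorem \ref{th:d2d_ldem} in Appendix \ref{ap:d2d_ldem}. The only point worth double-checking is that, unlike the centralized bound of Theorem \ref{th:ldem}, the D2D bound carries no trailing $(N - KL\ell)^{+}$ term, so nothing additional disappears or needs to be re-accounted for when specializing to $L=1$. Thus the corollary follows immediately from the theorem with no extra derivation.
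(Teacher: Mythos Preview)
Your proposal is correct and matches the paper's own justification exactly: the paper states that Corollary~\ref{cor:convd2d} ``follows by setting $L=1$ in Theorem~\ref{th:d2d_ldem},'' with no additional argument. Your substitution check and discussion of the index ranges are more detailed than what the paper provides, but the approach is identical.
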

 Corollary \ref{cor:convd2d} follows by setting $L=1$ in Theorem \ref{th:d2d_ldem} and was originally presented in \cite{aviksg-ITA}. Compared to the cut-set bound in \cite[Theorem 2]{fund_ji}, we note that the proposed bound in Corollary \ref{cor:convd2d} is always tighter owing to the additional parameter $\ell$ and the factor $(K-s)/K \leq 1$ in the denominator of \eqref{eq:convd2d}. Furthermore, the bound in \cite{fund_ji} is tight only for large  values of device storage size $M$. The new bound is tighter for smaller values of $M$ and yields the existing bound as a special case for large values of $M$. 

Using the lower bound in Corollary \ref{cor:convd2d}, the optimal storage-rate tradeoff for the case of single demands per user can be approximated as follows.

\begin{thm} \label{th:gapd2d}
For any $K \in \mathbb{N}^+$ user devices, $N\in \mathbb{N}^+$ files, and device storage in the range $M \in \left[\frac{N}{K}, N\right]$, we have:
\begin{align}
\frac{R_{\mathsf{d2d}}(M)}{R^*_{\mathsf{d2d}}(M)}~ \begin{cases}~
																	=~ 1 ~~~ &M = {N}/{K}\\
																	~\leq~ 3 ~~~ & M \in \left({N}/{K}, {2}/{3}\right]\\
																	~\leq~ 6 ~~~ &M \in \left({2}/{3},1\right]\\
																	~\leq~ 8 ~~~ &1 \leq M \leq N
																 \end{cases}.
\end{align}
\end{thm}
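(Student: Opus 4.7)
The plan is to bound the ratio $R_{\mathsf{d2d}}(M)/R^*_{\mathsf{d2d}}(M)$ by partitioning the feasible range $M\in[N/K,N]$ into the four sub-regimes stated in the theorem and, within each, choosing specific values of the parameters $(s,\ell)$ in Corollary \ref{cor:convd2d} so that the resulting lower bound matches the achievable rate (from Lemma \ref{lem:d2d_ldem} with $L=1$, namely $R_{\mathsf{d2d}}(M)\leq \min\{(N-M)/M,\,N\}$) up to the claimed constant factor. The equality case $M=N/K$ is obtained first: substituting $(s,\ell)=(K-1,1)$ into \eqref{eq:convd2d} one computes $\mu=1$, $N-sM=N/K$, $(N-\ell s)^+=N-K+1$, and denominator $1/K$, which yields $R^*_{\mathsf{d2d}}(N/K)\geq K-1$; this matches the upper bound $(N-M)/M=K-1$ exactly, proving the first case.

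For the remaining three regimes the strategy is identical in spirit: pick $s$ of the same order as $N/M$ (or $1/M$) and a small $\ell$, then compare the resulting lower bound with the achievable $(N-M)/M$ or $N$. Specifically, for $M\in(N/K,2/3]$ I will choose $\ell=1$ and $s=\lfloor 1/M\rfloor$ (or a close integer ensuring $s\leq \min\{K,\lceil N/L\rceil\}$); then $sM\leq 1$, the term $(N-\ell s)^+$ is close to $N-s$, and $\mu/(s+\mu)$ is bounded away from $1$ when $s$ is small relative to $K$. After simplification the lower bound is of order $(N/M)/3$, giving the factor $3$. For $M\in(2/3,1]$ I will use the same choice but account for the additional slack introduced by $sM$ potentially reaching $1$, and the factor $(K-s)/K$ in the denominator is only mildly less than $1$ because $s$ is small; the computation yields the looser constant $6$. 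For $M\in[1,N]$, I will follow the blueprint of the centralized gap proof (Theorem \ref{th:gapfund}) and split further into subregions based on whether $M\leq N/2$ or $M>N/2$: in the former I take $\ell$ around $\lceil N/(sM)\rceil$ and $s=\lceil 1/M\rceil$, in the latter $s=1$ with a suitably large $\ell$, and in each sub-case bound the denominator factor $\ell(K-s)/K$ from below, arriving at the factor $8$.

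The bulk of the work is routine but requires careful handling of the rounding $\lceil\cdot\rceil$, the $(\cdot)^+$ truncations, and the possibility that the trivial bound $R_{\mathsf{d2d}}(M)\leq N$ is the binding one (which happens when $M$ is close to $N/K$ and $K$ is large). In each case I will verify that the chosen $(s,\ell)$ satisfies the admissibility constraints $s\in[1:K]$ and $\ell\in[1:\lceil N/s\rceil]$ used in Corollary \ref{cor:convd2d}, and also that $\mu\geq 0$; the ratio computation then reduces to a finite case analysis with constants independent of $N$ and $K$.

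The main obstacle will be the intermediate regime $M\in(2/3,1]$, where the extra multiplicative factor $(K-s)/K$ appearing in the denominator of \eqref{eq:convd2d}, absent from the centralized bound \eqref{eq:convfund}, makes it impossible to take $s$ close to $K$ without collapsing the bound to zero, while the achievable rate $(N-M)/M\approx N-1$ is still near its maximum; this mismatch is exactly why the gap here is $6$ rather than $3$. Bridging this regime cleanly — choosing $(s,\ell)$ so that the numerator of the lower bound is at least a constant fraction of $N$ while $(K-s)/K$ stays bounded away from zero — is the crux of the argument, and also why the constant for the asymptotic region $M\geq 1$ ends up as $8$ (matching Theorem \ref{th:gapfund}) rather than something smaller.
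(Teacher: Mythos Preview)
Your overall plan---partition $[N/K,N]$ into the stated sub-ranges and, in each, plug a specific $(s,\ell)$ into Corollary~\ref{cor:convd2d}---is exactly the paper's strategy, and your treatment of $M=N/K$ with $(s,\ell)=(K-1,1)$ is essentially correct. However, the parameter choices you propose for the remaining regimes are on the wrong \emph{scale}, and the claimed lower bounds do not follow.

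Take $M\in(N/K,2/3]$: you set $s=\lfloor 1/M\rfloor$ and $\ell=1$. Then $s=O(1)$, so $(N-\ell s)^+=N-s\approx N$, and with $\mu=\min\{N,K\}-s=N-s$ the correction term equals $\tfrac{(N-s)^2}{N}$, which cancels almost all of $N$. The numerator of \eqref{eq:convd2d} reduces to $s\bigl(2-\tfrac{s}{N}-M\bigr)\approx 2/M$, not anything of order $N$; the resulting lower bound is $O(1/M)$ while the achievable rate here is $\min\{(N-M)/M,\,N\}=N$, so the ratio is unbounded in $N$. The paper instead takes $s=N$ (the maximal admissible value, valid since this regime forces $N<K$), which kills $(N-\ell s)^+$ entirely and yields $R^*_{\mathsf{d2d}}(M)\ge N(1-M)\ge N/3$. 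The same failure recurs in $(2/3,1]$: your $s=\lfloor 1/M\rfloor=1$ gives a lower bound $\approx 2-M$, not $N/6$; the paper uses $s=\lfloor 2N/(3M)\rfloor$, again of order $N$.

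For $M\in[1,N/2]$ your choice $s=\lceil 1/M\rceil=1$ with $\ell\approx N/M$ has the identical defect. The ``blueprint of Theorem~\ref{th:gapfund}'' you invoke does \emph{not} use $s\sim 1/M$: Appendix~\ref{ap:gapfund}, Regime~2 takes $s=\lfloor cN/M\rfloor$ for a constant $c\in(0,1)$, so that $sM$ is a fixed fraction of $N$ and the $\tfrac{\mu}{s+\mu}(N-\ell s)^+$ term is controlled. You actually state the right heuristic (``$s$ of the same order as $N/M$'') at the start of your sketch, but every concrete instantiation you write down replaces $N/M$ by $1/M$, which is smaller by a factor of $N$. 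Correcting the scale of $s$---i.e.\ $s\sim N$ for $M<1$ and $s\sim N/M$ for $M\ge 1$---is the missing idea; once that is in place, the case analysis proceeds as in Appendix~\ref{ap:gapd2d}.
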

The proof of Theorem \ref{th:gapd2d} is presented in Appendix \ref{ap:gapd2d}. The result highlights the fact that for the smallest allowable cache size of $M=N/K$, the lower bound in \eqref{eq:convd2d} is tight and yields the achievable rate in \cite[Theorem 1]{fund_ji}. This is also shown in Fig. \ref{fig:nkd55} for the case of $N=K=5$ and $L=1$. 

\begin{figure*}[!t]
\centering \vspace{-10pt}
\subfigure[]{
\includegraphics[width=4.25in,height=2.45in]{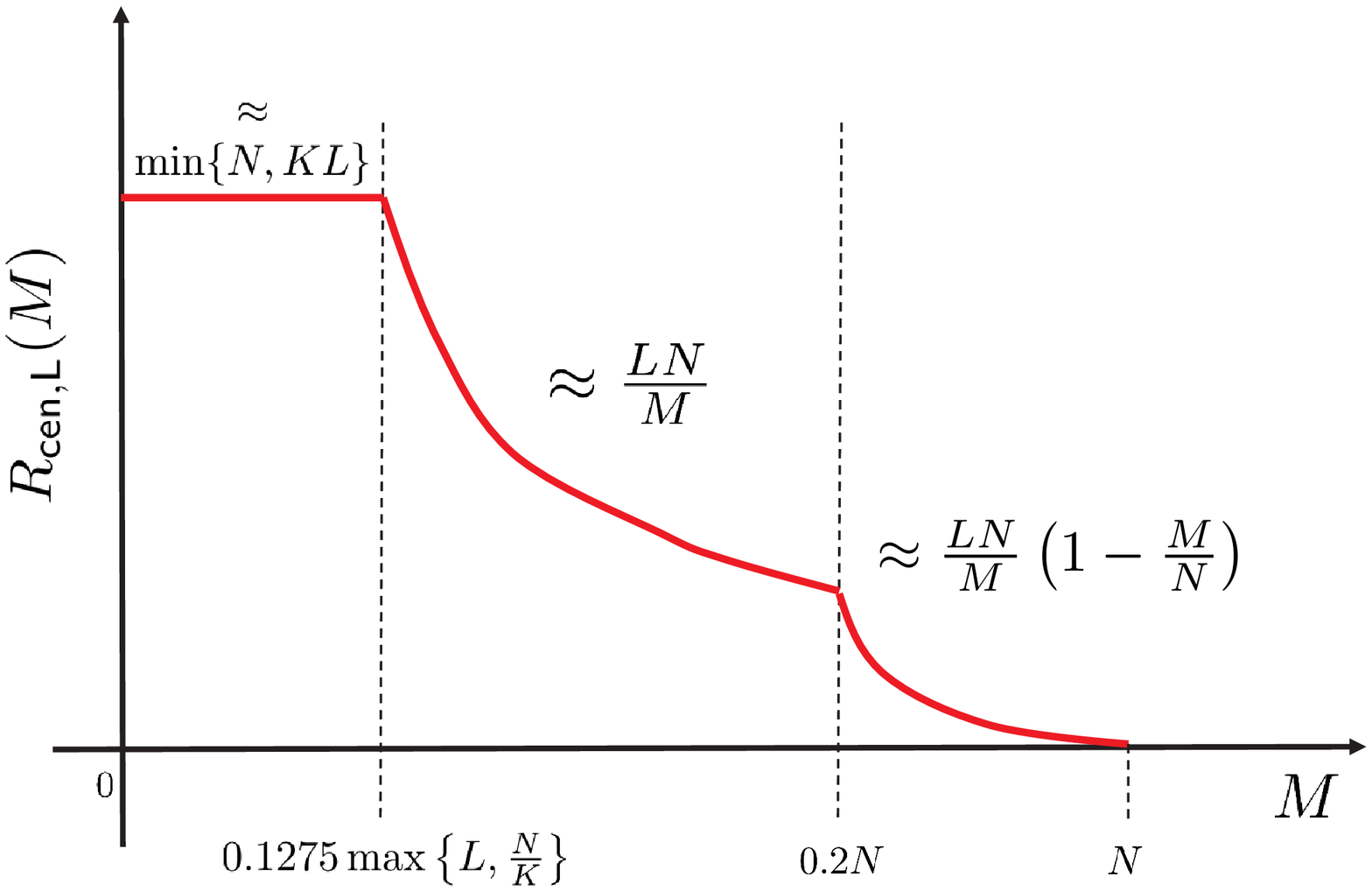}
\label{fig:cenapprox}
}\hspace{-25pt}\\
\subfigure[]{
\includegraphics[width=3.00in,height=2.025in]{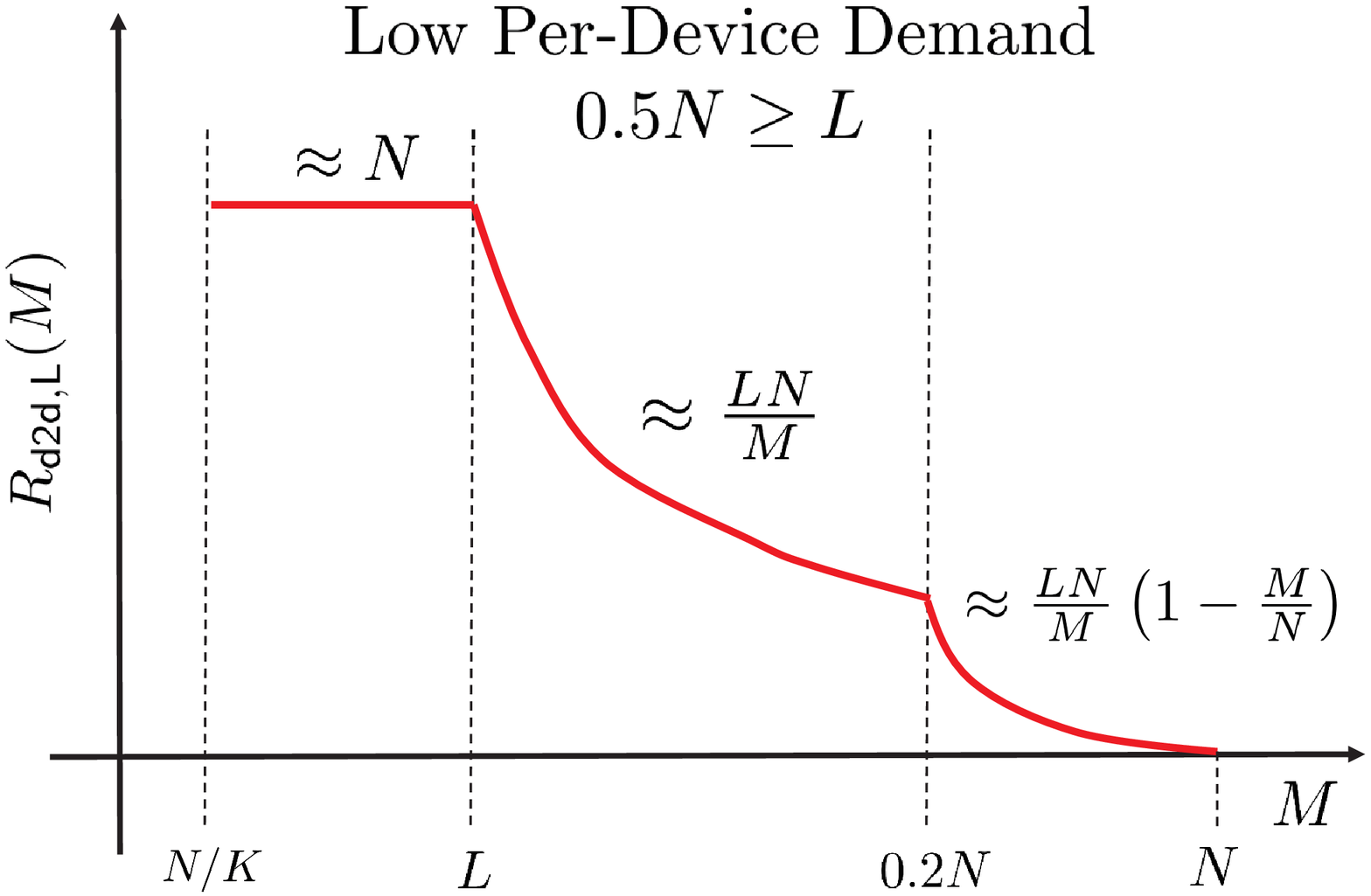}
\label{fig:d2dapprox_ld}
}\hspace{25pt}
\subfigure[]{
\includegraphics[width=3.00in,height=2.025in]{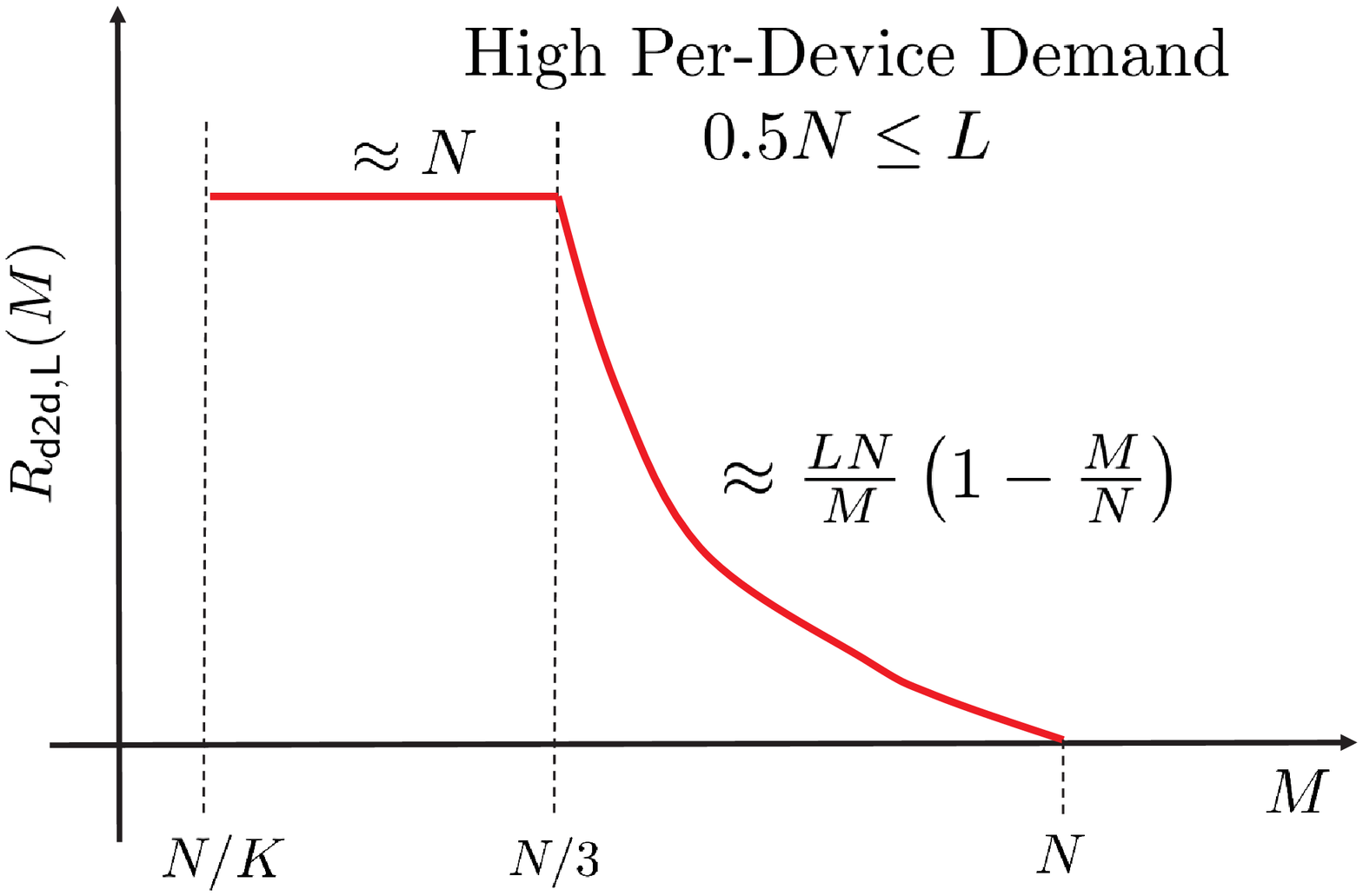}
\label{fig:d2dapprox_hdem}
}
\vspace{-10pt}
\caption{A representation of the order-optimal approximations to the delivery rate for the repetition based schemes for $(a)$ centralized content delivery, which is used in the proof of Theorem \ref{th:ldem_gap}; and $(b)-(c)$ for D2D-assisted content delivery with low and high per-device demands, which are used in the proof of Theorem \ref{th:d2d_ldem_gap}.} \vspace{-10pt}
\label{fig:approx}
\end{figure*}

\begin{remark}
To prove the order-optimality of the repetition based schemes as shown in Theorems \ref{th:ldem_gap} and \ref{th:d2d_ldem_gap}, we use approximations to the achievable rates presented in Lemmas \ref{lem:ldem_ach} and \ref{lem:d2d_ldem}. These approximations are highlighted in Fig. \ref{fig:approx}. For the case of centralized content delivery, three regimes of cache storage are considered and for very low cache storage, it is approximately optimal to unicast all requested files as seen in Fig. \ref{fig:cenapprox}. For higher cache storage, a linear dependance of the rate on $L/M$ is established. For the case of D2D-assisted delivery, we see that when users demand less than half the library, three regimes of cache storage need to be considered, while for the case of high per-device demands, only 2 regimes suffice and for storage as high as a third of the library, it is approximately optimal for all users to broadcast all $N$ files from their local caches. Further details are provided in Appendix \ref{ap:ldem_gap} and \ref{ap:d2d_ldem_gap}.
\end{remark}

\section{Case Studies}
In this section, we present two case studies to illustrate the new techniques used to obtain the lower bounds in Theorems \ref{th:ldem} and \ref{th:d2d_ldem}. For ease of exposition, we consider the special case of $L=1$ since the results easily extend to any $L>1$. We show that our technique yields additional bounds as compared to the cut-set techniques in literature and present discussions behind the principal intuitions in applying our method. To this end we first consider the case of centralized content delivery.

\subsection{Centralized Content Delivery: Intuition Behind Proof of Theorem \ref{th:ldem}}\label{sec:exm_cen}
We consider $N=3$ files, denoted by $A,B,C$ and $K=3$ users, each with a cache storage $M$ files. For the case of $L=1$, Corollary \ref{cor:convfund} yields the following lower bounds for different $s,\ell$.
\begin{align}
&\text{New Lower Bounds:}&3R_{\mathsf{cen}}^* + 6M \geq 8, ~ s = 2, ~\ell = 1;~~4R_{\mathsf{cen}}^* + 2M \geq 5,~ s = 1, ~\ell = 2\label{nlb2}\\
&\text{Cut-Set Bounds:}\hspace{0pt} &R_{\mathsf{cen}}^* + 3M \geq 3, ~ s = 3, ~\ell = 1;~~~3R_{\mathsf{cen}}^* + ~M \geq 3, ~ s = 1, ~\ell = 3\label{cs2}.
\end{align}
The existing lower bounds from \cite[Theorem 2]{Maddah-Ali} are given by (\ref{cs2}). The proposed approach provides the additional bounds in (\ref{nlb2}), thereby yielding tighter lower bounds than \cite[Theorem 2]{Maddah-Ali} as shown in  Fig. \ref{fig:nk33}. Next, we detail the derivation of the first bound in (\ref{nlb2}) highlighting the new aspects and techniques. 

To this end, we consider two consecutive requests $(d_1,d_2,d_3) = (A,B,C)$ and $(d_1,d_2,d_3) = (B,C,A)$.  It is clear that the first  $s=2$ caches $Z_{[1,2]}$ along with two corresponding transmissions $X_{ABC},X_{BCA}$ from the central server  suffice to decode all the $3$ files. We upper bound the entropy of $\ell = 1$ multicast transmission by the optimal rate $R_{\mathsf{cen}}^*$ and use the other transmission's decoding capability with the caches to derive the following bound
\begin{align}
3B & \leq H(Z_{[1,2]}, X_{ABC},X_{BCA}) \leq H(Z_{[1,2]}) + H(X_{ABC},X_{BCA}|Z_{[1,2]})\nonumber\\
	 & \leq 2MB + H(X_{ABC}) + H(X_{BCA}|Z_{[1,2]},X_{ABC})\nonumber\\
	 & \myleq{(a)} 2MB + R_{\mathsf{cen}}^*B + H(X_{BCA}|Z_{[1,2]},X_{ABC},A,B)\nonumber\\
	 & \leq 2MB + R_{\mathsf{cen}}^*B + H(X_{BCA},Z_3|Z_{[1,2]},X_{ABC},A,B)\nonumber\\
	 & \leq 2MB + R_{\mathsf{cen}}^*B + H(Z_3|Z_{[1,2]},X_{ABC},A,B) +  H(X_{BCA}|Z_{[1:3]},X_{ABC},A,B)\nonumber\\
	 & \myleq{(b)} 2MB + R_{\mathsf{cen}}^*B + H(Z_3|Z_{[1,2]},A,B) +  H(X_{BCA}|Z_{[1:3]},X_{ABC},A,B,C)\nonumber\\
	 & \leq 2MB + R_{\mathsf{cen}}^*B + H(Z_3|Z_{[1,2]},A,B) \label{eq:ex2:1},
\end{align}
where step \textsf{(a)} follows from the fact that $Z_{[1,2]}$ along with $X_{ABC}$ can decode files $A,B$ and step \textsf{(b)} follows from the fact that $H(X_{BCA}|Z_{[1:3]},X_{ABC},A,B,C)=0$ since each transmission is a deterministic function of the files. Considering the term $H(Z_3|Z_{[1,2]},A,B)$ in (\ref{eq:ex2:1}), we have:
\begin{align}\label{eq:sym1}
H(Z_3|Z_{[1,2]},A,B) = H(Z_{[1:3]}|A,B) - H(Z_{[1,2]}|A,B).
\end{align}
Using (\ref{eq:sym1}) in  (\ref{eq:ex2:1}), we have:
\begin{align}
3B & \leq 2MB + R_{\mathsf{cen}}^*B +  H(Z_{[1:3]}|A,B) - H(Z_{[1,2]}|A,B).\label{eq:ex2:2}
\end{align}
\begin{figure*}[!t]
\centering 
\subfigure[]{
\includegraphics[width=3.5in,height=2.75in]{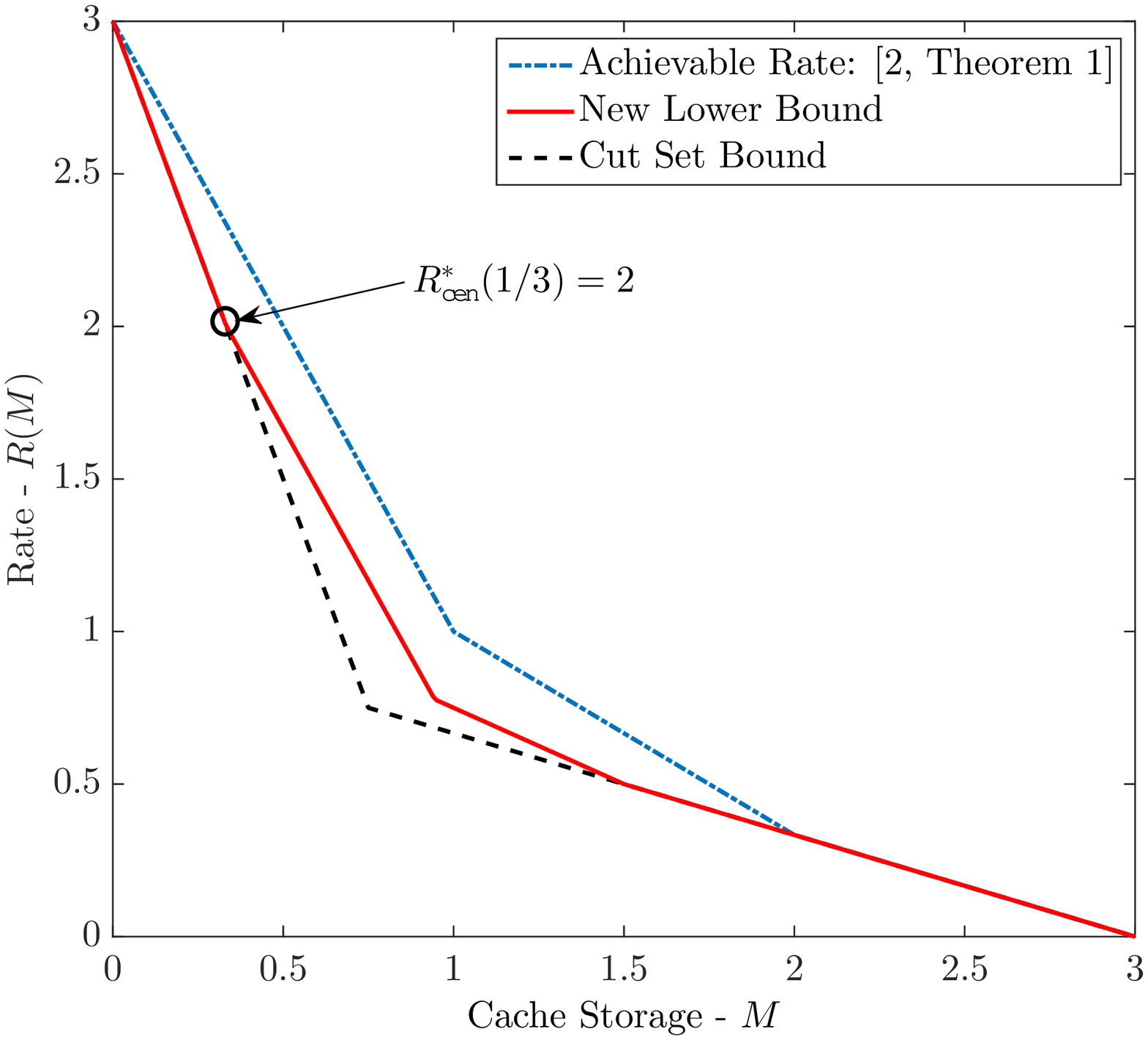}
\label{fig:nk33}
}\hspace{-25pt}
\subfigure[]{
\includegraphics[width=3.5in,height=2.75in]{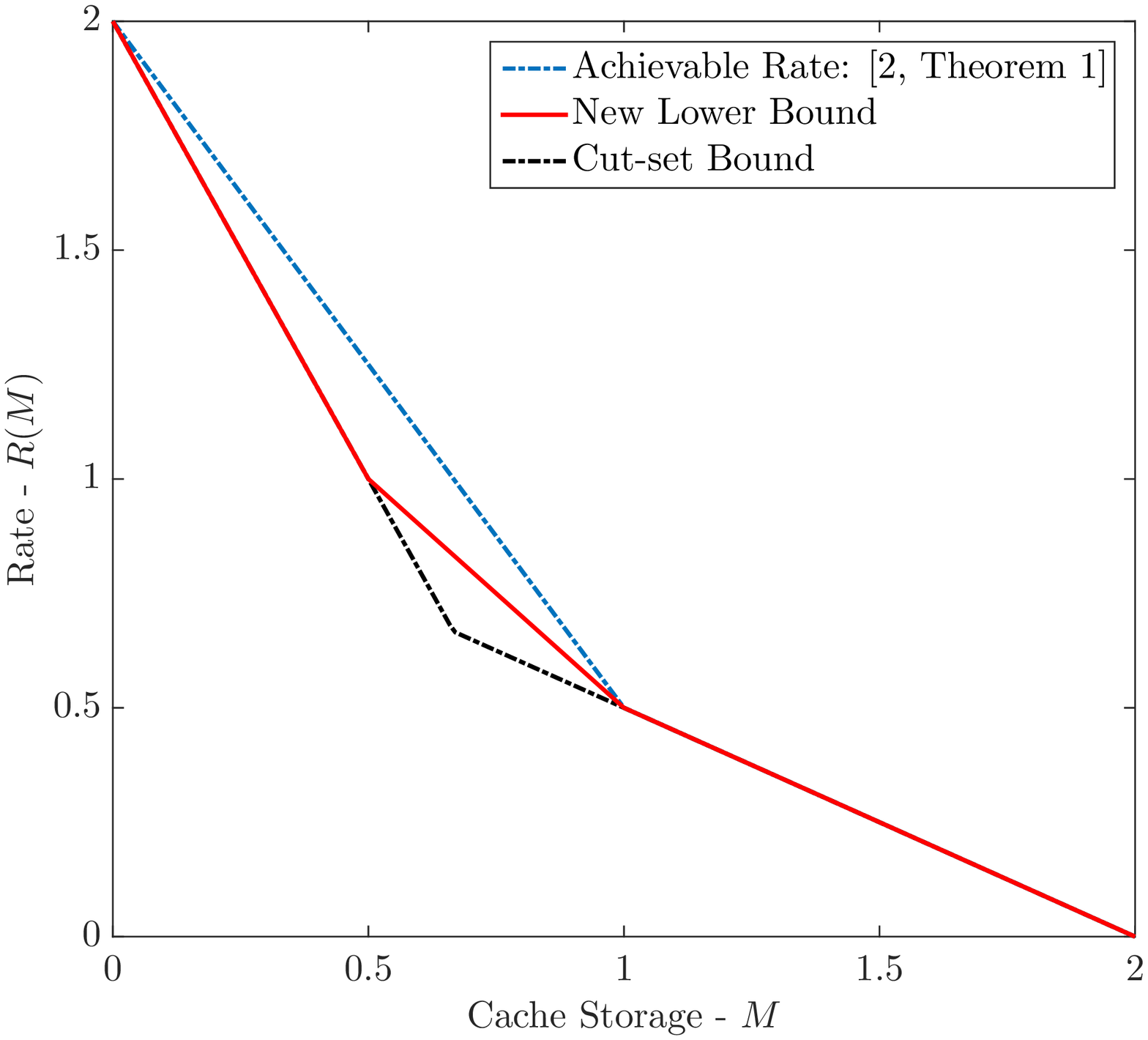}
\label{fig:nk22}
}\vspace{-10pt}
\caption{Storage-rate tradeoff for centralized content delivery with $L=1$ for $(a)$ $N=K=3$ and $(b)$ $N=K=2$. } \vspace{-10pt}
\end{figure*}
Now considering all possible subsets of $Z_{[1:3]}$ with cardinality $2$, in the RHS of (\ref{eq:ex2:2}), we have:
\begin{align}
3B & \leq 2MB + R_{\mathsf{cen}}^*B +  H(Z_{[1:3]}|A,B) - H(Z_{[2,3]}|A,B)\label{eq:ex2:3}\end{align}\begin{align}
3B & \leq 2MB + R_{\mathsf{cen}}^*B +  H(Z_{[1:3]}|A,B) - H(Z_{[1,3]}|A,B).\label{eq:ex2:4}
\end{align}
Summing (\ref{eq:ex2:2})-(\ref{eq:ex2:4}), and normalizing by $3$, we have: 

\begin{align}
\hspace{-5pt} 3B & \leq 2MB + R_{\mathsf{cen}}^*B +  H(Z_{[1:3]}|A,B) - \hspace{-4pt}\sum_{\substack{i,j=1,~ i \neq j}}^{3}\frac{H(Z_{[i,j]}|A,B)}{3}. \label{eq:ex2:5}\hspace{-2pt}
\end{align}

We next state Han's Inequality \cite[Theorem 17.6.1]{cover} on subsets of random variables, which we use for further upper bounding (\ref{eq:ex2:5}) in order to derive the proposed lower bound. 

 \textbf{Han's Inequality}: Let $Y_{[1:n]}$ denote a set of random variables. Further, let $\left(\mathsf{Y}_{[m]},\mathsf{Y}_{[r]} \right) \subseteq Y_{[1:n]}$ denote subsets of cardinality $m,r$ with $m\leq r$. Han's Inequality states that
\begin{align}\label{eHans}
\frac{1}{{n \choose r}}\sum_{\mathsf{Y}_{[r]}:\left|\mathsf{Y}_{[r]}\right|=r}\frac{H\left(\mathsf{Y}_{[r]}\right) }{r} \leq \frac{1}{{n \choose m}}\sum_{\mathsf{Y}_{[m]}:\left|\mathsf{Y}_{[m]}\right|=m}\frac{H\left(\mathsf{Y}_{[m]}\right) }{m}, 
\end{align}
where the sums are over all subsets of cardinality $r,m$ respectively. Next, from (\ref{eq:ex2:5}), consider the set of random variables $Z_{[1:3]}$ and its subsets $\left(Z_{[1,2]},Z_{[1,3]},Z_{[2,3]}\right)$ of cardinality $2$. Applying Han's Inequality for these random variables, using $n=r=3$ and $m=2$ in \eqref{eHans}, we have:
\begin{align}
\frac{2H\left( Z_{[1:3]} |A,B\right)}{3} \leq  \sum_{\substack{i,j=1, i \neq j}}^{3} \frac{ H\left(Z_{[i,j]}|A,B\right)}{3}\label{eq:ex2:6}.
\end{align}
Substituting (\ref{eq:ex2:6}) into (\ref{eq:ex2:5}), we have:
\begin{align}\label{eq:nlb1proof}
3B & \leq 2MB + R_{\mathsf{cen}}^*B +  H(Z_{[1:3]}|A,B) - \frac{2}{3}H(Z_{[1:3]}|A,B)\nonumber\\
	 & \leq 2MB + R_{\mathsf{cen}}^*B +  \frac{1}{3}H(Z_{[1:3]}|A,B)  \leq 2MB + R_{\mathsf{cen}}^*B +  \frac{1}{3}H(Z_{[1:3]},C|A,B)\nonumber\\
   & \leq 2MB + R_{\mathsf{cen}}^*B +  \frac{1}{3} \left(\underbrace{H(C|A,B)}_{\leq 1} + \underbrace{H(Z_{[1:3]}|A,B,C)}_{= 0}\right) \leq 2MB + R_{\mathsf{cen}}^*B +  \frac{1}{3}B.
\end{align}
Rearranging \eqref{eq:nlb1proof}, we get the new lower bound given by the first inequality in (\ref{nlb2}). The second bound in \eqref{nlb2} can be obtained similarly by considering $s=1$ cache and bounding the entropy of $\ell=2$ transmissions by the optimal rate $R^*_{\mathsf{cen}}$ and following steps similar to \eqref{eq:ex2:1}-\eqref{eq:nlb1proof}.

\begin{remark}
We note that the key distinction from the cut-set bounds is the mechanism of bounding the joint entropy of random variables representing the multicast transmissions and the stored contents. Specifically, considering the first inequality in \eqref{eq:ex2:1}, a naive upper bound on the term $H(X_{BCA}|Z_{[1,2]},X_{ABC})$ would be $R_{\mathsf{cen}}^{*}$, which would lead to $3\leq 2M+2R_{\mathsf{cen}}^*$, which is a loose bound. The main idea is to first observe that given $Z_{[1,2]}$ and the multicast transmission $X_{ABC}$, the files $A, B$ can be recovered. Hence, we expect a dependence between $X_{BCA}$ and the random variables in the conditioning. In order to capture this dependency, we consider multiple such requests over time, allowing us to write \eqref{eq:ex2:3}, and \eqref{eq:ex2:4}, similar to \eqref{eq:ex2:2}. This symmetrization argument directly leads to the use of Han's inequality and subsequently to the new lower bound. This is the key approach behind Corollary \ref{cor:convfund} and Theorem~\ref{th:ldem} which is a general result and holds for all problem parameters. 
\end{remark}

\begin{remark}
Recently \cite{improve_fund,gunduz_ach,piantanida_ach} proposed caching and delivery schemes which improve upon the original multicasting scheme presented in \cite[Theorem 1]{Maddah-Ali}. Specifically, \cite{improve_fund} showed that for $K\geq N$, in the small buffer region of $M = 1/K$, the achievable rate is given by $N(1-M)$ which improves on the achievable rate in \cite[Theorem 1]{Maddah-Ali}. For $N=K=3$, the new achievable point $(M,R) = (1/3,2)$ is highlighted in Fig. \ref{fig:nk33}. The lower bound in \cite[Th, 2]{Maddah-Ali} is shown to be tight only in the regime $0\leq M\leq 1/K$ for $K\geq N$ in \cite{improve_fund}. The lower bound presented in Corollary \ref{cor:convfund} shows that this is indeed the case and that the new converse is tighter than the cut-set based lower bound for $M>1/K$ as shown in in Fig. \ref{fig:nk33}.
\end{remark}

\begin{remark}
In \cite{Maddah-Ali}, the authors characterize the optimal storage-rate tradeoff for the case of $N=K=2$ and show that their lower bound, given by $R_{\mathsf{cen}}^* + 2M \geq 2 $ and $2R_{\mathsf{cen}}^* + M \geq 2$, is indeed loose. Our proposed lower bound yields the additional bound, $2R_{\mathsf{cen}}^* + 2M \geq 3$, which makes it tighter than the cut set bound. From Fig. \ref{fig:nk22} and \cite{Maddah-Ali} it can be seen that the proposed converse characterizes the optimal rate for the case of $N=K=2$. 
\end{remark}

\subsection{D2D-assisted Content Delivery: Intuition Behind Proof of Theorem \ref{th:d2d_ldem}}\label{sec:intui}
We next follow up the discussion in the previous section with an additional example to highlight our proposed techniques for the case of D2D-assisted content delivery with $L=1$ demand per user. To this end, consider again a system with $N=3$ files $(A,B,C)$ and $K=3$ users, each with a cache storage of $M\geq 1$. The proposed lower bound in Corollary \ref{cor:convd2d} gives following bounds for different $s,\ell$:
\begin{align}
&\text{New Lower Bounds:}\hspace{-75pt}& ~R_{\mathsf{d2d}}^* + 6M  &\geq 8,   ~~~~ s = 2,~\ell = 1\label{dnlb1}\\
&&8R_{\mathsf{d2d}}^* + 6M  &\geq 15,  ~~~  s = 1,~\ell = 2\label{dnlb2}\\
&\text{Cut-set Bound:}\hspace{-75pt}&2R_{\mathsf{d2d}}^* + ~M  &\geq 3,   ~~~~ s = 1,~\ell = 3,\label{cb1} 
\end{align}
where (\ref{cb1}), along with the looser bound $R_{\mathsf{d2d}}^* + 3M \geq 3$, recovers the cut set bound in \cite[Theorem 2]{Ji_Ldem}. To facilitate the derivation of the new bounds, we first consider the request vectors $(d_1,d_2,d_3) = (A,B,C)$ and $(d_1,d_2,d_3) = (B,C,A)$. The first $s=2$ cache contents $Z_{[1,2]}$ along with two composite transmissions $X_{ABC}= \{X^3_{ABC}\}, X_{BCA}= \{X^3_{BCA}\}$ from the \textit{third} user device are able to decode all $3$ files. Here each transmission has the rate of $R_{\mathsf{d2d}}^*/3$. We upper bound the entropy of $\ell=1$ transmission with this rate and use the other transmission's decoding capability, in conjunction with the cache contents $Z_{[1,2]}$, to derive a tighter bound as follows.
\begin{align}
3B & \leq H(Z_{[1,2]}, X_{ABC},X_{BCA}) \leq H(Z_{[1,2]}) + H(X_{ABC},X_{BCA}|Z_{[1,2]})\nonumber	\\
	 & \leq 2MB + H(X_{ABC}) + H(X_{BCA}|Z_{[1,2]},X_{ABC})\nonumber\\
	 & \leq 2MB + \frac{R_{\mathsf{d2d}}^*}{3}B + H(X_{BCA}|Z_{[1,2]},X_{ABC},A,B)\nonumber\\
	 & \leq 2MB + \frac{R_{\mathsf{d2d}}^*}{3}B + H(X_{BCA},Z_3|Z_{[1,2]},X_{ABC},A,B)\nonumber\\
	 & \leq 2MB + \frac{R_{\mathsf{d2d}}^*}{3}B + H(Z_3|Z_{[1,2]},X_{ABC},A,B) +  H(X_{BCA}|Z_{[1:3]},X_{ABC},A,B)\nonumber\\
	 & \myleq{(a)} 2MB + \frac{R_{\mathsf{d2d}}^*}{3}B + H(Z_3|Z_{[1,2]},A,B) \label{exd:1},
\end{align}
where step \textsf{(a)} follows from the fact that $H(X_{BCA}|Z_{[1:3]},X_{ABC},A,B,C)=0$ since $X_{BCA}$ is a function of the \textit{cache contents} $Z_{[1:3]}$. Considering the term $H(Z_3|Z_{[1,2]},A,B)$, we have:
\begin{align}\label{eq:symd0}
H(Z_3|Z_{[1,2]},A,B) = H(Z_{[1:3]}|A,B) - H(Z_{[1,2]}|A,B).
\end{align}
Using (\ref{eq:symd0}) in (\ref{exd:1}), we have:
\begin{align}
3B & \leq 2MB + \frac{R_{\mathsf{d2d}}^*}{3}B +  H(Z_{[1:3]}|A,B) - H(Z_{[1,2]}|A,B).\label{exd:2}
\end{align}
Again, considering all possible subsets of $Z_{[1:3]}$ having cardinality $2$, in the RHS of (\ref{exd:2}), we have 
\begin{align}
3B & \leq 2MB + \frac{R_{\mathsf{d2d}}^*}{3}B +  H(Z_{[1:3]}|A,B) - H(Z_{[2,3]}|A,B).\label{exd:3}\\
3B & \leq 2MB + \frac{R_{\mathsf{d2d}}^*}{3}B +  H(Z_{[1:3]}|A,B) - H(Z_{[1,3]}|A,B).\label{exd:4}
\end{align}
Symmetrizing over the inequalities in (\ref{exd:2})-(\ref{exd:4}), we have:
\begin{align}
3B & \leq 2MB + \frac{R_{\mathsf{d2d}}^*}{3}B +  H(Z_{[1:3]}|A,B) - \sum_{\substack{i,j=1,i \neq j}}^3\frac{ H(Z_{[i,j]}|A,B)}{3}. \label{exd:5}
\end{align}
Next, considering the set of caches $Z_{[1:3]}$ and its subsets $Z_{[1,2]},Z_{[1,3]}Z_{[2,3]}$ of cardinality $2$ and applying Han's Inequality (as in \eqref{eHans}), we have from \eqref{exd:2}
\begin{align} \label{eq:hans_ineq40}
3B & \leq 2MB + \frac{R_{\mathsf{d2d}}^*}{3}B +  H(Z_{[1:3]}|A,B) - \frac{2H(Z_{[1:3]}|A,B)}{3}\nonumber\\
   & \leq 2MB + \frac{R_{\mathsf{d2d}}^*}{3}B +  \frac{H(Z_{[1:3]},C|A,B)}{3} ~\leq~ 2MB + \frac{R_{\mathsf{d2d}}^*}{3}B +  \frac{1}{3}B.
\end{align}
Rearranging \eqref{eq:hans_ineq40}, we get the new lower bound in \eqref{dnlb1}. Next, we consider $s=1$ device cache, $Z_1$, and three request vectors  $(d_1,d_2,d_3) = (A,B,C)$, $(d_1,d_2,d_3) = (B,C,A)$ and $(d_1,d_2,d_3) = (C,A,B)$ along with the multicast transmissions $X_{ABC}= \{X^2_{ABC},X^3_{ABC}  \}, X_{BCA} = \{X^2_{BCA},X^3_{BCA}\}, X_{CAB}= \{X^2_{CAB},X^3_{CAB}\}$ from users $2,3$, which are capable of decoding all $3$ files. In this case, each composite transmission is of rate $2R_{\mathsf{d2d}}^*/3$. We upper bound the entropy of $\ell=2$ transmissions with their rate and following similar steps as the previous case leads us to the lower bound in \eqref{dnlb2}.
Finally, considering again, $s=1$ device storage content, $Z_1$, and three request vectors  $(d_1,d_2,d_3) = (A,B,C)$, $(d_1,d_2,d_3) = (B,C,A)$ and $(d_1,d_2,d_3) = (C,A,B)$ along with three transmissions $X_{ABC}= \{X^2_{ABC},X^3_{ABC}  \}, X_{BCA} = \{X^2_{BCA},X^3_{BCA}\}, X_{CAB}= \{X^2_{CAB},X^3_{CAB}\}$ which are capable of decoding all $3$ files. Each transmission has rate $2R_{\mathsf{d2d}}^*/3$. We upper bound the entropy of $\ell = 3$ transmissions by their rates thereby recovering the cut set bound in \eqref{cb1}.
The new converse is strictly tighter than the cut set bounds. Furthermore, the proposed converse is tight at the point $M = {N}/{K} = 1$. Setting $M=1$ in \eqref{dnlb1} and comparing with the upper bound from \cite[Theorem 1]{fund_ji} yields $R^*_{\mathsf{d2d}}(1)= 2$ i.e., the achievable scheme proposed in \cite{fund_ji} is optimal at $M=1$.

\section{Comparisons with Independent Parallel Results}\label{sec:comp}
We acknowledge the  recent independent contributions from \cite{tuninetti_optimal,tifr,ghasemi,tian, Gastpar_newconv} on developing converse results for cache-aided systems. The authors in \cite{tuninetti_optimal} derive a new converse bound based on index coding for the case of centralized content delivery with $L=1$, which shows that the achievable scheme in \cite{Maddah-Ali} is optimal if \textit{uncoded} cache placement is assumed. Again, for centralized content delivery, the authors in \cite{tifr,ghasemi} also obtain improvements over the cut-set bound, for the case when $L=1$, through different approaches than ours. While a direct comparison is analytically intractable, especially owing to the algorithm based approach of \cite{ghasemi}, we present some numerical comparisons to show that our bounds supersede these bounds in certain regimes of cache storage $M$ for the single demand case. To this end, in Fig. \ref{fig:n12k6} and \ref{fig:n6k12}, we plot the result in \cite{tifr} which yields the same bound as in \cite{ghasemi} for these instances. It can be seen that our bounds are better for the case of low cache memory for both cases. Furthermore, we note that a \textit{holistic lower bound} for centralized content delivery with $L=1$ is obtained only by combination of all lower bounding approaches in literature and maximizing over the bounds yielded by each method.

The authors in \cite{tifr} do not derive a constant gap result, however, the authors in \cite{ghasemi} show a constant gap of $4$ to the achievable rate in \cite[Theorem 1]{Maddah-Ali}. We emphasize here that the analyses to obtain multiplicative gaps (as in Theorems \ref{th:ldem_gap} and \ref{th:d2d_ldem_gap}) are essentially approximations. Thus, deriving lower bounds geared towards tightening this analysis does not guarantee the best known bounds. To this end, we consider the lower bounds presented in \cite{Gastpar_newconv}. The proposed lower bounds are generally always looser than the cut-set bounds for the case of centralized content delivery with $L=1$ and by extension than the bounds presented in this paper as shown in Fig. \ref{fig:comp}. However, the authors leverage the structure of the bounds to approximate the storage-rate tradeoff to within a constant multiplicative factor of $4.7$. We note here that the analysis presented in this paper is solely for the purpose of proving the sub-optimality of cut-set bounds in a more general problem setting, i.e., $L\geq 1$, and that the gap to the optimal can be numerically tightened to $3.5$ for centralized delivery with $L=1$, which shows that the bounds are similar to those in \cite{ghasemi,Gastpar_newconv} in terms of approximately characterizing the optimal storage-rate tradeoff.
\begin{figure*}[!t]
\centering 
\subfigure[]{
\includegraphics[width=2.475in,height=2in]{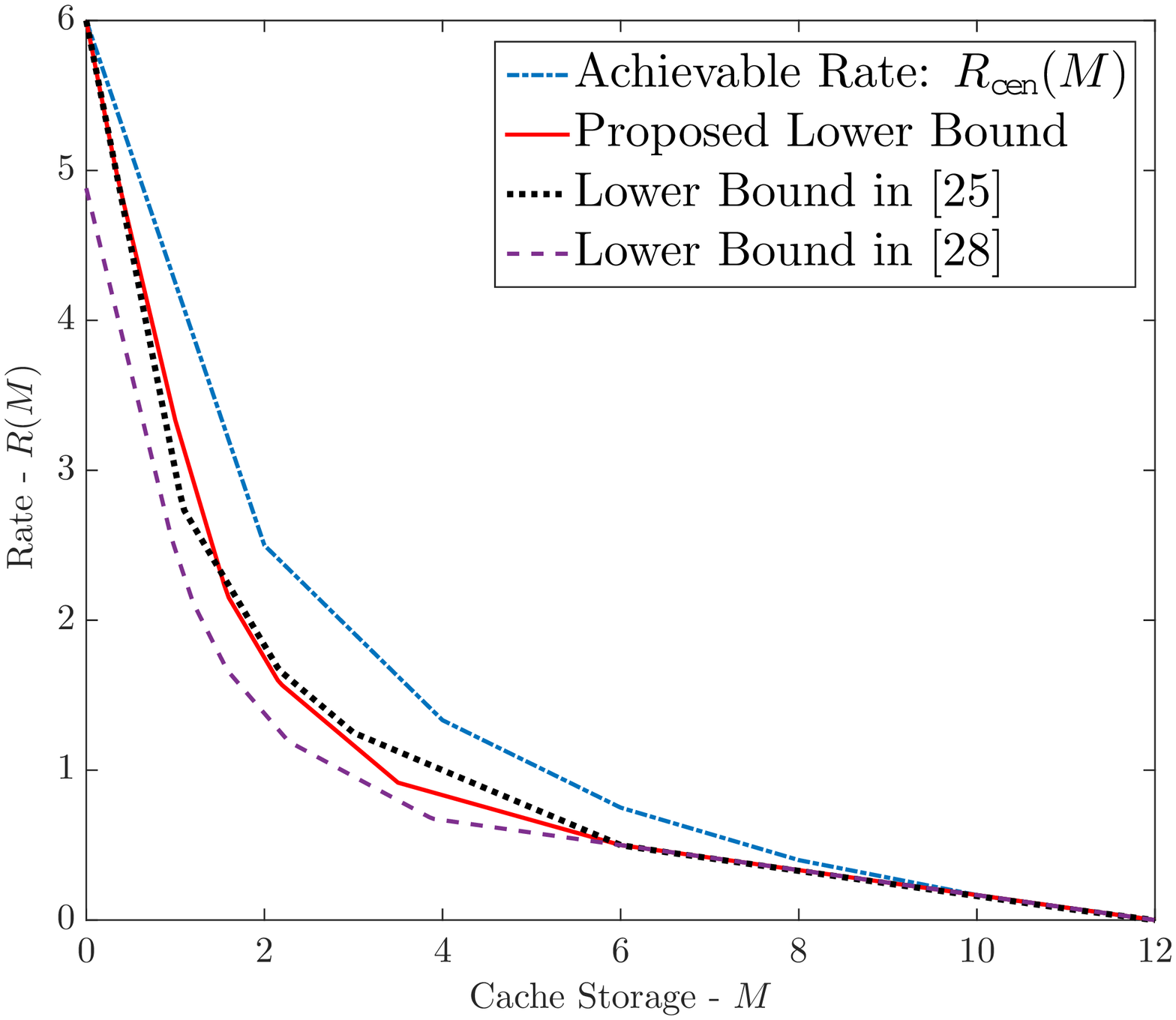}
\label{fig:n12k6}
}\hspace{-28pt}
\subfigure[]{
\includegraphics[width=2.475in,height=2in]{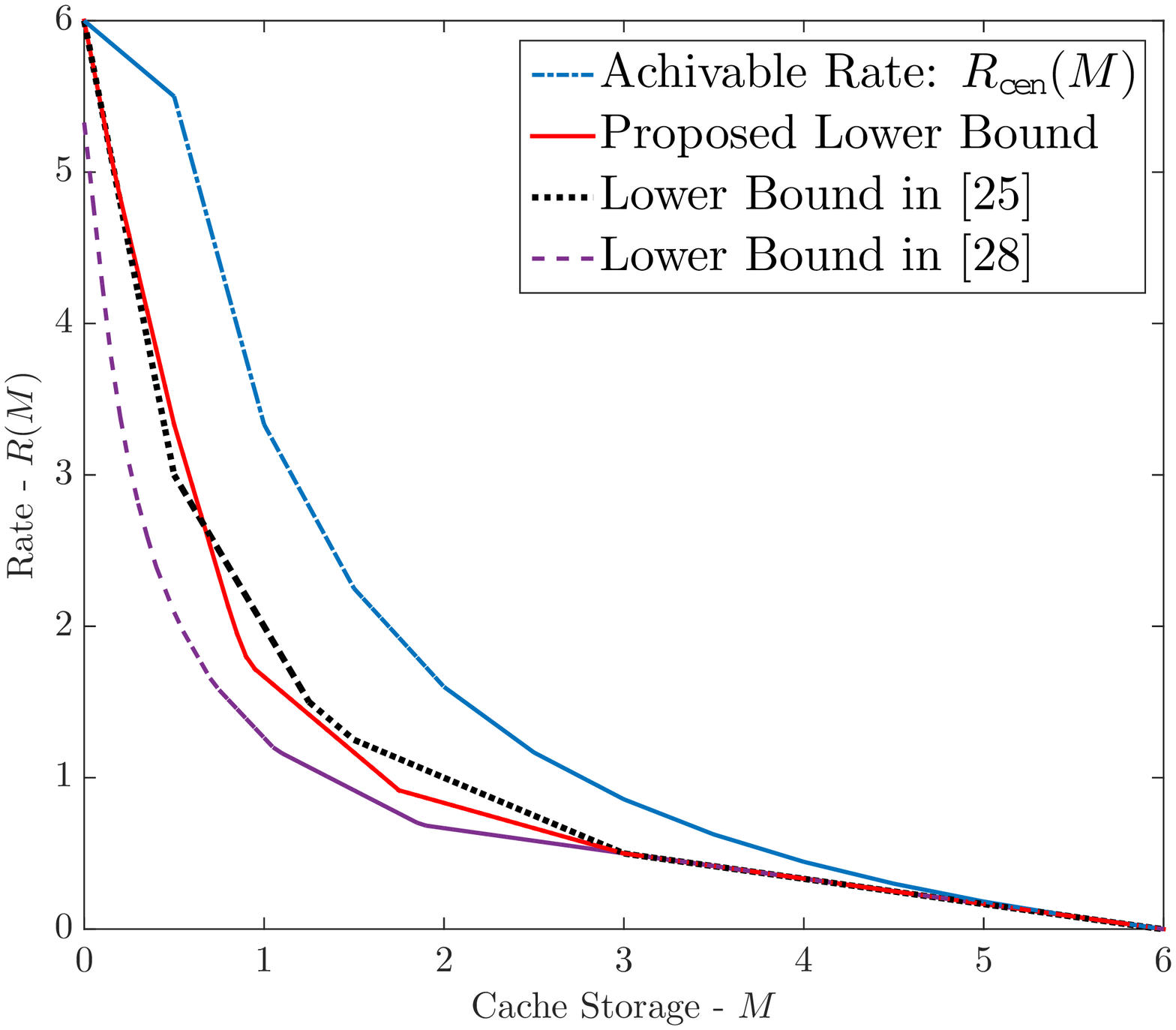}
\label{fig:n6k12}
}\hspace{-28pt}
\subfigure[]{
\includegraphics[width=2.475in,height=2in]{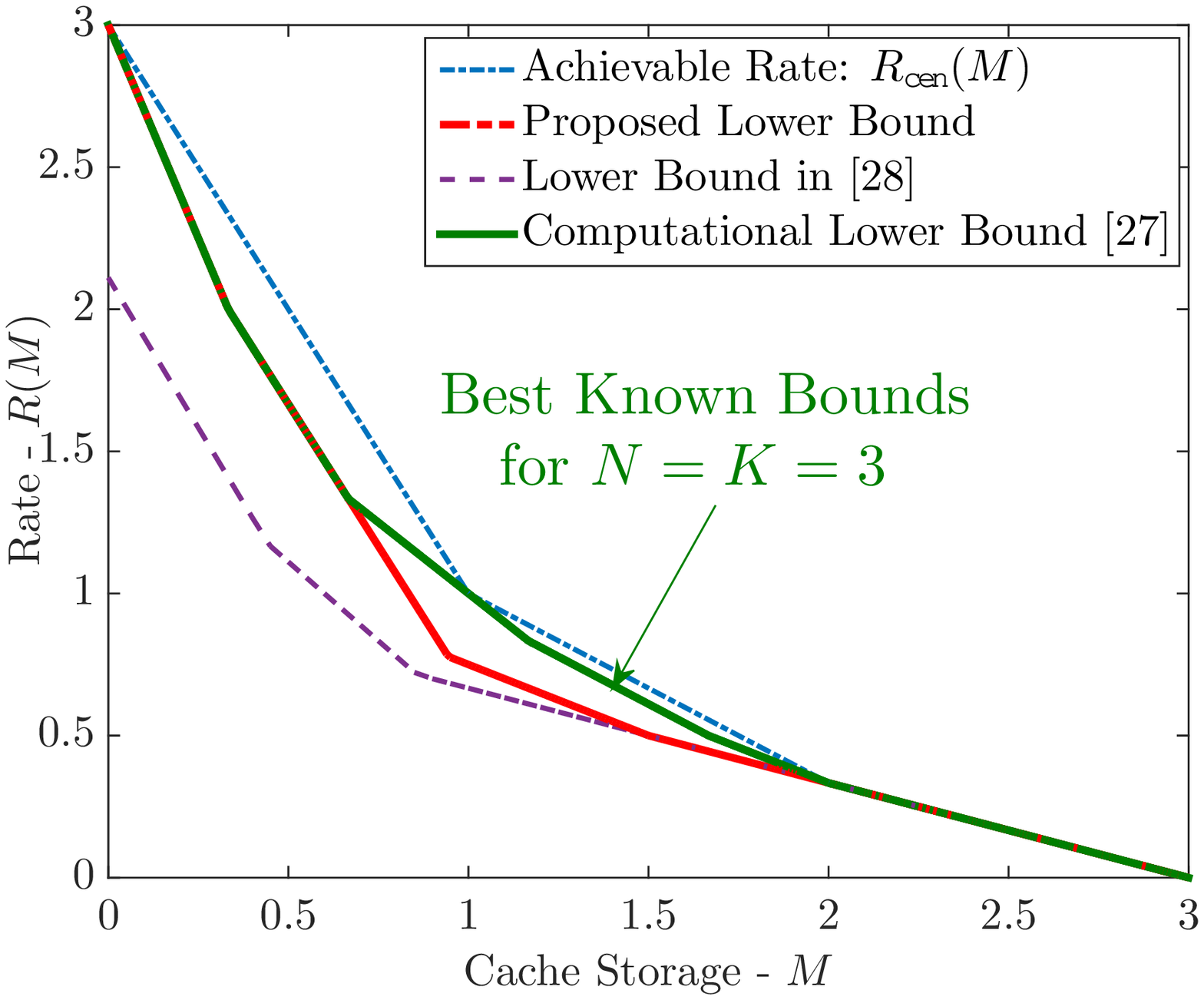}
\label{fig:tian}
}
\vspace{-10pt}
\caption{Comparisons with parallel results for the case of centralized content delivery with $L=1$ for a cache-aided system with ${(a)}$ $N=12,K=6$; ${(b)}$ $N=6,K=12$ and $(c)$ $N=K=3$.} \vspace{-10pt}
\label{fig:comp}
\end{figure*}
%

Finally, Tian \cite{tian} has recently obtained improvements for the specific case of $N=K=3$ for centralized content delivery with $L=1$, using a novel computer aided approach as shown in Fig. \ref{fig:tian}. Our proposed method recovers the bound $6M + 3R_{\mathsf{cen}}^* \geq 8$, while the approach in \cite{tifr,ghasemi} recovers the bound $M + R_{\mathsf{cen}}^* \geq 2$. However, it is unclear whether the bounds $12M + 18R^*_{\mathsf{cen}} \geq 29$ and $3M + 6R^*_{\mathsf{cen}}\geq 8$ can be tractably obtained via analytical methods. Therefore, obtaining the numerical bounds for the $N=K=3$ system with centralized delivery remains an open problem.
\section{Conclusion}\label{sec:conc}
In this paper, we presented a new technique for deriving information theoretic lower bounds for cache-aided systems with centralized as well as D2D-assisted content delivery for the general case when users can demand multiple files at each transmission interval. We leveraged Han's Inequality to better model the interaction of user caches and file decoding capabilities of multicast transmissions to derive lower bounds which are strictly tighter than existing cut-set based bounds. Leveraging the proposed lower bounds, we showed that, for the case of multiple demands per user, repeating multiple times, the schemes designed to address content delivery for single demands, is in fact order-optimal for both delivery settings. Furthermore, we provided an approximate characterization of the fundamental storage-rate tradeoff for centralized content delivery to within a constant multiplicative factor of $11$ and for D2D-assisted content delivery to within a factor of $10$ for all possible values of problem parameters, thereby improving on the existing results in both paradigms.

\appendices
\section{Proof of Theorem \ref{th:ldem}}\label{ap:ldem}
Consider a cache-aided system with $N$ files, each of size $B$ bits, and $K$ users, each with a cache size of $M$ files. Let $s$ be an integer such that $s \in \left[1:\min\{\lceil N/L\rceil,K\}\right]$. For the case of centralized delivery with $L\in [1:N]$ demands per user, the demand vector is such that each user demands $L$ distinct files at each transmission interval. Consider the first $s$ caches $Z_{[1:s]}$ and a demand vector
\begin{align}
\mathbf{D}_1 &= \Bigg(\underbrace{\mathbf{d}_{[1:s]},\mathbf{d}_{[s+1:K]}) =  \Big( [1:L],[L+1:2L],\ldots,[L(s-1)+1:Ls]}_{=~\mathbf{d}_{[1:s]}},\phi\Bigg),
\end{align}
where the first $s$ user demands are for $Ls$ unique files and last $K-s$ users' demands can be for any arbitrary $L(K-s)$ files. To service this set of demands, the central server makes a multicast transmission $X_1$, which along with the $Z_{[1:s]}$ is capable of decoding the files $F_{[1:Ls]}$. Similarly, consider another demand, 
\begin{align}
\mathbf{D}_2 &= \Big([Ls+1:L(s+1)],[L(s+1)+1:L(s+2)],\ldots,[L(2s-1):2Ls],\phi\Big),
\end{align}
and a resultant multicast transmission $X_2$, which along with the $s$ caches, are capable of decoding the files $F_{[Ls+1 : 2Ls]}$. Thus considering the demand vectors $\mathbf{D}_1,\mathbf{D}_2,\ldots,\mathbf{D}_{\lceil N/(Ls)\rceil}$ and their corresponding multicast transmissions $X_1,X_2,\ldots,X_{\lceil N/(Ls) \rceil}$, along with the first $s$ caches $Z_{[1:s]}$, the whole library of files $F_{[1:N]}$ can be decoded. Considering $B=1$ without loss of generality. We have:
\begin{align}\label{eq:Lth1:7}
N &\leq I\left(F_{1:N};Z_{[1:s]}, {X}_{[1:{\lceil N/(Ls) \rceil}]}\right) \leq H\left( Z_{[1:s]}, {X}_{[1:{\lceil N/(Ls) \rceil}]}  \right) \leq H\left( Z_{[1:s]} \right) + H\left( {X}_{[1:{\lceil{N/(Ls)}\rceil}]} | Z_{[1:s]}\right)\nonumber\\
	& \leq sM + H\left(  {X}_{[1:{\lceil{N/(Ls)}\rceil}]} | Z_{[1:s]} \right) \leq sM + H\left( {X}_{[1:{\ell}]}| Z_{[1:s]}\right) + H\left( {X}_{[\ell + 1 : {\lceil{N/(Ls)}\rceil}]}| Z_{[1:s]}, {X}_{[1:{\ell}]} \right)\nonumber\\
	& \myleq{(a)} sM + \ell R_{\mathsf{cen,L}}^*(M) +  H\left({X}_{\left[\ell + 1 : {\lceil{N/(Ls)}\rceil}\right]}| Z_{[1:s]}, {X}_{[1:{\ell}]}, F_{[1:{L\ell s}]} \right)\nonumber\\
	& \myleq{(b)} sM + \ell R_{\mathsf{cen,L}}^*(M) + H\big( {X}_{[\ell + 1: \lceil{N/(Ls)}\rceil]},Z_{[s+1:s+\mu]}| Z_{[1:s]}, {X}_{[1:{\ell}]}, F_{[1:{L\ell s}]} \big)\nonumber\\
	& \leq sM + \ell R_{\mathsf{cen,L}}^*(M) + \underbrace{H\left(  Z_{[{s+1}:{s+\mu}]}| Z_{[1:s]}, {X}_{[1:{\ell}]}, F_{[1:{L \ell s}]} \right)}_{\triangleq \delta} \nonumber\\
	&~~~~~~~~~~~~~~~~~~~~~~~~~~~~~~~~~~~~~~~~~~~~~~~+ \underbrace{H\left( {X}_{[{\ell + 1}:{\lceil{N/(Ls)}\rceil}]}|  Z_{[1:{s+\mu}]}, {X}_{[1:{\ell}]}, F_{[1:{L\ell s}]} \right)}_{\triangleq\lambda},
\end{align}
where step \textsf{(a)} results from bounding the entropy of $\ell \in \{1,2,\ldots,\lceil{N/(Ls)}\rceil\}$ transmissions given the caches $Z_{[1:s]}$ by $\ell R_{\mathsf{cen,L}}^*(M)$, where each transmission is of rate $R_{\mathsf{cen,L}}^*(M)$. Furthermore, the caches $Z_{[1:s]}$ with transmissions ${X}_{[1:\ell]}$ can decode files $F_{[1: L\ell s]}$. In step \textsf{(b)}, $\mu$ number of caches are introduced into the entropy, where $\mu$ is the number of remaining caches which along with caches $Z_{[1:s]}$ and transmissions ${X}_{[1:\ell]}$, can decode the remaining $(N - L \ell s)$ files. It is to be noted that all the remaining $K-s$ caches might not be required for decoding all files. Thus we have: 
\begin{align}
\mu = \min\left\{ \left\lceil\frac{ N - L\ell s}{L\ell}\right\rceil, K-s\right\} = \min\left\{ \left\lceil {N}/{(L\ell)}\right\rceil, K\right\} - s,
\end{align}
where the last equality follows since $s$ is an integer. Next, we obtain upper bounds on the two terms $\delta$ and $\lambda$ in \eqref{eq:Lth1:7}.

\n \textbf{\textit{Upper Bound on}} $\bm{\delta}:$ We consider the factor $\delta$, from \eqref{eq:Lth1:7} and upper bound it as follows:
\begin{align}
\delta & = H\left(  Z_{[{s+1}:{s+\mu}]}| Z_{[1:s]}, {X}_{[1:{\ell}]}, F_{[1:{L\ell s}]} \right)  \leq   H\left(  Z_{[{s+1}:{s+\mu}]}| Z_{[1:s]}, F_{[1:{L\ell s}]}\right)\nonumber\\
			 & = H\left( Z_{[1:{s + \mu}]}|F_{[1:{L\ell s}]} \right) - H\left(  Z_{[1:s]}|F_{[1:{L\ell s}]} \right). \label{Lth1d:2}
\end{align}
Considering all possible subsets of $Z_{[1:{s + \mu}]}$ having cardinality $s$, i.e., considering all possible combinations of distinct files in the request vectors and all possible combinations of $s$ caches in \eqref{eq:Lth1:7}, we can obtain ${s+\mu \choose s}$ different inequalities of the form of (\ref{Lth1d:2}). Symmetrizing over all the inequalities, we have:
\begin{align}
&\delta \leq H\left( Z_{[1:{s + \mu}]} |F_{[1:{L\ell s}]}\right)  - \sum_{i=1}^{{s+\mu \choose s}} \frac{ H\left(\mathsf{Z}^i_{[s]}|F_{[1:{L\ell s}]} \right)}{{s+\mu \choose s}}\label{Lhan1},
\end{align} 
where, $\mathsf{Z}^i_{[s]}$ is the $i$-th subset of $Z_{[1:{s + \mu}]}$ with cardinality $s$. Next, consider $Z_{[1:s+\mu]}$ as the set of random variables $\{Z_{k} : k \in 1,\ldots, s+\mu\}$ and the subsets $\mathsf{Z}^i_{[s]} \subseteq Z_{[1:s + \mu]}, ~\forall i = 1,\ldots, {s+\mu \choose s}$. Applying Han's Inequality from \eqref{eHans}, we have:
\begin{align}
 \frac{s}{s+\mu}H\left( Z_{[1:s+\mu]} |F_{[1:{L\ell s}]}\right) \leq  \frac{1}{{s+\mu \choose s}}\sum_{i=1}^{{s+\mu \choose s}}  H\left(\mathsf{Z}^i_{[s]}|F_{[1:{L\ell s}]}\right)\label{Lhan2:2}.
\end{align}
Substituting (\ref{Lhan2:2}) into (\ref{Lhan1}), we have:
\begin{align} \label{eq:Lth1d:7}
\delta & \leq H\left( Z_{[1:s+\mu]} |F_{[1:{L\ell s}]}\right) - \frac{s}{s + \mu}H\left( Z_{[1:s+\mu]} |F_{[1:{L\ell s}]}\right)\nonumber\\
			 & = \frac{\mu}{s + \mu} H\left( Z_{[1:s+\mu]} |F_{[1:{L\ell s}]}\right) \leq \frac{\mu}{s + \mu} H\left(  Z_{[1:s+\mu]},F_{[{L\ell s+1}:N]}|F_{[1:{L\ell s}]} \right)\nonumber\\
			 & = \frac{\mu}{s + \mu}\left( H\left(F_{[{L\ell s+1}:N]}|F_{[1:{L\ell s}]} \right) +  \underbrace{H\left(  Z_{[1:s+\mu]}|F_{[1:N]} \right)}_{=0} \right) \myleq{(a)} ~\frac{\mu}{s + \mu}(N - L\ell s)^{+},
\end{align}
where step \textsf{(a)} follows from the fact that the caches are functions of all $N$ files in the library. 

 \n \textbf{\textit{Upper Bound on}} $\bm{\lambda}:$ To upper bound $\lambda$, we observe from the last step in \eqref{eq:Lth1:7} that the transmissions $X_{[1:\ell]}$, along with caches $Z_{[1:s+\mu]}$ can decode the files $F_{[1:L\ell(s+\mu)]}$ within the conditioning, i.e.,
\begin{align}
\lambda = H\left( {X}_{[{\ell + 1}:{\lceil{N/(Ls)}\rceil}]}|  Z_{[1:{s+\mu}]}, {X}_{[1:{\ell}]}, F_{[1:{L\ell(s + \mu)}]} \right).
\end{align}
In order to characterize the upper bound on $\lambda$, we consider two cases as follows.

\n $\bullet$ \textbf{\textit{Case}} $\bm{\mathit{1 \left(N \leq L\ell(s + \mu)\right):}}$ All files are decoded by the caches $Z_{[1:{s+\mu}]}$ and transmissions ${X}_{[1:{\ell}]}$ within the conditioning for the term $\lambda$ in \eqref{eq:Lth1:7}. We have
\begin{align}
\lambda & = H\left( {X}_{[{\ell + 1}:{\lceil{N/(Ls)}\rceil}]}|  Z_{[1:{s+\mu}]}, {X}_{[1:{\ell}]}, F_{[1:N]} \right) = 0\label{Lth1l:0}, 
\end{align}
since all transmissions are functions of the file library $F_{[1:N]}$. In the case when, for $N > K$, fewer than $K$ caches suffices to decode all files with the transmissions within the conditioning in $\lambda$ i.e. $s+\mu \leq K$, we have:
\begin{align}
KL\ell \geq L\ell(s + \mu) \geq N, ~~~\text{i.e.,}~~~ \lambda = (N - KL\ell)^+ ~~= 0.
\end{align} 
It can also be easily seen that for the case of $K\geq N$, $\lambda = (N - KL\ell)^+ = 0 $ since $\ell,L \geq 1$.

\n $\bullet$ \textbf{\textit{Case}} $\bm{\mathit{2 \left(N > L\ell(s + \mu)\right):}}$ The case when, even with $s+\mu = K$ caches, all files are not decoded by the caches and transmissions within the conditioning for the term $\lambda$ in \eqref{eq:Lth1:7}. In this case, $\lambda \neq 0$ and we have:
\begin{align} \label{Lth1l:4}
\lambda &= H\left( {X}_{[{\ell + 1}:{\lceil{N/(Ls)}\rceil}]}|  Z_{[1:{s+\mu}]}, {X}_{[1:{\ell}]}, F_{[1:{KL\ell}]} \right)\nonumber\\
				&\leq H\left({X}_{[{\ell + 1}:{\lceil{N/(Ls)}\rceil}]}, F_{[{KL\ell + 1}:{N}]}|  Z_{[1:{s+\mu}]}, {X}_{[1:{\ell}]}, F_{[1:{KL\ell }]} \right)\nonumber\\
        &\leq H\left(F_{[{KL\ell + 1}:{N}]}|  Z_{[1:{s+\mu}]}, {X}_{[1:{\ell}]}, F_{[1:{KL\ell }]}\right) + H\left({X}_{[{\ell + 1}:{\lceil{N/(Ls)}\rceil}]}|   Z_{[1:{s+\mu}]}, {X}_{[1:{\ell}]}, F_{[1:N]} \right)\nonumber\\
				& \myleq{(a)}  H\left(F_{[{KL\ell + 1}:{N}]}\right) \leq (N - KL\ell),
\end{align}
where step \textsf{(a)} follows from the fact that the second entropy term in the previous step goes to zero since transmissions are functions of the $N$ files. Thus from (\ref{Lth1l:0}) and (\ref{Lth1l:4}), we can compactly bound $\lambda$ as:
\begin{align}\label{eq:Llambda}
\lambda \leq (N - KL\ell)^+. 
\end{align}
Substituting (\ref{eq:Lth1d:7}) and (\ref{eq:Llambda}) into (\ref{eq:Lth1:7}), we have:
\begin{align} \label{eq:lbcen_final}
N ~~~& \leq sM + \ell R_{\mathsf{cen,L}}^*(M) + \frac{\mu }{s + \mu}(N - L\ell s)^{+} +  (N - KL\ell)^+ 
\end{align}
Rearranging \eqref{eq:lbcen_final}, we obtain the following lower bound on the optimal rate $R_{\mathsf{cen,L}}^*(M)$
\begin{align}
R_{\mathsf{cen,L}}^*(M) &\geq  \frac{1}{\ell}\left\{ N - sM - \frac{\mu }{s+\mu}(N - L\ell s )^{+} - (N - KL\ell )^{+}\right\}.
\end{align}
Optimizing over all parameter values of $s,\ell$, 
completes the proof of Theorem \ref{th:ldem}.

\section{Proof of Theorem \ref{th:ldem_gap}}\label{ap:ldem_gap}

\newcommand{\mo}{0.3049}
\newcommand{\mt}{0.442}
\newcommand{\lo}{1.275}
\newcommand{\lt}{0.2}
\newcommand{\bo}{0.9649}
\newcommand{\bt}{0.984}
\newcommand{\con}{10}

From Theorem \ref{th:ldem}, considering the lower bound on the optimal rate $R_{\mathsf{cen,L}}^*(M)$, we set $\ell = \left\lceil \frac{\beta N}{Ls}\right\rceil \in \left[1:\left\lceil \frac{N}{Ls}\right\rceil\right]$ with $\beta\in[0,1]$. Using this, we next derive an upper bound on the term $\left(\frac{\mu}{\mu + s} \right)$ as follows
\begin{align}\label{eq:boundmu_ldem}
\frac{\mu}{\mu + s} & = \frac{\min\left\{\left\lceil \frac{N}{L\ell}\right\rceil,K\right\} - s}{\min\left\{\left\lceil \frac{N}{L\ell}\right\rceil,K\right\}} \leq 1 - \frac{s}{\left\lceil \frac{N}{L\ell}\right\rceil} =  1 - \frac{s}{\left\lceil \frac{N}{L\left\lceil \frac{\beta N}{Ls}\right\rceil}\right\rceil}\leq 1 - \frac{s}{\left\lceil \frac{s}{\beta} \right\rceil} \nonumber\\
										&\leq 1 - \frac{s}{\frac{s}{\beta} + 1} = 1 - \frac{\beta}{1 + \frac{\beta}{s}} \leq 1 - \frac{\beta}{1 + \beta} = \frac{1}{1+\beta},
\end{align}
where the last inequality follows from the fact that $s\geq 1$. Substituting (\ref{eq:boundmu_ldem}) into (\ref{eq:ldem}), we have:
\begin{align}\label{eq:ldem_lb}
R_{\mathsf{cen,L}}^*(M) & \geq \frac{N - sM - \frac{1}{1 + \beta}\left(N - L\left\lceil\frac{\beta N}{Ls}\right\rceil s\right)^+ - \left(N - KL\left\lceil\frac{\beta N}{Ls}\right\rceil\right)^+}{\left\lceil\frac{\beta N}{Ls}\right\rceil}\nonumber\\
				  &\geq \frac{\left(\frac{2\beta}{1 + \beta}\right)N - sM -  N\left(1 - K\frac{\beta}{s}\right)^+}{\left\lceil\frac{\beta N}{Ls}\right\rceil}.
\end{align}

\n Next, we consider two cases, namely $(i)$ $\min\left\{\frac{N}{L},K\right\}\leq \con$; and $(ii)$ $\min\left\{\frac{N}{L},K\right\}\geq 11$. 

\n $\bullet$ \textbf{\textit{Case}} $\bm{\mathit{1 \left(\min\left\{\frac{N}{L},K\right\}\leq \con\right):}}$
For this case, setting $s=1$ and $\beta=1$ in \eqref{eq:ldem_lb}, we have the following form on the lower bound,
\begin{align}
R^*_{\mathsf{cen,L}}(M)\geq \frac{N\left(1 - \frac{M}{N}\right)}{\left\lceil \frac{N}{L} \right\rceil}
\end{align}
Consider first, the case when $\frac{N}{L}\leq K$. From \eqref{eq:ldem_ach}, we have the following upper bound on the achievable rate 
\begin{align}
R_{\mathsf{cen,L}}(M)\leq \min\{N,KL\}\left(1 - \frac{M}{N}\right) \leq N\left(1 - \frac{M}{N}\right).
\end{align}
Therefore, we have 
\begin{align}\label{eq:ldgap_1}
\text{Gap} = \frac{R_{\mathsf{cen,L}}(M)}{R^*_{\mathsf{cen,L}}(M)} \leq \left\lceil \frac{N}{L} \right\rceil \leq 10.
\end{align}
Next, consider the case when $K\leq \frac{N}{L}$. Again, from \eqref{eq:ldem_ach}, we have the following upper bound on the achievable rate
\begin{align}
R_{\mathsf{cen,L}}(M)\leq \min\{N,KL\}\left(1 - \frac{M}{N}\right) \leq KL\left(1 - \frac{M}{N}\right).
\end{align}
Again, setting $s=1$ and $\beta=1$ in \eqref{eq:ldem_lb}, we have
\begin{align}
R^*_{\mathsf{cen,L}}(M)\geq \frac{N\left(1 - \frac{M}{N}\right)}{\frac{N}{L}+1} = \frac{L\left(1 - \frac{M}{N}\right)}{1+\frac{L}{N}} \geq \frac{KL\left(1 - \frac{M}{N}\right)}{1+K}
\end{align}
Therefore, we have
\begin{align}\label{eq:ldgap_2}
\text{Gap} = \frac{R_{\mathsf{cen,L}}(M)}{R^*_{\mathsf{cen,L}}(M)} \leq K+1  \leq \con+1 = 11.
\end{align}

\n $\bullet$ \textit{\textbf{Case}} $\bm{\mathit{2 \left(\min\left\{\frac{N}{L},K\right\}\geq 11\right):}}$
For this case, we consider three distinct regimes for the cache storage size $M$: \textit{Regime 1:} ~ $0 \leq M \leq \lo \max\left\{L,{N}/{K}\right\}$; {\textit{Regime 2:}}~ $\lo \max\left\{L,{N}/{K}\right\} < M \leq \lt N$; and {\textit{Regime 3:}} ~ $\lt N < M \leq N$. We consider each of the three regimes separately.

\begin{itemize}
\item \textbf{\textit{Regime}}~$\bm{\mathit{1 \left(0 \leq M \leq \lo \max\left\{L,{N}/{K}\right\}\right):}}$ 

For this regime, we set $s = \lfloor \mo \min\{N/L,K\} \rfloor \in [1:\min\{N/L,K\}]$ and $\ell = \left\lceil \frac{\bo N}{Ls} \right\rceil$, from \eqref{eq:ldem_lb}, we have
\begin{align}\label{eq:reg1}
&R^*_{\mathsf{cen,L}}(M)~\geq~ \frac{\left(\frac{2\times\bo}{1 + \bo}\right) - s\frac{M}{N} -  \left(1 - K\frac{\bo}{s}\right)^+}{\frac{\bo}{Ls} + \frac{1}{N}}\nonumber\\
												&= \frac{\left(\frac{2\times\bo}{1 + \bo}\right) - \lfloor \mo \min\{N/L,K\} \rfloor\frac{\lo \max\left\{L,{N}/{K}\right\}}{N} -  \left(1 - K\frac{\bo}{\lfloor \mo \min\{N/L,K\} \rfloor}\right)^+}{\frac{\bo}{L\lfloor \mo \min\{N/L,K\} \rfloor} + \frac{1}{N}}\nonumber\\
												&\mygeq{(a)} \frac{\left(\frac{2\times\bo}{1 + \bo}\right) - (\mo\times\lo)  \frac{\min\{N/L,K\}\max\left\{L,{N}/{K}\right\}}{N} -  \left(1 - K\frac{\bo}{\mo \min\{N/L,K\}}\right)^+}{\frac{\bo}{L\left(\mo \min\{N/L,K\}-1\right)} + \frac{1}{N}}\nonumber\\
												&\geq \frac{L\min\left\{\frac{N}{L},K\right\}\left(\mo - \frac{1}{\min\left\{\frac{N}{L},K\right\}} \right) \left\{ \left(\frac{2\times\bo}{1 + \bo}\right) - (\mo\times\lo) -  \left(1 - \frac{\bo}{\mo}\right)^+\right\}}{\bo +  \frac{L\left(\mo\min\left\{\frac{N}{L},K\right\} -1\right)}{N}} \nonumber\\
												&\mygeq{(b)} \frac{\min\left\{N,KL\right\}\left(\mo - \frac{1}{\con+1} \right) \left\{ \left(\frac{2\times\bo}{1 + \bo}\right) - (\mo\times\lo) -  \left(1 - \frac{\bo}{\mo}\right)^+\right\}}{\bo + \mo} \nonumber\\
												&\geq \frac{\min\left\{N,KL\right\}}{\con},
\end{align}
where step \textsf{(a)} follows by using $\lfloor \mo \min\{N/L,K\}\rfloor \leq \mo \min\{N/L,K\}$ in the numerator and $\lfloor \mo \min\{N/L,K\}\rfloor \geq \mo \min\{N/L,K\}-1$ in the denominator; and step \textsf{(b)} follows by using $\min\{N/L,K\}\leq N/L$ in the second term in the denominator. Again, considering the upper bound in \eqref{eq:ldem_ach}, we have
\begin{align}
R_{\mathsf{cen,L}}(M)\leq \min\{N,KL\}\left(1 - \frac{M}{N}\right) \leq \min\{N,KL\}.
\end{align}
Therefore for \textit{Regime $\mathit{1}$}, we have
\begin{align}\label{eq:ldgap_3}
\text{Gap} = \frac{R_{\mathsf{cen,L}}(M)}{R^*_{\mathsf{cen,L}}(M)} \leq \con.
\end{align}

\item \textbf{\textit{Regime}}~$\bm{\mathit{2 \left(\lo \max\left\{L,{N}/{K}\right\} < M \leq \lt N\right):}}$ 

For this regime, setting $s = \left\lfloor \mt \frac{N}{M}\right\rfloor \in [1:\min\{N/L,K\}]$\footnote{The range of $s$ is validated as follows. Using the upper bound $M\leq \lt N$, we have $\mt N/M \geq \mt/\lt \geq 1$. Again using the lower bound $M \geq \lo L$, we have $\mt N/M \leq \frac{\mt}{\lo}N/L \leq N/L$. Again using $M\geq \lo N/K$, we have $\mt N/M \leq K$. } and $\ell = \left\lceil \frac{\bt N}{Ls}\right\rceil$, from \eqref{eq:ldem_lb}, we have
\begin{align}
R^*_{\mathsf{cen,L}}(M) &\geq \frac{\left(\frac{2\times\bt}{1 + \bt}\right) - s\frac{M}{N} -  \left(1 - K\frac{\bt}{s}\right)^+}{\frac{\bt}{Ls} + \frac{1}{N}} \nonumber\\
												&= \frac{\left(\frac{2\times\bt}{1 + \bt}\right) - \left\lfloor \mt \frac{N}{M}\right\rfloor\frac{M}{N} -  \left(1 - K\frac{\bt}{\left\lfloor \mt \frac{N}{M}\right\rfloor}\right)^+}{\frac{\bt}{L\left\lfloor \mt \frac{N}{M}\right\rfloor} + \frac{1}{N}}\nonumber\\
												&\mygeq{(a)} \frac{\left(\frac{2\times\bt}{1 + \bt}\right) -  \mt \frac{N}{M}\frac{M}{N} -  \left(1 - \frac{\bt}{\mt} \frac{KM}{N}\right)^+}{\frac{\bt}{L\left(\mt \frac{N}{M} - 1\right)} + \frac{1}{N}} \nonumber\\
												&\mygeq{(b)} \frac{\frac{LN}{M} \left(\mt - \frac{M}{N} \right)\left\{\left(\frac{2\times\bt}{1 + \bt}\right) -  \mt -  \left(1 - \frac{\bt}{\mt}\times \lo \right)^+\right\}}{\bt + \frac{\mt}{\lo}\frac{L}{M}}\nonumber\\
												&\mygeq{(c)} \frac{\frac{LN}{M} \left(\mt - \lt \right)\left\{\left(\frac{2\times\bt}{1 + \bt}\right) -  \mt -  \left(1 - \frac{\bt\times \lo}{\mt} \right)^+\right\}}{\bt + \frac{\mt}{\lo}} \geq \frac{LN}{10 M},
\end{align}
where step \textsf{(a)} follows again by using $\lfloor\mt N/M \rfloor \leq \mt N/M$ in the numerator and $\lfloor \mt N/M\rfloor \geq \mt N/M-1$ in the denominator; step \textsf{(b)} follows from using $KM/N \geq \lo$; and step \textsf{(c)} follows by using $M/N \leq \lt$ in the numerator and $M\geq \lo L$ in the denominator. Again considering the upper bound in \eqref{eq:ldem_ach}, we have
\begin{align}
R_{\mathsf{cen,L}}(M)\leq \frac{KL\left(1 - \frac{M}{N}\right)}{1 + \frac{KM}{N}} \leq \frac{LN}{M}\left(1 - \frac{M}{N}\right)\leq \frac{LN}{M}.
\end{align}
Therefore for \textit{Regime 2}, we have
\begin{align}\label{eq:ldgap_4}
\text{Gap} = \frac{R_{\mathsf{cen,L}}(M)}{R^*_{\mathsf{cen,L}}(M)} \leq \con.
\end{align}

\item \textbf{\textit{Regime}}~$\bm{\mathit{3 \left(\lt N < M \leq  N\right):}}$ 

In this regime, setting $s=1$ and $\ell = \left\lceil\frac{N}{L}\right\rceil$ in \eqref{eq:ldem_lb}, we note that in this case, $\mu=0$ and $(N-K\ell)^+=0$. Thus, we have
\begin{align}
R^*_{\mathsf{cen,L}}(M) & \geq \frac{N\left(1 - \frac{M}{N}\right)}{\frac{N}{L}+1} \geq \frac{\left(1 - \frac{M}{N}\right)}{\frac{1}{L}+\frac{1}{N}}.
\end{align}
From \eqref{eq:ldem_ach}, we have
\begin{align}
R_{\mathsf{cen,L}}(M)\leq \frac{KL\left(1 - \frac{M}{N}\right)}{1 + \frac{KM}{N}} \leq \frac{LN}{M}\left(1 - \frac{M}{N}\right).
\end{align}
Therefore for \textit{Regime 3}, we have
\begin{align}\label{eq:ldgap_5}
\mathrm{Gap} = \frac{R_{\mathsf{cen,L}}(M)}{R^*_{\mathsf{cen,L}}(M)} \leq \frac{LN}{M}\left(\frac{1}{L}+\frac{1}{N}\right) \leq \frac{2N}{M} \leq \frac{2}{\lt} \leq \con
\end{align}
\end{itemize}
Combining \eqref{eq:ldgap_1},\eqref{eq:ldgap_2},\eqref{eq:ldgap_3},\eqref{eq:ldgap_4} and \eqref{eq:ldgap_5}, completes the proof of Theorem \ref{th:ldem_gap}. \qed


\section{Proof of Theorem \ref{th:gapfund}}\label{ap:gapfund}
From Corollary \ref{cor:convfund}, considering the lower bound on the optimal rate $R_{\mathsf{cen}}^*(M)$, we set $\ell = \left\lceil \frac{\beta N}{s}\right\rceil \in \{1,2,\ldots,\left\lceil \frac{N}{s}\right\rceil \}$ with $0 <\beta \leq 1$. Using this we next derive an upper bound on $\left(\frac{\mu}{\mu + s} \right)$.
\begin{align}\label{eq:boundmu}
\frac{\mu}{\mu + s} & = \frac{\min\left\{\left\lceil \frac{N - \ell s}{\ell}\right\rceil, K - s\right\}}{\min\left\{\left\lceil \frac{N - \ell s}{\ell}\right\rceil, K - s\right\} + s} = \frac{\min\left\{\left\lceil \frac{N}{\ell}\right\rceil,K\right\} - s}{\min\left\{\left\lceil \frac{N}{\ell}\right\rceil,K\right\}}\nonumber\\
										& = 1 - \frac{s}{\min\left\{\left\lceil \frac{N}{\ell}\right\rceil,K\right\}} =  1 - \frac{s}{\min\left\{\left\lceil \frac{N}{\left\lceil \frac{\beta N}{s}\right\rceil}\right\rceil,K\right\}}\nonumber\\
										&\leq 1 - \frac{s}{\left\lceil \frac{s}{\beta} \right\rceil} \leq 1 - \frac{s}{\frac{s}{\beta} + 1} = 1 - \frac{\beta}{1 + \frac{\beta}{s}} \leq 1 - \frac{\beta}{1 + \beta} = \frac{1}{1+\beta},
\end{align}
where the last inequality follows from the fact that $s\geq 1$. Substituting (\ref{eq:boundmu}) into (\ref{eq:convfund}), we have:
\begin{align}\label{eq:convlb}
R_{\mathsf{cen}}^*(M) & \geq \frac{N - sM - \frac{1}{1 + \beta}\left(N - \left\lceil\frac{\beta N}{s}\right\rceil s\right)^+ - \left(N - K\left\lceil\frac{\beta N}{s}\right\rceil\right)^+}{\left\lceil\frac{\beta N}{s}\right\rceil}\nonumber\\
				  &\geq \frac{N - sM - N\left(\frac{1 - \beta}{1 + \beta}\right) - N\left(1 - K\frac{\beta}{s}\right)^+}{\left\lceil\frac{\beta N}{s}\right\rceil}.
\end{align}

\n Next, we consider two cases, namely $(ii)$ $\min\{N,K\}\leq 8$; and $(ii)$ $\min\{N,K\}\geq 9$. We next consider each case separately.

\n $\bullet$ \textbf{\textit{Case}} $\bm{\mathit{1 \left(\min\{N,K\}\leq 8\right):}}$ For this case, setting $s=1$ and $\beta = 1$ in (\ref{eq:convlb}), we have:
\begin{align}\label{eq:fund_c1_lb}
R_{\mathsf{cen}}^*(M) \geq \frac{N - M}{N} = \left( 1 - \frac{M}{N}\right).
\end{align}
Again, from \cite[Theorem 1]{Maddah-Ali}, we have:
\begin{align}\label{eq:fund_c1_ub}
R_{\mathsf{cen}}(M) \leq \min\{N,K\}\left(1 - \frac{M}{N} \right).
\end{align}
Thus for this case, the gap between the upper and lower bound is given by:
\begin{align}\label{eq:gapc1}
\mathrm{Gap} = \frac{R_{\mathsf{cen}}(M)}{R_{\mathsf{cen}}^*(M)} \leq \min\{N,K\} \leq 8.
\end{align}

\n $\bullet$ \textbf{\textit{Case}} $\bm{\mathit{2 \left(\min\{N,K\}\geq 9\right):}}$ For this case, we consider three distinct regimes for the cache storage size $M$ namely $(i)$ \textit{Regime 1:} ~ $0 \leq M \leq 1.01 \max\left\{1,{N}/{K}\right\}$; $(ii)$ {\textit{Regime 2:}}~ $1.01 \max\left\{1,{N}/{K}\right\} < M \leq 0.1250N$; and $(ii)$ {\textit{Regime 3:}} ~ $0.1250N < M \leq N$. 
We next consider each of the three regimes separately.

\begin{itemize}
\item \textbf{\textit{Regime}} $\bm{\mathit{1 \left(0 \leq M \leq 1.01 \max\left\{1,{N}/{K}\right\}\right):}}$ 

\n In this regime, setting $s = \lfloor 0.4701 \min\{N,K\}\rfloor \in \{1,2,\ldots,\min\{N,K\}\}$, $\ell = \left\lceil\frac{0.93N}{s} \right\rceil$, and using the fact that $x \leq \lceil x\rceil\leq x+1$ and $x-1 \leq \lfloor x\rfloor \leq x$, we have:
\begin{align}
R_{\mathsf{cen}}^*(M) &\geq \frac{N - sM - N\left(\frac{1 - \beta}{1+\beta}\right) - N\left(1 - \frac{K\beta}{s}\right)^+}{\frac{\beta N}{s} + 1}\geq \frac{N\left[\frac{2\beta}{1+\beta}  - s\frac{M}{N} - \left(1 - \frac{K\beta}{s}\right)^+\right]}{\frac{\beta N}{s} + 1}\nonumber\\
					&\geq \frac{\Bigg\{\frac{2\times0.93}{1+ 0.93}  - \lfloor 0.4701 \min\{N,K\}\rfloor\frac{M}{N} - \left(1 - \frac{0.93 K}{\lfloor 0.4701 \min\{N,K\}\rfloor}\right)^+ \Bigg\}}{\frac{0.93}{\lfloor 0.4701 \min\{N,K\}\rfloor} + \frac{1}{N}}  \nonumber\\
					&\geq \frac{\Bigg\{\frac{2\times0.93}{1+ 0.93}  - 0.4701 \min\{N,K\}\frac{1.01 \max\left\{1,{N}/{K}\right\}}{N} - \left(1 - \frac{0.93 K}{0.4701 \min\{N,K\}}\right)^+\Bigg\}}{\frac{0.93 }{0.4701 \min\{N,K\} - 1} + \frac{1}{N}}\nonumber\\
					&\geq \frac{\Bigg\{\left(0.4701 \min\{N,K\} - 1\right)\Bigg[\frac{2\times0.93}{1+ 0.93}  - 0.4701 \times 1.01 - \left(1 - \frac{0.93}{0.4701}\right)^+\Bigg]\Bigg\}}{0.93  + \frac{0.4701 \min\{N,K\}}{N} - \frac{1}{N}}\nonumber\\
					&\geq \min\{N,K\}\frac{\left(0.4701 - \frac{1}{9}\right)}{0.93  + 0.4701} \Bigg[\frac{2\times0.93}{1+ 0.93}  - 0.4701 \times 1.01 - \left(1 - \frac{0.93}{0.4701}\right)^+\Bigg]\nonumber\\
					&\geq \frac{\min\{N,K\}}{8}.
\end{align}

Again, from \cite[Theorem 1]{Maddah-Ali}, we have:
\begin{align}
R_{\mathsf{cen}}(M) \leq \min\{N,K\}\left(1 - \frac{M}{N} \right) \leq \min\{N,K\}.
\end{align}
Thus for this regime, the gap between the upper and lower bound is given by:
\begin{align}\label{eq:gapc2r1}
\mathrm{Gap} = \frac{R_{\mathsf{cen}}(M)}{R_{\mathsf{cen}}^*(M)} \leq 8.
\end{align}

\item \textbf{\textit{Regime}} $\bm{\mathit{2 \left(1.01 \max\left\{1,{N}/{K}\right\} < M \leq 0.1250N \right):}}$

\n In this regime, we set $s = \lfloor 0.4983 \frac{N}{M}\rfloor \in \{1,2,\ldots,\min\{N,K\}\}$, $\ell = \left\lceil\frac{0.991N}{s} \right\rceil$ and using the fact that $x \leq \lceil x\rceil\leq x+1$ and $x-1 \leq \lfloor x\rfloor \leq x$, we have:
\begin{align}
R_{\mathsf{cen}}^*(M)	& \geq \frac{N\left[\frac{2\beta}{1+\beta}  - s\frac{M}{N} - \left(1 - \frac{K\beta}{s}\right)^+\right]}{\frac{\beta N}{s} + 1} \geq \frac{N\left[\frac{2\times 0.991}{1+ 0.991}  - 0.4983 - \left(1 - \frac{0.991}{0.4983}\frac{KM}{N}\right)^+\right]}{\frac{0.991 N}{0.4983 \frac{N}{M} -1} + 1}\nonumber\\
					&\geq \frac{N\left[\frac{2\times 0.991}{1+ 0.991}  - 0.4983 - \left(1 - \frac{0.991\times 1.01}{0.4983}\right)^+\right]}{\frac{0.991 N}{0.4983 \frac{N}{M} -1} + 1}\nonumber\\
					&\geq \frac{\left(0.4983 \frac{N}{M} -1\right)\left[\frac{2\times 0.991}{1+ 0.991}  - 0.4983 - \left(1 - \frac{0.991\times 1.01}{0.4983}\right)^+\right]}{0.991  + 0.4983 \frac{1}{M} - \frac{1}{N}}\nonumber\\
					&\geq \frac{N}{M}\frac{\left(0.4983 - \frac{M}{N}\right)\left[\frac{2\times 0.991}{1+ 0.991}  - 0.4983 - \left(1 - \frac{0.991\times 1.01}{0.4983}\right)^+\right]}{0.991  + \frac{0.4983}{1.01}}\nonumber\\
					&\geq \frac{N}{M}\frac{\left(0.4983 - 0.1250\right)}{0.991  + \frac{0.4983}{1.01}}  \Bigg[\frac{2\times 0.991}{1+ 0.991}  - 0.4983 - \left(1 - \frac{0.991\times 1.01}{0.4983}\right)^+\Bigg] \geq \frac{N}{8M}.
\end{align}

Again, from \cite[Theorem 1]{Maddah-Ali}, we have:
\begin{align}
R_{\mathsf{cen}}(M) &\leq \frac{\min\{N,K\}}{1 + \frac{KM}{N}}\left(1 - \frac{M}{N} \right) \leq \frac{K}{\frac{KM}{N}}\left(1 - \frac{M}{N} \right) \nonumber\\
				  &\leq \frac{N}{M}\left(1 - \frac{M}{N} \right)\leq \frac{N}{M}.
\end{align}
Thus for this regime, the gap between the upper and lower bound is given by:
\begin{align}\label{eq:gapc2r2}
\mathrm{Gap} = \frac{R_{\mathsf{cen}}(M)}{R_{\mathsf{cen}}^*(M)} \leq 8.
\end{align}

\item \textbf{\textit{Regime}} $\bm{\mathit{3 \left(0.1250 N < M \leq N \right):}}$

\n In this regime, setting $s = 1$ and $\beta = 1$ i.e., $\ell = N$, we have:
\begin{align}
R_{\mathsf{cen}}^*(M) \geq \frac{N - M}{N} = \left( 1 - \frac{M}{N}\right).
\end{align}
Again, from \cite[Theorem 1]{Maddah-Ali}, we have:
\begin{align}
R_{\mathsf{cen}}(M) &\leq \frac{\min\{N,K\}}{1 + \frac{KM}{N}}\left(1 - \frac{M}{N} \right) \leq \frac{K}{\frac{KM}{N}}\left(1 - \frac{M}{N} \right) \nonumber\\
					&\leq \frac{N}{M}\left(1 - \frac{M}{N} \right) \leq \frac{1}{0.1250}\left(1 - \frac{M}{N} \right).
\end{align}
Thus for this regime, the gap between the upper and lower bound is given by:
\begin{align}\label{eq:gapc2r3}
\mathrm{Gap} = \frac{R_{\mathsf{cen}}(M)}{R_{\mathsf{cen}}^*(M)}\leq \frac{1}{0.1250} = 8.
\end{align}
\end{itemize}

Thus from (\ref{eq:gapc1}), (\ref{eq:gapc2r1}), (\ref{eq:gapc2r2}) and (\ref{eq:gapc2r3}), we have for all $N,K$, the gap between the achievability and the proposed converse is upper bounded by $8$. This completes proof of Theorem \ref{th:gapfund}. \qed


\section{Proof of Theorem \ref{th:d2d_ldem}}\label{ap:d2d_ldem}

Consider the case of D2D-assisted content delivery for cache-aided system with a library of $N\in\mathbb{N}^+$ files $F_{[1:N]}$ each of size $B$ bits, and $K\in\mathbb{N}^+$ users, with cache storage $Z_{[1:K]}$ which satisfies the minimum D2D storage constraint $KM\geq N$. Let $s$ be an integer such that $s \in \left[1:\min\{\lceil N/L\rceil,K\}\right]$. The demand vector is such that each user requests $L$ distinct files at each transmission interval. Consider the first $s$ caches $Z_{[1:s]}$ and a demand vector 
\begin{align}
\mathbf{D}_1 &= \left(\mathbf{d}_{[1:s]},\mathbf{d}_{[s+1:K]}\right)=  \Bigg(\underbrace{\{1:L\},\{L+1:2L\},\ldots,\{L(s-1)+1:Ls\}}_{=~\mathbf{d}_{[1:s]}},\phi\Bigg),
\end{align}
where the first $s$ user demands are for $Ls$ unique files and last $K-s$ users' demands can be for any arbitrary $L(K-s)$ files. To service this set of demands, consider a composite multicast transmission
\begin{align}
X_1 = \left\{ 
											X^{s+1}_{ \left(\mathbf{d}_{[1:s]},\phi\right)},
											X^{s+2}_{\left(\mathbf{d}_{[1:s]},\phi\right)},
											\ldots,
											X^{K}_{\left(\mathbf{d}_{[1:s]},\phi\right)}
											\right\},
\end{align}
composed of $(K-s)$ multicast transmissions, which, along with the $s$ device caches decodes the files $F_{[1:Ls]}$. Similarly consider another demand vector, 
\begin{align}
\mathbf{D}_2 = \Big(\{Ls+1:L(s+1)\},\{L(s+1)+1:L(s+2)\},\ldots,\{L(2s-1):2Ls\},\phi\Big).
\end{align}
A second composite multicast transmission $X_2$,
along with device cache contents $Z_{[1:s]}$, can decode the next $Ls$ files $F_{[{Ls+1}:{2Ls}]}$. Thus considering the request vectors $\mathbf{D}_1,\mathbf{D}_2,\ldots,\mathbf{D}_{\lceil N/(Ls)\rceil}$ and their corresponding composite multicast transmissions $X_1,X_2,\ldots,X_{\lceil N/(Ls) \rceil}$, along with the first $s$ device caches $Z_{[1:s]}$, the whole library of files $F_{[1:N]}$ can be decoded. 
 Note that for an optimal composite transmission rate $R_{\mathsf{d2d}}^*(M)$, each device in the D2D cluster multicasts with a rate of $R_{\mathsf{d2d}}^*(M)/K$. Thus the entropy of each composite transmission, consisting of $K-s$ transmissions, can be upper bounded as
\begin{align}\label{eq:ldem_d2d_rate}
&H\left(X_i\right)~ \leq~ \frac{(K-s)}{K}R_{\mathsf{d2d,L}}^*(M), ~~~~~~\forall i \in \left[1: \lceil N/(Ls) \rceil\right]
\end{align}
Considering $B=1$ without loss of generality, we have the following sequence of inequalities
\begin{align}
N &\leq I\left(F_{1:N};Z_{[1:s]}, {X}_{[1:{\lceil N/(Ls) \rceil}]}\right)\leq H\left( Z_{[1:s]}, {X}_{[1:{\lceil N/(Ls) \rceil}]}  \right) \leq H\left( Z_{[1:s]} \right) + H\left( {X}_{[1:{\lceil{N/(Ls)}\rceil}]} | Z_{[1:s]}\right)\nonumber\\
	& \leq sM + H\left(  {X}_{[1:{\lceil{N/(Ls)}\rceil}]} | Z_{[1:s]} \right) \leq sM + H\left( {X}_{[1:{\ell}]}| Z_{[1:s]}\right) + H\left( {X}_{[\ell + 1 : {\lceil{N/(Ls)}\rceil}]}| Z_{[1:s]}, {X}_{[1:{\ell}]} \right)\nonumber\\
	& \myleq{(a)} sM + \frac{\ell(K-s)}{K} R_{\mathsf{d2d,L}}^*(M) +  H\left({X}_{\left[\ell + 1 : {\lceil{N/(Ls)}\rceil}\right]}| Z_{[1:s]}, {X}_{[1:{\ell}]}, F_{[1:{L\ell s}]} \right)\nonumber\\
	& \myleq{(b)} sM + \frac{\ell(K-s)}{K} R_{\mathsf{d2d,L}}^*(M) + H\big( {X}_{[\ell + 1: \lceil{N/(Ls)}\rceil]},Z_{[s+1:s+\mu]}| Z_{[1:s]}, {X}_{[1:{\ell}]}, F_{[1:{L\ell s}]} \big)\nonumber\\
	& \leq sM + \frac{\ell(K-s)}{K} R_{\mathsf{d2d,L}}^*(M) + \underbrace{H\left(  Z_{[{s+1}:{s+\mu}]}| Z_{[1:s]}, {X}_{[1:{\ell}]}, F_{[1:{L \ell s}]} \right)}_{\triangleq ~\delta} \nonumber\\
	& \hspace{195pt}+ \underbrace{H\left( {X}_{[{\ell + 1}:{\lceil{N/(Ls)}\rceil}]}|  Z_{[1:{s+\mu}]}, {X}_{[1:{\ell}]}, F_{[1:{L\ell s}]} \right)}_{\triangleq~\lambda}\label{Ldth1:7},
\end{align}
where step \textsf{(a)} follows from \eqref{eq:ldem_d2d_rate} and that the device storage contents, $Z_{[1:s]}$, along with the composite transmission vectors $X_{[1:{\ell}]}$ are capable of decoding the files $F_{[1:{L\ell s}]}$. In step \textsf{(b)}, $\mu = \left(\min\left\{ \left\lceil N/(L\ell)\right\rceil, K\right\} - s\right)$ is the number of additional device caches which, along with the transmissions $X_{[1:{\ell}]}$ can decode all $N$ files. Note that, for $s=K$, we have:
\begin{align}
H\left( X_{[1:{\lceil{N/(Ls)}\rceil}]} | Z_{[1:s]} \right) = 0,
\end{align}
since transmissions are functions of all $K$ caches. As a result, the second step in \eqref{Ldth1:7} yields the minimum storage constraint for D2D-assisted delivery $KM\geq N$. Next we upper bound the terms $\delta,\lambda$ in \eqref{Ldth1:7} which finally yields an upper bound on the RHS. We first note that the term $\delta$ is identical to the case of centralized delivery and can be upper bounded using Han's Inequality by following the same steps as in \eqref{Lth1d:2}-\eqref{eq:Lth1d:7} in Appendix \ref{ap:ldem}, yielding the upper bound
\begin{align}\label{eq:del_d2d}
\delta \leq \frac{\mu}{s + \mu}(N - L\ell s)^{+}. 
\end{align}


\n \textbf{\textit{Upper Bound on}} $\bm{\lambda}:$ We next derive an upper bound on the factor $\lambda$ in \eqref{Ldth1:7} and consider two distinct cases as follows.

\n $\bullet$ \textbf{\textit{Case}} $\bm{\mathit{1 \left(N \leq L\ell(s + \mu)\right):}}$ We consider the case that all $N$ files can be decoded with $\mu \leq K-s$ additional device storage contents and transmissions $X_{[1:{\ell}]}$, within the conditioning in the factor $\lambda$ in (\ref{Ldth1:7}), i.e., $L\ell(s + \mu )\geq N$. Thus, we have
\begin{align}
\lambda &= H\left( X_{[{\ell + 1}:{\lceil{N/(Ls)}\rceil}]}|  Z_{[1:{s+\mu}]}, F_{[1:{N}]} \right) =0, \label{th2l:1}
\end{align}
which follows from the fact that the transmissions are functions of all $N$ files.

\n $\bullet$ \textbf{\textit{Case}} $\bm{\mathit{2 \left(N > L\ell(s + \mu)\right):}}$ We consider the complementary case where $\mu = K - s$ additional device storage contents along with the transmissions $X_{[1:{\ell}]}$, cannot decode all $N$ files. We have:
\begin{align}
\lambda &= H\left( X_{[{\ell + 1}:{\lceil{N/(Ls)}\rceil}]}|  Z_{[1:{K}]}, F_{[1:{KL\ell}]} \right) \leq  H\left( X_{[{\ell + 1}:{\lceil{N/(Ls)}\rceil}]}|  Z_{[1:{K}]}\right) = 0, \label{th2l:2}
\end{align}
which follows from the fact that $KM\geq N$ i.e., all files are stored within the collective device caches for D2D-assisted delivery and hence all transmissions are functions of the cache contents. Thus combining (\ref{th2l:1}) and (\ref{th2l:2}) we have: 
\begin{align}\label{Ldth2l:3}
\lambda = 0.
\end{align}
Substituting (\ref{eq:del_d2d}) and (\ref{Ldth2l:3}) into (\ref{Ldth1:7}) 
and optimizing over all parameter values of $s,\ell$, 
 completes the proof of Theorem \ref{th:d2d_ldem}. \qed

\section{Proof of Theorem \ref{th:d2d_ldem_gap}}\label{ap:d2d_ldem_gap}

\newcommand{\al}{\alpha}
\renewcommand{\mo}{\mu_1} 
\renewcommand{\lo}{\lambda_1} 
\renewcommand{\lt}{\lambda_2} 
\renewcommand{\bo}{\beta_1} 
\renewcommand{\bt}{\beta_2} 
\renewcommand{\con}{10} 

\newcommand{\ceiling}[1]{\left\lceil #1 \right\rceil}
\newcommand{\flooring}[1]{\left\lfloor #1 \right\rfloor}

\renewcommand{\al}{0.5}
\renewcommand{\mo}{0.51} 
\renewcommand{\lo}{0.2} 
\renewcommand{\lt}{\frac{1}{3}} 
\renewcommand{\bo}{0.984} 
\renewcommand{\bt}{0.5} 
\renewcommand{\con}{10}

From Theorem \ref{th:d2d_ldem}, considering the lower bound on the optimal rate $R_{\mathsf{d2d,L}}^*(M)$, we set $\ell = \left\lceil \frac{\beta N}{Ls}\right\rceil \in \left[1:\left\lceil \frac{N}{Ls}\right\rceil \right]$ with $\beta \in [0,1]$. We make use of the upper bound on $\left(\frac{\mu}{\mu + s} \right)$ from \eqref{eq:boundmu_ldem} in Appendix \ref{ap:ldem_gap}. Using this in \eqref{eq:d2d_ldem} from Theorem \ref{th:d2d_ldem}, we have
\begin{align}\label{eq:Ld2d_lb}
R_{\mathsf{d2d,L}}^*(M) & \geq \frac{N - sM - \frac{1}{1 + \beta}\left(N - L\left\lceil\frac{\beta N}{Ls}\right\rceil s\right)^+}{\left\lceil\frac{\beta N}{Ls}\right\rceil\left(\frac{K-s}{K}\right)} \geq \frac{N\left(\frac{2\beta}{1+\beta} - s\frac{M}{N}\right)}{\left\lceil\frac{\beta N}{Ls}\right\rceil\left(\frac{K-s}{K}\right)}
\end{align}

In order to facilitate the proof of Theorem \ref{th:d2d_ldem_gap}, we consider two cases namely - $(i)$ \textit{low per-device demand} with $\al N \geq L$; and $(ii)$ \textit{high per-device demand} with $\al N \leq L$. We consider the two cases separately.

\n $\bullet$ \textbf{\textit{Case}} $\bm{\mathit{1~ \left(\al N \geq L \right) }}$: For the case of low-per device demands, we divide the available cache storage at each device into the following three regimes, namely $(i)$ \textit{Regime 1}: $N/K \leq M \leq L$; $(ii)$ \textit{Regime 2}: $L \leq M \leq \lo N$; and $(iii)$ \textit{Regime 3}: $\lo N \leq M \leq N$. We consider each regime separately.

\begin{itemize}
\item \textbf{\textit{Regime}} $\bm{\mathit{1~ \left(N/K \leq M \leq L\right):}}$ 

\n For this regime of cache storage, we consider two further sub-cases, i.e., $(i)$ $N<K$ and $(ii)$ $N\geq K$. We next treat each of the sub-cases separately.

\n $-$ \textbf{\textit{Sub-case}} $\bm{\mathit{1~ \left(N<K\right)}}$: For this sub-case, we note that from the minimum storage constraint for D2D-assisted delivery, i.e., $KM\geq N$, the minimum allowable cache storage at each user can be less than unity. Therefore, we divide the available cache storage in this regime into two sub-regimes namely $(i)$ $N/K\leq M \leq \al$	 and $(ii)$ $\al \leq M\leq L$. We these sub-regimes separately as follows. Consider first, the sub-regime i.e., $N/K\leq M \leq \al$. For this sub-regime consider the case when $N=1$. For this case, setting $s=1$ and $\beta=1$, from the lower bound in \eqref{eq:Ld2d_lb}, we have 
\begin{align}
R^*_{\mathsf{d2d,L}}(M) \geq (1 - M),
\end{align}
where we have used the fact that $L=1$ when $N=1$. Again considering the upper bound in \eqref{eq:d2d_ldem_ub}, we have $R_{\mathsf{d2d,L}}\leq 1$. Using the upper and the lower bounds, we have
\begin{align}\label{eq:Ld2d_gap1}
\mathrm{Gap}  = \frac{R_{\mathsf{d2d,L}}}{R^*_{\mathsf{d2d,L}}} \leq \frac{1}{1-M} \leq \frac{1}{1 - \al} = 2.
\end{align}
Next, we consider the case when $N\geq 2$. For this case, setting $s = \lceil \frac{N}{L} \rceil \in \left[1:\lceil \frac{N}{L} \rceil\right]$ and $\beta=1$, from \eqref{eq:Ld2d_lb}, we have
\begin{align}
R^*_{\mathsf{d2d,L}}(M) &\geq \frac{N\left(1 - \ceiling{\frac{N}{L}} \frac{M}{N} \right)}{\ceiling{\frac{N}{L\ceiling{\frac{N}{L}}}} \frac{K-\ceiling{\frac{N}{L}}}{K}} \geq N\left(1 - \left(\frac{N}{L}+1\right)\frac{M}{N} \right)\nonumber\\
											  &= N\left(1 - \left(\frac{1}{L}+\frac{1}{N}\right)M \right) \mygeq{(a)} N\left( 1 - \frac{3}{2}\times\al \right),
\end{align}
where step \textsf{(a)} follows from the fact that $N\geq 2$ and $L\geq 1$. Again, from the upper bound in \eqref{eq:d2d_ldem_ub}, we have $R_{\mathsf{d2d,L}}\leq N$. Using this, we have
\begin{align}\label{eq:Ld2d_gap2}
\mathrm{Gap} = \frac{R_{\mathsf{d2d,L}}}{R^*_{\mathsf{d2d,L}}} \leq \frac{1}{1 - \frac{3}{2}\times\al} = 4.
\end{align}

We next consider the sub-regime $\al \leq M \leq L$.
In this regime, setting $s = \flooring{\al \frac{N}{M}} \in \left[1: K\right]$\footnote{The regime of $s$ can be verified as follows. Using the lower bound $\al \leq M$, we have $\al N/M \leq N < K$. Again using the upper bound $M\leq L$, we have $\al N/M \geq \al N/L \geq 1$.} and $\beta=1$, from the lower bound in \eqref{eq:Ld2d_lb}, we have
\begin{align}
R^*_{\mathsf{d2d,L}} &\geq \frac{N\left(1 -  \flooring{\al\frac{N}{M}}\frac{M}{N}\right)}{\ceiling{\frac{N}{L\flooring{\al \frac{N}{M}}}}\frac{K - \flooring{\al\frac{N}{M}}}{K}}\geq \frac{N\left(1 - \al \right)}{\ceiling{\frac{N/L}{\flooring{\al\frac{N}{L}}}}} \mygeq{(a)} \frac{N(1 - \al)}{3},
\end{align}
where step \textsf{(a)} follows from the fact that for any $N/L \geq 2$, we have $\frac{N/L}{\flooring{\al(N/L)}}\leq 3$. Again from the upper bound in \eqref{eq:d2d_ldem_ub}, we have $R_{\mathsf{d2d,L}}(M)\leq N$. Using the upper and lower bounds, we have
\begin{align}\label{eq:Ld2d_gap3}
\mathrm{Gap} = \frac{R_{\mathsf{d2d,L}}(M)}{R_{\mathsf{d2d,L}}^*(M)} \leq \frac{3}{1 - \al} = 6.
\end{align}

\n $-$ \textbf{\textit{Sub-case}} $\bm{\mathit{2~ \left(N\geq K\right)}}$: For this sub-case, we note that from the minimum storage constraint for D2D-assisted delivery, i.e., $KM\geq N$, we have $M\geq 1$. Therefore, we consider the following regime of available cache storage $\al \leq N/K \leq M \leq L$. In this regime, setting $s = \flooring{\al \frac{N}{M}} \in \left[1: K\right]$\footnote{The regime of $s$ is validated as follows. Using the lower bound $M\geq N/K$, we have $\al N/M \leq \al K \leq K$. Again using the upper bound $M\leq L$, we have $\al N/M \geq \al N/L \geq 1$.} and $\beta=1$, from the lower bound in \eqref{eq:Ld2d_lb}, we have
\begin{align}
R^*_{\mathsf{d2d,L}} &\geq \frac{N\left(1 -  \flooring{\al\frac{N}{M}}\frac{M}{N}\right)}{\ceiling{\frac{N}{L\flooring{\al \frac{N}{M}}}}\frac{K - \flooring{\al\frac{N}{M}}}{K}}\geq \frac{N\left(1 - \al \right)}{\ceiling{\frac{N/L}{\flooring{\al\frac{N}{L}}}}} \mygeq{(a)} \frac{N(1 - \al)}{3},
\end{align}
where step \textsf{(a)} again follows from the fact that for any $N/L \geq 2$, we have $\frac{N/L}{\flooring{\al(N/L)}}\leq 3$. Again from the upper bound in \eqref{eq:d2d_ldem_ub}, we have $R_{\mathsf{d2d,L}}(M)\leq N$. Using the upper and lower bounds, we have
\begin{align}\label{eq:Ld2d_gap4}
\mathrm{Gap} = \frac{R_{\mathsf{d2d,L}}(M)}{R_{\mathsf{d2d,L}}^*(M)} \leq \frac{3}{1 - \al} = 6.
\end{align}

\item \textbf{\textit{Regime}} $\bm{\mathit{2~ \left(L \leq M \leq \lo N\right):}}$

\n For this regime, setting $s = \left\lfloor \mo \frac{N}{M}\right\rfloor \in [1:K]$\footnote{The regime of $s$ can be validated as follows. Consider first, a lower bound on $\al N/M$. In the given regime, we have $ \al N/M \geq \al/\lo \geq 1$. Next, we consider an upper bound on $\al N/M$. Consider first, the case when $N/L \leq K$. In this case, its easy to note that $\al N/M \leq K$. Next consider the case that $N/L\geq K$. In this case, \textit{Regime} $\mathit{2}$ reduces to $L\leq N/K \leq M \leq \lo N$ due to the minimum storage constraint and hence we have $\al N/M \leq \al K \leq K$. Therefore we have $\flooring{\al N/M} \in [1:K]$.} and $\ell = \left\lceil \frac{\bo N}{Ls}\right\rceil$, from \eqref{eq:Ld2d_lb}, we have
\begin{align}\label{eq:reg2}
&\hspace{-15pt}R^*_{\mathsf{d2d,L}}(M) ~\geq \frac{\left(\frac{2\times\bo}{1 + \bo}\right) - s\frac{M}{N}}{\left(\frac{\bo}{Ls} + \frac{1}{N}\right)\left[\frac{K-s}{K}\right]} \myeq{(a)} \frac{\left(\frac{2\times\bo}{1 + \bo}\right) - \left\lfloor \mo \frac{N}{M}\right\rfloor\frac{M}{N} }{\frac{\bo}{L\left\lfloor \mo \frac{N}{M}\right\rfloor} + \frac{1}{N}} \mygeq{(b)} \frac{\left(\frac{2\times\bo}{1 + \bo}\right) -  \mo \frac{N}{M}\frac{M}{N}}{\frac{\bo}{L\left(\mo \frac{N}{M} - 1\right)} + \frac{1}{N}}\nonumber\\
												&\hspace{-15pt} \geq \frac{\frac{LN}{M} \left(\mo - \frac{M}{N} \right)\left\{\left(\frac{2\times\bo}{1 + \bo}\right) -  \mo \right\}}{\bo + \mo\left(\frac{L}{M} - \frac{L}{N}\right)}  ~\mygeq{(c)} \frac{\frac{LN}{M} \left(\mo - \lo \right)\left\{\left(\frac{2\times\bo}{1 + \bo}\right) -  \mo \right\}}{\bo + \mo} \geq \frac{LN}{10M},
\end{align}
where step \textsf{(a)} follows due to the fact that $(K-s)/K \leq 1$; step \textsf{(b)} follows by using $\lfloor \mo N/M \rfloor \leq \mo N/M $ in the numerator and $\lfloor \mo N/M\rfloor \leq \mo N/M-1$ in the denominator; and step \textsf{(c)} follows by using $M/N\leq \lo$ in the numerator and $M\geq L$ in the denominator. Again, considering the upper bound in \eqref{eq:d2d_ldem_ub}, we have
\begin{align}
R_{\mathsf{d2d,L}}(M)\leq  \frac{LN}{M}\left(1 - \frac{M}{N}\right)\leq \frac{LN}{M}.
\end{align}
Therefore for \textit{Regime $\mathit{2}$}, we have
\begin{align}\label{eq:Ld2d_gap5}
\mathrm{Gap} = \frac{R_{\mathsf{d2d,L}}(M)}{R^*_{\mathsf{d2d,L}}(M)} \leq 10.
\end{align}

\item \textbf{\textit{Regime}} $\bm{\mathit{3~ \left(\lo N\leq M \leq N\right):}}$

\n In this regime, setting $s=1$ and $\beta = 1$ in \eqref{eq:Ld2d_lb}, we have
\begin{align}
R^*_{\mathsf{d2d,L}}(M) & \geq \frac{N\left(1 - \frac{M}{N}\right)}{\frac{N}{L}+1} \geq \frac{\left(1 - \frac{M}{N}\right)}{\frac{1}{L}+\frac{1}{N}}.
\end{align}
Again, considering the upper bound in \eqref{eq:d2d_ldem_ub}, we have
\begin{align}
R_{\mathsf{d2d,L}}(M)\leq \frac{LN}{M}\left(1 - \frac{M}{N}\right).
\end{align}
Therefore for \textit{Regime 3}, we have
\begin{align}\label{eq:Ld2d_gap6}
\mathrm{Gap} = \frac{R_{\mathsf{d2d,L}}(M)}{R^*_{\mathsf{d2d,L}}(M)} \leq \frac{LN}{M}\left(\frac{1}{L}+\frac{1}{N}\right) \myleq{(a)}  \frac{LN}{M} \times \frac{2}{L} \leq \frac{2N}{M} \leq \frac{2}{\lo} \leq 10,
\end{align}
where step \textsf{(a)} follows from the fact that $L\leq N$.
\end{itemize}

\renewcommand{\con}{C} 

\n $\bullet$ \textbf{\textit{Case}} $\bm{\mathit{2~ \left(\al N \leq L \right) }}$:
For the case of high per-device demands,  we divide the available cache storage at each device into the following two regimes, namely $(i)$ \textit{Regime} $\mathit{1}$: $N/K \leq M \leq N/3$; and $(ii)$ \textit{Regime} $\mathit{2}$: $N/3 \leq M \leq N$. We next consider each regime separately.

\begin{itemize}
\item \textbf{\textit{Regime}} $\bm{\mathit{1 \left(N/K \leq M \leq N/3\right):}}$

\n For this regime, setting $s = 1$ and $\ell = \left\lceil \frac{\bt N}{Ls}\right\rceil$, from \eqref{eq:Ld2d_lb}, we have
\begin{align}
R^*_{\mathsf{d2d,L}}(M) &\geq \frac{N\left(\left(\frac{2\times\bt}{1 + \bt}\right) - \frac{M}{N}\right)}{\ceiling{\frac{\bt N}{L}}\left(\frac{K-1}{K}\right)} 				\geq 							\frac{N\left(\left(\frac{2\times\bt}{1 + \bt}\right) - \frac{M}{N}\right)}{\frac{\bt N}{L} + 1} \mygeq{(a)} \frac{N\left(\left(\frac{2\times\bt}{1 + \bt}\right) - \lt \right)}{2}, 
\end{align}
where step \textsf{(a)} follows by using the lower bound $L\geq 0.5N$. Again, considering the upper bound in \eqref{eq:d2d_ldem_ub}, we have $R_{\mathsf{d2d,L}}(M)\leq N$. Using the upper and lower bounds, we have
\begin{align}\label{eq:Ld2d_gap7}
\mathrm{Gap} =  \frac{R_{\mathsf{d2d,L}}(M)}{R^*_{\mathsf{d2d,L}}(M)} \leq  \frac{2}{\left(\frac{2\times\bt}{1 + \bt}\right) - \lt} \leq 6.
\end{align}

\item \textbf{\textit{Regime}} $\bm{\mathit{2 \left( N/3 \leq M \leq N\right):}}$

\n In this regime, setting $s=1$ and $\beta = 1$ in \eqref{eq:Ld2d_lb}, we have
\begin{align}
R^*_{\mathsf{d2d,L}}(M) & \geq \frac{N\left(1 - \frac{M}{N}\right)}{\frac{N}{L}+1} \geq \frac{\left(1 - \frac{M}{N}\right)}{\frac{1}{L}+\frac{1}{N}}.
\end{align}
From \eqref{eq:d2d_ldem_ub}, we have
\begin{align}
R_{\mathsf{d2d,L}}(M)\leq \frac{LN}{M}\left(1 - \frac{M}{N}\right).
\end{align}
Therefore for \textit{Regime 3}, we have
\begin{align}\label{eq:Ld2d_gap8}
\mathrm{Gap} = \frac{R_{\mathsf{d2d,L}}(M)}{R^*_{\mathsf{d2d,L}}(M)} \leq \frac{LN}{M}\left(\frac{1}{L}+\frac{1}{N}\right) \leq  \frac{LN}{M} \times \frac{2}{L} \leq \frac{2N}{M} \leq \frac{2}{1/3} = 6
\end{align}
\end{itemize}
Finally, combining \eqref{eq:Ld2d_gap1}, \eqref{eq:Ld2d_gap2}, \eqref{eq:Ld2d_gap3}, \eqref{eq:Ld2d_gap4}, \eqref{eq:Ld2d_gap5}, \eqref{eq:Ld2d_gap6}, \eqref{eq:Ld2d_gap7} and \eqref{eq:Ld2d_gap8}, completes the proof of Theorem \ref{th:d2d_ldem_gap}. \qed

\section{Proof of Theorem \ref{th:gapd2d}}\label{ap:gapd2d}

From Corollary \ref{cor:convd2d}, considering the lower bound on the optimal rate $R_{\mathsf{d2d}}^*(M)$, we set $\ell = \left\lceil \frac{\beta N}{s}\right\rceil \in \{1,2,\ldots,\left\lceil \frac{N}{s}\right\rceil \}$ with $0 <\beta \leq 1$. Under this setting, we can again use the upper bound on $\left(\frac{\mu}{\mu + s} \right)$ from \eqref{eq:boundmu} from Appendix \ref{ap:gapfund}. Substituting \eqref{eq:boundmu} into (\ref{eq:convd2d}), we have the following form on the lower bound
\begin{align}\label{eq:convlbd2d}
R_{\mathsf{d2d}}^*(M) &\geq \frac{N - sM - \frac{1}{1 + \beta}\left(N - \left\lceil\frac{\beta N}{s}\right\rceil s\right)^+ }{\left\lceil\frac{\beta N}{s}\right\rceil \left(\frac{K-s}{K}\right)} \geq \frac{N\left(\frac{2\beta}{1 + \beta} - s\frac{M}{N} \right)}{\left\lceil\frac{\beta N}{s}\right\rceil \left(\frac{K-s}{K}\right)}.
\end{align}

\n In order to facilitate the proof of Theorem \ref{th:gapd2d}, we consider the three cases, namely $(i)$ the gap at $M=N/K$; $(ii)$ the case when $N<K$; and $(iii)$ the case when $N\geq K$. We next consider each of these cases separately.

\n $\bullet$ \textbf{\textit{Case}} $\bm{\mathit{1}}$ $\left(\textbf{\textit{Gap at}}~ \bm{\mathit{M = N/K}}\right)$: In this case, setting $s = N$ and $\beta  = 1$ in (\ref{eq:convlbd2d}), we have
\begin{align}
R_{\mathsf{d2d}}^*(M) &\geq \frac{N\left(1 - M\right)}{\left(\frac{K-N}{K}\right)} = \frac{KN\left(1 - \frac{N}{K}\right)}{K-N} = N.
\end{align}
Again, from \cite[Theorem 1]{fund_ji}, we have
\begin{align}
R_{\mathsf{d2d}}(M) \leq  \frac{N}{M}\left(1 - \frac{M}{N}\right)\leq N,
\end{align}
where the last inequality stems from the fact that for $N\geq K$, $M\geq 1$. Thus the gap is given by
\begin{align}\label{eq:convd2dgap1}
\mathrm{Gap} = \frac{R_{\mathsf{d2d}}(M)}{R_{\mathsf{d2d}}^*(M)} \leq  1.
\end{align}

\n 
\n $\bullet$ \textbf{\textit{Case}} $\bm{\mathit{2 \left(N < K \right)}}:$ For $N < K$ we divide the cache memory into $3$ distinct regimes: \textit{Regime 1:} ~ $N/K < M \leq 1$; {\textit{Regime 2:}}~$1 < M \leq 0.1250N$; and {\textit{Regime 3:}} ~ $0.1250N < M \leq N$. We consider each of the regimes separately.

\begin{itemize}
\item \textbf{\textit{Regime}}~$\bm{\mathit{1 \left(N/K < M \leq 1\right):}}$ 

\n For this regime, we consider two sub-regimes, namely $(i)$ $N/K<M\leq 2/3$; and $(ii)$ $2/3 < M \leq 1$. We consider each sub-regime separately. 

\n $-$ \textbf{\textit{Sub-regime}} $\bm{\mathbf{1 \left(N/K < M \leq 2/3\right):}}$  From (\ref{eq:convlbd2d}), setting $s = N$ and $\beta  = 1$, we have
\begin{align}
R_{\mathsf{d2d}}^*(M) &\geq \frac{N\left(1 - M\right)}{\left(\frac{K-N}{K}\right)} \geq N(1-M) \geq N\left(1 - \frac{2}{3}\right) = \frac{N}{3}
\end{align}
Again from \cite[Theorem 1]{fund_ji}, we have
\begin{align}
R_{\mathsf{d2d}} \triangleq \min\left\{\frac{N}{M}\left(1 - \frac{M}{N}\right), N \right\} \leq N
\end{align}
Thus in this sub-regime, the gap between the upper and lower bounds is given by
\begin{align}\label{eq:convd2dgap2}
\mathrm{Gap} = \frac{R_{\mathsf{d2d}}(M)}{R_{\mathsf{d2d}}^*(M)} \leq \frac{N}{N/3} \leq 3.
\end{align}

\n $-$ \textbf{\textit{Sub-regime}} $\bm{\mathbf{2 \left(2/3 < M \leq 1\right):}}$ Consider first, the case when $N=1$. Setting $s=1, \beta = 1$ in (\ref{eq:convlbd2d}), we have
\begin{align}
R_{\mathsf{d2d}}^*(M) \geq \frac{1 - sM}{\lceil 1/s \rceil} \geq 1 - M.
\end{align}
From \cite[Theorem 1]{fund_ji}, setting $N=1$, we have
\begin{align}
R_{\mathsf{d2d}}(M) \leq \frac{1}{M}(1- M).
\end{align}
Thus the gap in this case is given by
\begin{align}\label{eq:convd2dgap3a}
\mathrm{Gap} = \frac{R_{\mathsf{d2d}}(M)}{R_{\mathsf{d2d}}^*(M)} \leq \frac{1}{M} \leq \frac{3}{2}.
\end{align}

\n For the case of $N\geq 2$, setting $s = \left\lfloor \frac{2N}{3M}\right\rfloor$ and $\beta =1$ in (\ref{eq:convlbd2d}), we have
\begin{align}
R_{\mathsf{d2d}}^*(M) &\geq \frac{N\left(1 - \left\lfloor \frac{2N}{3M}\right\rfloor\frac{M}{N} \right)}{\left\lceil\frac{N}{\left\lfloor \frac{2N}{3M}\right\rfloor}\right\rceil \left(\frac{K-s}{K}\right)}\geq \frac{N\left(1 - \left\lfloor \frac{2N}{3M}\right\rfloor\frac{M}{N} \right)}{\left\lceil\frac{N}{\left\lfloor \frac{2}{3} N\right\rfloor}\right\rceil } \geq \frac{N\left(1 -  \frac{2}{3}\right)}{\left\lceil 2\right\rceil} \geq \frac{N}{6}, 
\end{align}
where, we have used the fact that $\frac{N}{\left\lfloor 2N/3\right\rfloor} \leq 2, ~~\forall N\geq 2$. Again from \cite[Theorem 1]{fund_ji}, we have
\begin{align}
R_{\mathsf{d2d}} \triangleq \min\left\{\frac{N}{M}\left(1 - \frac{M}{N}\right), N \right\} \leq N
\end{align}
Thus in this sub-regime, the gap between the upper and lower bounds is given by
\begin{align}\label{eq:convd2dgap3b}
\mathrm{Gap} = \frac{R_{\mathsf{d2d}}(M)}{R_{\mathsf{d2d}}^*(M)} \leq \frac{N}{N/6} \leq 6.
\end{align}

\n Thus, in general, in this regime, we can upper bound the Gap by the constant $6$.\\

\item \textbf{\textit{Regime}} $\bm{\mathit{2 \left(1 < M \leq 0.1250N\right):}} $

\n For this regime, setting $s = \lfloor 0.5 N/M \rfloor$, $\beta = 1$ in (\ref{eq:convlbd2d}), we have
\begin{align}
R_{\mathsf{d2d}}^*(M) &\geq \frac{N\left(\frac{2\beta}{1+\beta}  - s\frac{M}{N}\right)}{\frac{\beta N}{s} + 1}\times\frac{K}{K - s} \geq \frac{N\left(1  - \lfloor 0.5 N/M \rfloor\frac{M}{N}\right)}{\frac{N}{\lfloor 0.5 N/M \rfloor} + 1}\nonumber\\
					&\geq \frac{N\left(1  - 0.5\right)}{\frac{N}{0.5 N/M - 1} + 1} \geq \frac{0.5N\left(0.5 N/M - 1\right)}{N + \frac{0.5N}{M} - \frac{1}{N}} \geq \frac{0.5\left(0.5 N/M - 1\right)}{1 + \frac{0.5}{M}} \geq \frac{N}{M}\frac{0.5\left(0.5 - \frac{M}{N}\right)}{1 + 0.5}	\nonumber\\				
					&\geq \frac{N}{M}\frac{0.5\left(0.5 - 0.1250\right)}{1 + 0.5} \geq \frac{N}{8M}
\end{align}
Again, from \cite[Theorem 1]{fund_ji}, we have
\begin{align}
R_{\mathsf{d2d}}(M) \leq \frac{N}{M} - 1 \leq \frac{N}{M}.
\end{align}
Thus in this regime, the gap between the upper and lower bounds is given by
\begin{align}
\mathrm{Gap} = \frac{R_{\mathsf{d2d}}(M)}{R_{\mathsf{d2d}}^*(M)} \leq 8.
\end{align}

\item \textbf{\textit{Regime}} $\bm{\mathit{3 \left(0.1250N < M \leq N\right):}}$

\n For this regime, setting $s=1$ and $\beta=1$ in (\ref{eq:convlbd2d}), we have
\begin{align}
R_{\mathsf{d2d}}^*(M)\geq \frac{N - M}{N} = \left(1 - \frac{M}{N}\right)
\end{align}
Again, from \cite[Theorem 1]{fund_ji}, we have
\begin{align}
R_{\mathsf{d2d}}(M) \leq \frac{N}{M}\left(1 - \frac{M}{N}\right) \leq \frac{1}{0.1250}\left(1 - \frac{M}{N}\right).
\end{align}
Thus in this regime, the gap between the upper and lower bounds is given by
\begin{align}
\mathrm{Gap} = \frac{R_{\mathsf{d2d}}(M)}{R_{\mathsf{d2d}}^*(M)} \leq  \frac{1}{0.1250} = 8.
\end{align}
\end{itemize}
Thus, for the case of $K>N$, the gap between the upper and lower bounds is $1$ at $M = N/K$, at most $6$ for $N/K \leq M \leq 1$ and at most $8$ for all $1< M \leq N$.

\n $\bullet$ \textbf{\textit{Case}} $\bm{\mathit{3 \left(N\geq K\right):}}$ For this case, we consider two main sub-cases, namely $(i)$ $ K\leq 8$; and $(ii)$ $K\geq 9$. We treat each case separately.

\n $\bullet$ \textbf{\textit{Sub-case}} ~$\bm{\mathit{1 \left(K\leq 8\right):}}$ For this case, from \cite[Theorem 1]{fund_ji}, we have
\begin{align}\label{eq:minnk1}
R_{\mathsf{d2d}}(M)\leq \frac{N}{M}\left(1 - \frac{M}{N}\right) \leq K\left(1 - \frac{M}{N}\right),
\end{align}
where the last inequality is derived from the minimum storage constraint $KM>N$. Again, from (\ref{eq:convlbd2d}), setting $s=1, \beta = 1$, we have:
\begin{align}
R_{\mathsf{d2d}}^*(M) \geq \frac{N\left(1 - \frac{M}{N}\right)}{\left\lceil N\right\rceil} \geq \left(1 - \frac{M}{N}\right).
\end{align}
Thus the gap between the lower and upper bound is given by:
\begin{align}\label{eq:convd2dgap4}
\mathrm{Gap} = \frac{R_{\mathsf{d2d}}(M)}{R_{\mathsf{d2d}}^*(M)} \leq K \leq 8.
\end{align}

\n $\bullet$ \textbf{\textit{Sub-case}} ~$\bm{\mathit{2 \left(K\geq 8\right):}}$

\n For this case, we divide the cache memory into $3$ distinct regimes namely, $(i)$ Regime $1$: $1 \leq N/K \leq M \leq 1.15 N/K$; $(ii)$ Regime $2$: $ 1.15 N/K < M \leq 0.1250N$; and $(iii)$ Regime $3$: $ 0.1250N < M \leq N $. We next consider each regime separately.
 
\begin{itemize}
\item \textbf{\textit{Regime}} ~$\bm{\mathit{1 \left(1 \leq N/K \leq M \leq 1.15 N/K\right):}}$

\n For this regime, setting $s = \lfloor 0.4361 K \rfloor$ and $\beta = 0.7398$ in (\ref{eq:convlbd2d}) and using the fact that $K \geq 9$, we have
\begin{align}
R_{\mathsf{d2d}}^*(M) &\geq \frac{N\left(\frac{2\beta}{1+\beta}  - s\frac{M}{N}\right)}{\frac{\beta N}{s} + 1}\times\frac{K}{K - s} \geq \frac{N\left(\frac{2\times 0.7398}{1+ 0.7398}  - \lfloor 0.4361 K \rfloor\frac{M}{N}\right)}{\frac{0.7398 N}{\lfloor 0.4361 K \rfloor} + 1}\times\frac{1}{1 - \frac{\lfloor 0.4361 K \rfloor}{K}}\nonumber\\
					&\geq \frac{N\left(\frac{2\times 0.7398}{1+ 0.7398}  - 0.4361\frac{KM}{N}\right)}{\frac{0.7398 N}{0.4361 K -1} + 1}\times\frac{1}{1 - \frac{0.4361 K -1}{K}}\nonumber\\
					&\geq \frac{N \left(0.4361 K -1\right)\left[\frac{2\times 0.7398}{1+ 0.7398}  - 0.4361\frac{KM}{N}\right]}{0.7398 N + 0.4361 K -1}\times\frac{1}{1 + \frac{1}{K} - 0.4361}\nonumber\\
					&\geq K\frac{\left(0.4361 - \frac{1}{9}\right)\left[\frac{2\times 0.7398}{1+ 0.7398}  - 0.4361\frac{KM}{N}\right]}{0.7398 + 0.4361\frac{K}{N} -\frac{1}{N}}\times\frac{1}{1 + \frac{1}{9} - 0.4361}\nonumber\\
					&\geq K\frac{\left(0.4361 - \frac{1}{9}\right)\left[\frac{2\times 0.7398}{1+ 0.7398}  - 0.4361\times 1.15\right]}{0.7398 + 0.4361}\times\frac{1}{1 + \frac{1}{9} - 0.4361} \geq \frac{K}{7}.
\end{align}
Again, from (\ref{eq:minnk1}), we have
\begin{align}
R_{\mathsf{d2d}}(M) \leq K.
\end{align}
Thus in this regime, the gap between the upper and lower bounds is given by
\begin{align}
\mathrm{Gap} = \frac{R_{\mathsf{d2d}}(M)}{R_{\mathsf{d2d}}^*(M)} \leq \frac{K}{K/7} = 7.
\end{align}

\item \textbf{\textit{Regime}} ~$\bm{\mathit{2 \left(1.15 N/K < M \leq 0.1250N \right):}}$

\n For this regime, setting $s = \left\lfloor 0.4470\frac{N}{M} \right\rfloor$ and $\beta = 0.8995$ in (\ref{eq:convlbd2d}), we have
\begin{align}
R_{\mathsf{d2d}}^*(M) &\geq \frac{N\left(\frac{2\beta}{1+\beta}  - s\frac{M}{N}\right)}{\frac{\beta N}{s} + 1}\times\frac{K}{K - s} \geq \frac{N\left(\frac{2\times 0.8995}{1+ 0.8995}  - \left\lfloor 0.4470\frac{N}{M} \right\rfloor \frac{M}{N}\right)}{\frac{0.8995 N}{\left\lfloor 0.4470\frac{N}{M} \right\rfloor} + 1} \geq \frac{N\left(\frac{2\times 0.8995}{1+ 0.8995}  - 0.4470\right)}{\frac{0.8995 N}{0.4470\frac{N}{M} - 1} + 1} \nonumber\\
					&\geq \frac{\left(0.4470\frac{N}{M} - 1\right)\left[\frac{2\times 0.8995}{1+ 0.8995}  - 0.4470\right]}{{0.8995} + \frac{0.4470}{M} - \frac{1}{N}} \geq \frac{N}{M}\frac{\left(0.4470 - \frac{M}{N}\right)\left[\frac{2\times 0.8995}{1+ 0.8995}  - 0.4470\right]}{{0.8995} + \frac{0.4470 K}{1.15N}}\nonumber\\
					&\geq \frac{N}{M}\frac{\left(0.4470 - 0.1250\right)\left[\frac{2\times 0.8995}{1+ 0.8995}  - 0.4470\right]}{{0.8995} + \frac{0.4470}{1.15}}\geq \frac{N}{8M}
\end{align}
Again, from \cite[Theorem 1]{fund_ji}, we have
\begin{align}
R_{\mathsf{d2d}}(M) \leq \frac{N}{M} - 1 \leq \frac{N}{M}.
\end{align}
Thus in this regime, the gap between the upper and lower bounds is given by
\begin{align}
\mathrm{Gap} = \frac{R_{\mathsf{d2d}}(M)}{R_{\mathsf{d2d}}^*(M)} \leq 8.
\end{align}

\item \textbf{\textit{Regime}} ~$\bm{\mathit{3 \left(0.1250N < M \leq N \right):}}$

\n For this regime, setting $s=1$ and $\beta=1$ in (\ref{eq:convlbd2d}), we have
\begin{align}
R_{\mathsf{d2d}}^*(M)\geq \frac{N - M}{N} = \left(1 - \frac{M}{N}\right)
\end{align}
Again, from \cite[Theorem 1]{fund_ji}, we have
\begin{align}
R_{\mathsf{d2d}}(M) \leq \frac{N}{M}\left(1 - \frac{M}{N}\right) \leq \frac{1}{0.1250}\left(1 - \frac{M}{N}\right).
\end{align}
Thus in this regime, the gap between the upper and lower bounds is given by
\begin{align}
\mathrm{Gap} = \frac{R_{\mathsf{d2d}}(M)}{R_{\mathsf{d2d}}^*(M)} \leq  \frac{1}{0.1250} = 8.
\end{align}
\end{itemize}
Thus, for the case of $N\geq K$, the gap between the upper and lower bounds is at most $8$ for all $N/K \leq M \leq N$. Combining the three cases completes the proof of Theorem \ref{th:gapd2d}. \qed


\bibliographystyle{IEEEtran}
\bibliography{references}

\end{document}